\newtheorem{theorem}{Theorem}
\newtheorem{definition}{Definition}
\newtheorem{lemma}{Lemma}
\definecolor{lblue}{RGB}{75, 145, 209}
\definecolor{lyellow}{RGB}{191,144,0}
\definecolor{lred}{RGB}{241,89,96}
\definecolor{lgreen}{RGB}{112,173,71}
\newdimen\Lmargin
\newdimen\Rmargin
\begin{document}
\title{Compressive Imaging Reconstruction via Tensor Decomposed Multi-Resolution Grid Encoding}
\author{Zhenyu Jin, Yisi Luo, Xile Zhao, Deyu Meng
\thanks{Zhenyu Jin, Yisi Luo, and Deyu Meng are with the School of Mathematics and Statistics, Xi'an Jiaotong University.} % <-this % stops a space
 \thanks{Xile Zhao is with the School of Mathematical Science, University of Electronic Science and Technology of China.}}
\maketitle
\begin{abstract}
Compressive imaging (CI) reconstruction, such as snapshot compressive imaging (SCI) and compressive sensing magnetic resonance imaging (MRI), aims to recover high-dimensional images from low-dimensional compressed measurements. This process critically relies on learning an accurate representation of the underlying high-dimensional image. However, existing unsupervised representations may struggle to achieve a desired balance between representation ability and efficiency. To overcome this limitation, we propose Tensor Decomposed multi-resolution Grid encoding (GridTD), an unsupervised continuous representation framework for CI reconstruction. GridTD optimizes a lightweight neural network and the input tensor decomposition model whose parameters are learned via multi-resolution hash grid encoding. It inherently enjoys the hierarchical modeling ability
of multi-resolution grid encoding and the compactness of tensor decomposition, enabling effective and efficient reconstruction of high-dimensional images. Theoretical analyses for the algorithm's Lipschitz property, generalization error bound, and fixed-point convergence reveal the intrinsic superiority of GridTD as compared with existing continuous representation models. Extensive experiments across diverse CI tasks, including video SCI, spectral SCI, and compressive dynamic MRI reconstruction, consistently demonstrate the superiority of GridTD over existing methods, positioning GridTD as a versatile and state-of-the-art CI reconstruction method.
\end{abstract}
\begin{IEEEkeywords}
Snapshot compressive imaging, MRI reconstruction, multi-resolution grid encoding, tensor decomposition. 
\end{IEEEkeywords}
\section{introduction}
Compressive imaging (CI) reconstruction, aiming to recover high-dimensional images from the corresponding low-dimensional compressed measurements, serves as an important paradigm for numerous imaging applications, such as medical diagnostics \cite{gamper2008compressed}, remote sensing \cite{cao2016computational}, and computational photography \cite{llull2013coded}. Despite its significance, CI reconstruction remains inherently ill-posed due to the irreversible information loss during compression. Traditional methods, such as optimization-based approaches with handcrafted priors (e.g., sparsity \cite{zha2023learning} or total variation \cite{yuan2016generalized}), may struggle to capture the complex underlying structure of the data, especially in high-dimensional settings. Meanwhile, supervised deep learning methods \cite{yuan2021snapshot,qiao2020deep,heckel2024deep} rely on paired training data, which may limit the applicability in scenarios where ground-truth samples are scarce or unavailable. To overcome these limitations, unsupervised learning approaches have garnered increasing attention in the field of CI reconstruction \cite{qayyum2022untrained,alkhouri2025understanding,akccakaya2022unsupervised}. These methods learn a deep representation of the underlying high-dimensional image by solely using the observed measurement in an unsupervised manner, which eliminates the need for paired training datasets by leveraging the intrinsic structure prior of the image. Representative unsupervised CI reconstruction methods focus on training a randomly initialized deep neural network with physical constraints to represent the high-dimensional image. The most classical methods would be the deep image prior (DIP)-based approaches \cite{DIP_IJCV}, which map random codes to the high-quality data through deep convolutional neural networks (CNNs). For instance, Yoo et al. \cite{yoo2021time} proposed a time-dependent DIP for dynamic magnetic resonance imaging (MRI) reconstruction. Zhao et al. \cite{zhao2024untrained} proposed a bagged deep video prior method to enhance the performance of untrained neural networks for snapshot compressive imaging. More DIP-based approaches can be found in relevant studies \cite{miao_tgrs,S2DIP,lrsdn}. %{\color{blue}While these DIP-based methods exhibit strong representation capabilities, they require considerable computational costs for training the unsupervised deep CNN from scratch.}
\par
Recently, the implicit neural representation (INR)\cite{sitzmann2020implicit} has emerged as an alternative unsupervised framework for CI reconstruction, which leverages coordinate-based neural networks to parameterize the image as a continuous function, serving as a more lightweight representation structure. The INRs map spatial coordinates of the image to their corresponding values, learning a continuous and compact representation for CI reconstruction. For example, Sun et al. \cite{coil} proposed the CoIL, an unsupervised coordinate-based framework that learns a mapping from measurement coordinates to responses using INR, enabling high-fidelity measurement generation to enhance various image reconstruction methods for sparse-view computed tomography (CT). Shen et al. \cite{shen2022nerp} presented the NeRP, an INR framework for sparse image reconstruction that leverages a single prior image and measurement physics while generalizing to CT/MRI and detecting subtle tumor changes. Nevertheless, conventional INRs often suffer from insufficient representation abilities due to the spectral bias of neural networks towards low-frequency components\cite{jacot2018neural, tancik2020fourier}, which restricts their ability to capture fine-grained high-frequency details in the image.\par 
To overcome these challenges, recent advancements based on grid encoding models have been studied for continuous representation. As a representative example, the instant neural graphics primitive (InstantNGP) \cite{muller2022instant} introduces a novel structure that integrates multi-resolution grids stored in hash tables with a lightweight neural network to learn a continuous representation of data. The InstantNGP enhances the representation ability for high-frequency details as compared with INRs, making the continuous representation more powerful and accurate for unsupervised compressive imaging reconstruction. For instance, Feng et al. \cite{feng2025spatiotemporal} proposed an InstantNGP-based spatial-temporal method for unsupervised compressive dynamic MRI reconstruction, which holds superior representation abilities.\par 
Nevertheless, grid encoding-based frameworks (e.g., InstantNGP) still face limitations in scalability and efficiency, particularly for large-scale, high-dimensional images. A key issue is their significant parameter and computational overhead. The InstantNGP relies on multi-resolution {\bf high-dimensional grids} \cite{muller2022instant,feng2025spatiotemporal}, causing model parameters and computational complexity to grow dramatically when increasing data dimensionality (see Table \ref{tab:complexity} for example). 
Hence, how to effectively balance the representation capacity and efficiency of grid encoding models remains a critical challenge for high-dimensional CI reconstruction. \par
To overcome this challenge, we propose a novel Tensor Decomposed multi-resolution Grid encoding model (termed GridTD), an unsupervised continuous representation framework for CI reconstruction. GridTD leverages a compact tensor decomposition parameterized by multi-resolution grid encoding, which utilizes {\bf lightweight one-dimensional grids} instead of high-dimensional grids, and the one-dimensional grids can be stored in smaller hash tables. Then, a lightweight neural network is leveraged to reconstruct the high-dimensional image. GridTD enjoys both the hierarchical modeling ability of multi-resolution grid encoding and the compactness of tensor decomposition to achieve a compact and scalable representation, enabling more efficient unsupervised CI reconstruction. Especially, the parameter number of GridTD scales linearly w.r.t. the data dimension, while traditional InstantNGP scales exponentially. Moreover, GridTD enhances the robustness and reconstruction performance by leveraging the structure constraint brought by tensor decomposition. 
 The contributions of this work are summarized as follows:
\begin{itemize}
    \item We propose GridTD, a novel continuous representation framework using tensor decomposed multi-resolution grid encoding, designed for effective and efficient compressive imaging reconstruction for high-dimensional images, including video snapshot compressive imaging (SCI), spectral SCI, and compressive dynamic MRI reconstruction.
    \item We establish rigorous theoretical analyses, including the Lipschitz continuity bound and generalization error bound of the GridTD model, and the fixed-point convergence for the alternating optimization scheme induced by GridTD, to theoretically validate the superiority of our method as compared with existing models.
    \item We further introduce tailored regularization methodologies for GridTD, including the temporal affine adapter and smooth regularization terms to enhance the dynamic modeling capability of GridTD for irregular high-dimensional data such as temporal motions of videos.
    \item Extensive experiments across diverse CI tasks (video SCI, spectral SCI, and dynamic MRI) demonstrate that GridTD outperforms existing unsupervised and continuous representation methods with lower computational costs, positioning GridTD as a versatile and state-of-the-art method for CI reconstruction.
\end{itemize}\par  
The remainder of this paper is organized as follows: Section II reviews some preliminaries and related works. Section III details the GridTD framework, including model structures, theoretical analyses, and optimization strategies. Section IV presents experimental results. Section VI concludes the paper.
\section{Preliminaries and Related Works}
\subsection{Notations}\label{subsec:nota}
In this paper, we use $x, {\bf x}, {\bf X}, \mathcal{X}$ to denote scalar, vector, matrix, and tensor, respectively. The $i$-th element of ${\bf x}$ is denoted by ${\bf x}[i]$, and it is similar for matrices and tensors, e.g., ${\cal X}[i_1,\cdots,i_n]$. The $\lfloor\cdot\rfloor$ denotes the rounded down operation. 
Vector operations include the concatenation $\oplus$, the outer product $\otimes$, and the Hadamard (element-wise) product $\odot$. For example, given two vectors, the concatenation and outer product are defined as
$$\small
\begin{pmatrix} {\bf x}_1\\ {\bf x}_2 \end{pmatrix} \oplus \begin{pmatrix}{\bf y}_1 \\ {\bf y}_2 \end{pmatrix}  =  \begin{pmatrix} {\bf x}_1\\ {\bf x}_2 \\{\bf y}_1\\{\bf y}_2\end{pmatrix},\;\begin{pmatrix} {\bf x}_1\\ {\bf x}_2 \end{pmatrix} \otimes \begin{pmatrix} {\bf y}_1 \\ {\bf y}_2 \end{pmatrix}  =  \begin{pmatrix} {\bf x}_1 {\bf y}_1,\;{\bf x}_1 {\bf y}_2 \\ {\bf x}_2 {\bf y}_1,\;{\bf x}_2 {\bf y}_2\end{pmatrix}.
$$ 
Note that the outer product $\otimes$ can be analogously extended to multiple vectors and return a high-order tensor. The canonical polyadic (CP) tensor decomposition \cite{SIAM_review} for a $D$-th order tensor ${\cal X}$ is formulated as
\begin{equation}\small\label{CP}
{\cal X}=\left(\sum_{r=1}^R \bigotimes_{d=1}^D {\bf h}_{r,d}\right)\in{\mathbb R}^{n_1\times \cdots \times n_D},
\end{equation}
where ${\bf h}_{r,d}\in{\mathbb R}^{n_d}$ is the factor vector. The $\bigotimes_{d=1}^{D}$ denotes the iterated outer product for $D$ vectors, which returns a $D$-th order tensor. Throughout the manuscript, the notations $D$ and $d=1,2,\cdots,D$ are used for dimension indices, and the notations $n_d$ and $i=1,2,\cdots,n_d$ are used for tensor indices.
\subsection{Snapshot Compressive Imaging}
The snapshot compressive imaging (termed SCI) is an important imaging technique that records high-dimensional information onto a 2D sensor in a single exposure by leveraging creative optical encoding schemes \cite{NC}.  The resulting compressed measurement is then computationally decoded to recover the full spatial-temporal or spectral data cube. Based on the type of data captured, SCI can be broadly classified into two groups: video SCI and spectral SCI. The video SCI captures high-speed video scenes by compressing multiple frames into a single snapshot measurement. It then leverages compressive sensing and optimization algorithms to reconstruct the original video sequence from the encoded data \cite{yuan2021plug}. The degradation model of video SCI is formulated as
\begin{equation}\small
\mathbf{Y} = \sum_{t=1}^{n_3} \mathcal{M}_t \odot \mathcal{X}_t + \mathbf{Z},\label{eq:sci_model_alt}
\end{equation}
where the original video sequence $\mathcal{X} \in \mathbb{R}^{n_1 \times n_2 \times n_3}$ is encoded through a set of modulation masks $\mathcal{M} \in \mathbb{R}^{n_1 \times n_2 \times n_3}$, with $\mathcal{M}_t$ denoting the $t$-th mask corresponding to the $t$-th frame $\mathcal{X}_t$. The measurement $\mathbf{Y} \in \mathbb{R}^{n_1 \times n_2}$ is obtained by summing the element-wise modulated frames, corrupted by additive noise $\mathbf{Z} \in \mathbb{R}^{n_1 \times n_2}$. Similarly, the spectral SCI encodes multiple spectral bands of a hyperspectral image (HSI) using a coded aperture, enabling high-dimensional data reconstruction from a 2D snapshot while preserving spectral fidelity \cite{lrsdn}. The degradation model of the spectral SCI is formulated as
\begin{equation}\small
\label{eq:spectral_sci_model}
\begin{split}
\mathbf{Y} = \sum_{t=1}^{n_3}{\rm shift}(\mathcal{M}_t \odot \mathcal{X}_t) + \mathbf{Z},
\end{split}
\end{equation}
where $\mathcal{X} \in \mathbb{R}^{n_1 \times n_2  \times n_3}$ denotes the original HSI with $n_3$ spectral bands, ${\rm shift}(\mathcal{M}_t \odot \mathcal{X}_t)\in\mathbb{R}^{n_1 \times (n_2 + d(n_3 - 1))}$ denotes the $t$-th spatially shifted spectral band defined by ${\rm shift}(\mathcal{M}_t \odot \mathcal{X}_t)[i_1,i_2] = (\mathcal{M}_t \odot \mathcal{X}_t)[i_1,i_2 + d(t - 1)] $, and $\mathcal{M} \in \mathbb{R}^{n_1 \times n_2 \times n_3}$ represents the corresponding modulation mask. The measurement $\mathbf{Y} \in \mathbb{R}^{n_1 \times (n_2 + d(n_3 - 1))}$ is obtained by summing the shifted bands. Additive noise $\mathbf{Z}\in\mathbb{R}^{n_1\times(n_2 +d(n_3 - 1))}$ is introduced to model acquisition perturbations. The SCI reconstruction aims to reconstruct the high-dimensional image ${\cal X}$ from the measurement ${\bf Y}$. The SCI reconstruction algorithms can be categorized into three main groups: traditional optimization-based methods, supervised learning methods, and unsupervised learning methods. Early traditional approaches rely on handcrafted priors such as TV \cite{yuan2016generalized}, sparsity, and nonlocal similarity \cite{liu2018rank} to model data structure, which hold good interpretability and robustness.\par
With advancements in deep learning, data-driven approaches have gained attention. Supervised learning methods typically employ encoder-decoder architectures to learn priors directly from training data. For instance, Wang et al. \cite{wang2025s} proposed a spatial-spectral transformer with parallel attention architecture and mask-aware learning for video SCI. Cao et al. \cite{cao2024hybrid} proposed the EfficientSCI++, an efficient hybrid CNN-transformer network to achieve efficient reconstruction with lower computational costs for video SCI. Zhang et al. \cite{zhang2024dual} proposed the dual prior unfolding framework, which integrates dual priors and a PCA-inspired focused attention mechanism for spectral SCI. However, supervised methods may face limitations when the original high-dimensional data cannot be acquired or when testing samples deviate from the training distribution. Unsupervised methods address these challenges by recovering the high-dimensional data solely from compressed measurements without paired training examples. The plug-and-play (PnP) framework integrates learned deep priors into optimization pipelines without retraining. For example, Yuan et al. \cite{yuan2021plug} proposed the PnP-ADMM and PnP-GAP, which leverage deep video denoising priors to enable high-quality reconstruction for video SCI. The DIP \cite{DIP_IJCV} takes a different approach by using the network architecture itself as an implicit prior. For instance, Miao et al.\cite{miaotci} proposed the FactorDVP model, which decomposes video data into foreground and background components, and models these components through separated deep video priors for video SCI. Zhao et al. \cite{zhao2024untrained} proposed the bagged deep video prior using untrained neural networks for video SCI. 
Diffusion posterior sampling \cite{chung2023diffusion} is another type of unsupervised method, which modifies the posterior sampling process of diffusion models without requiring network retraining. For instance, Pang et al. \cite{pang2024hir} proposed the HIR-Diff, an unsupervised hyperspectral restoration framework that integrates diffusion posterior sampling with TV prior and low-rank decomposition for HSI reconstruction. Miao et al. \cite{miao2023dds2m} proposed the DDS2M, a self-supervised diffusion model for HSI restoration, which leverages a variational spatio-spectral module to infer posterior distributions during reverse diffusion, enabling training-free adaptation to degraded HSIs.
Tensor decomposition is also considered an unsupervised learning method. Luo et al. \cite{luo2022hlrtf} proposed the HLRTF, a hierarchical low-rank tensor factorization method to capture multi-dimensional image structures for spectral SCI. To our knowledge, we are the first to leverage the continuous representation framework for the SCI problem.
% Compared to these unsupervised SCI methods,
Our continuous GridTD representation leverages a novel tensor decomposition model parameterized by multi-resolution grid encoding, which enjoys stronger representation abilities brought by the multi-resolution grid encoding and favorable computational efficiency from the compact tensor decomposition. %Also, the unsupervised GridTD framework enables easy generalization to different compressive imaging reconstruction tasks by integrating various imaging processes, such as MRI reconstruction, video SCI, and spectral SCI.
\subsection{Compressive Sensing Dynamic MRI}
Dynamic MRI provides high tissue contrast while capturing real-time physiological changes, making it essential for imaging moving organs like the heart and abdomen. However, its clinical use faces a trade-off between spatial and temporal resolution due to scan time limitations. To overcome this, compressive imaging techniques using undersampled $k$-space acquisition have been developed \cite{qin2018convolutional,huang2021deep}. The compressive sensing dynamic MRI reconstruction aims to recover an high-quality, time-resolved MRI $\mathbf{X}$ from the undersampled $k$-space data $\mathbf{Y}_c$ by exploiting spatial-temporal correlations and prior knowledge \cite{yoo2021time}. The acquisition process of the MRI can be modeled as
\begin{equation}\small\begin{small}
\mathbf{Y}_c = \mathbf{F}_u \mathbf{S}_c \mathbf{X}, \quad 1 \leq c \leq C,
\label{eq:mri_model}
\end{small}
\end{equation}
where $\mathbf{X} \in \mathbb{C}^{N^2 \times T}$ denotes a time-resolved MRI sequence consisting of $T$ frames, each with spatial dimensions $N \times N$. The signal $\mathbf{Y}_c \in \mathbb{C}^{NM \times T}$ corresponds to the undersampled $k$-space observation collected by the $c$-th receiver coil, where $M < N$ indicates the number of sampled phase-encoding lines per frame and $C$ is the total number of coils. The matrix $\mathbf{S}_c \in \mathbb{C}^{N^2 \times N^2}$ captures the spatial sensitivity profile of the $c$-th coil and is assumed to be diagonal. The sampling operator $\mathbf{F}_u \in \mathbb{C}^{NM \times N^2}$ represents the non-uniform Fourier transform, implemented via the non-uniform fast Fourier transform (NUFFT) algorithm. Recovering the dynamic MRI $\bf X$ from the undersampled data ${\bf Y}_c$ is challenging. The pioneer work proposed in \cite{feng2014golden} introduces a golden-angle radial sparse parallel MRI reconstruction framework by using temporal frame grouping and time-averaged coil sensitivity estimation with total variation (TV) regularization, inspiring many subsequent works for dynamic MRI reconstruction. \par 
Recent advances in deep learning have transformed the reconstruction pipeline of MRI. For example, Qin et al. \cite{qin2018convolutional} proposed a bidirectional convolutional recurrent neural network combining iterative algorithm principles for dynamic MRI reconstruction. Huang et al. \cite{huang2021deep} presented the deep low-rank plus sparse network, an unfolding method that injects priors such as learnable low-rank and sparse constraints directly into the network layers for MRI reconstruction. Unsupervised MRI reconstruction methods have also emerged in recent years. For instance, Yoo et al. \cite{yoo2021time} proposed an unsupervised deep image prior framework for MRI reconstruction, where an initial input is generated from a predefined manifold, mapped to a latent space through a neural network, and then passed through the DIP for optimization. Implicit neural representation (termed INR) \cite{tancik2020fourier,sitzmann2020implicit} has emerged as another promising unsupervised framework for dynamic MRI reconstruction, owing to its implicit continuous regularization. These methods typically learn continuous mappings from spatial or $k$-space coordinates to signal values. As representative examples, Huang et al. \cite{huang2023neural} proposed a Fourier multilayer perception (MLP)-based INR that maps $k$-space coordinates to complex $k$-space signals for MRI reconstruction. Kunz et al. \cite{kunz2024implicit} introduced a Fourier-MLP mapping from image coordinates to image intensities for MRI reconstruction. To further enhance efficiency and representational capacity, Feng et al. \cite{feng2025spatiotemporal} incorporated the InstantNGP model \cite{muller2022instant} into dynamic MRI reconstruction, significantly accelerating training while improving expressiveness of the network for fine details recovery. As compared with these unsupervised methods, this work proposes a novel tensor decomposed multi-resolution grid encoding model for continuous representation, which enjoys competitive representation abilities and better computational efficiency due to the compactness of tensor decomposition, thus achieving better performances and efficiency for MRI reconstruction.

\section{The Proposed Method}\label{sec:method}
\subsection{Tensor Decomposed Multi-Resolution Grid Encoding}\label{subsec:ten}
\begin{figure*}
    \centering
    \includegraphics[width=1\linewidth]{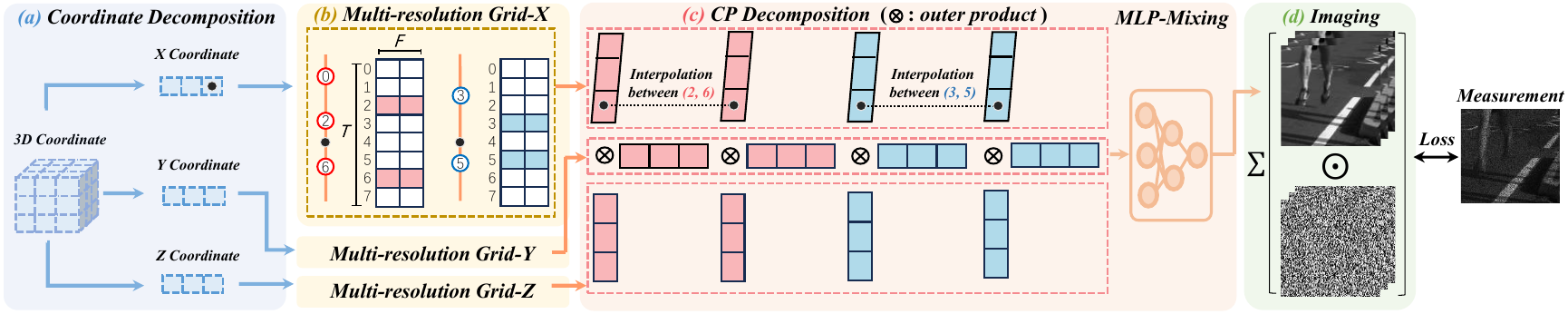}
    \caption{Illustration of the GridTD model for compressive imaging reconstruction. {\color{lblue}(a)} The inputs of the model are decomposed coordinate vectors of a tensor. {\color{lyellow} (b)} The lightweight 1D multi-resolution grid encoding (Eq. \eqref{eq_gridtd}) is employed to generate factor matrices of the CP tensor decomposition through linear interpolation for each coordinate (black dot) using surrounding multi-resolution 1D grids (colored circles in {\color{lyellow} (b)}). {\color{lred}(c)} Given the factor matrices obtained from multi-resolution grids, the CP decomposition is employed to generate the high-dimensional image with a lightweight MLP as the feature decoder. {\color{lgreen}(d)} Finally, the model output is optimized using an unsupervised loss related to the compressive sensing process.}
    \label{fig:gradient}\vspace{-0.3cm}
\end{figure*} 
The key point for CI reconstruction is to learn an accurate representation of the underlying high-dimensional image. Among existing methods, modeling the image as a continuous representation that maps a spatial coordinate ${\bf v}\in{\mathbb R}^D$ to the corresponding image value is an emerging and effective paradigm for CI reconstruction \cite{huang2023neural,kunz2024implicit,feng2025spatiotemporal}. This continuous modeling framework enjoys essential advantages such as implicit continuous regularization, lightweight neural structures, and resolution independence. Among the continuous representation structures, the multi-resolution grid encoding (or say InstantNGP) \cite{muller2022instant} stands out as a state-of-the-art approach, which constructs a continuous function using multi-resolution grid interpolation. InstantNGP efficiently stores the multi-resolution grids in multi-resolution hash tables. The function value of each input coordinate is queried by interpolating over the multi-resolution grids. The interpolated results from different resolutions of grids are concatenated into a feature vector, which is passed through a lightweight neural network to produce the desired output. InstantNGP %could
 can capture high-frequency components and fine details of visual data better than conventional INRs by using learnable multi-resolution encoding, hence showcasing strong performance for continuous CI reconstruction \cite{feng2025spatiotemporal}. We formalize the InstantNGP as the multi-resolution grid encoding as follows.
\begin{definition}[Multi-resolution grid encoding \cite{muller2022instant}]\label{def:mrh}
Given a normalized input coordinate vector $\mathbf{v} \in [0,1)^D$, the multi-resolution grid encoding function ${\bf H}({\bf v}):[0,1)^{D}\rightarrow{\mathbb R}^{LF}$ with $L$ resolutions is defined as:
\begin{equation}\small
    \mathbf{H}(\mathbf{v}) := \left(\bigoplus_{l=1}^{L} \mathbf{h}_l(\mathbf{v})\right)\in{\mathbb R}^{LF},
\end{equation}
where the $l$-th resolution encoding $\mathbf{h}_l(\mathbf{v})\in{\mathbb R}^F$ is given by:
\begin{equation*}\small
\begin{split}
    &\mathbf{h}_l(\mathbf{v}) = \sum_{\mathbf{b} \in \{1,2\}^D} \mathcal{W}_{\mathbf{v},l}[\mathbf{b}] \cdot \mathcal{G}_{l}\left[ \lfloor  (N_l-1) \mathbf{v}\rfloor + \mathbf{b},: \right],\\
    &\mathcal{W}_{\mathbf{v},l} = \bigotimes_{d=1}^{D}\begin{pmatrix}1 - \mathbf{u}_{\mathbf{v},l} \left[d \right] \\ \mathbf{u}_{\mathbf{v},l }\left[d \right]\end{pmatrix},\;\mathbf{u}_{\mathbf{v},l} =  (N_l-1)\mathbf{v} - \lfloor  (N_l-1)\mathbf{v} \rfloor. 
    \end{split}
\end{equation*}
% \subsection{Tensor Decomposed Multi-Resolution Hash
% Encoding }\label{subsec:grid}
The components are defined as follows:
\begin{itemize}
    \item $\mathcal{G}_{l} \in \mathbb{R}^{N_l\times \cdots\times N_l\times F}$ is a $(D+1)$-order tensor representing the $l$-th grid encoding tensor, where $F$ denotes the dimension of each feature vector on grids and $N_l$ denotes the $l$-th grid resolution.
    \item $\mathcal{W}_{\mathbf{v},l}\in{\mathbb R}^{2\times\cdots\times 2}$ is a $D$-th order tensor representing the linear interpolation weights of the $l$-th grid conditioned on the input position $\bf v$.
    \item $\mathbf{u}_{\mathbf{v},l} \in [0,1)^D$ are relative coordinates of $\bf v$ w.r.t. its surrounding grids in ${\cal G}_l$ and ${\bf b}$ is the grid vertices indicator.
\end{itemize}
\end{definition}
{The philosophy of the InstantNGP is to use multi-resolution grids $\{\mathcal{G}_{l}\}_{l=1}^L$ to represent the continuous function, where the function value ${\bf H}({\bf v})$ of each input coordinate $\bf v$ is obtained by linear interpolation over the grids.} In practice, the grid resolution $N_l$ varies for different $l$, leading to a powerful multi-resolution representation for complex function modeling \cite{muller2022instant}. The encoding vector ${\bf H}({\bf v})$ is then passed through a lightweight MLP to produce the desired function value at $\bf v$ for continuous representation \cite{muller2022instant}. Each grid tensor $\mathcal{G}_{l}$ is efficiently stored in a hash table with length $T$ and width $F$. Specifically, the element of $\mathcal{G}_{l}$ is queried by a predefined hash function \cite{muller2022instant} to connect the tensor index and the hash table index, which establishes one-to-one or one-to-many (in the case of $T$ being small) correspondence between the hash table and the grid tensor $\mathcal{G}_{l}$. The hash table enhances the efficiency of multi-resolution grid encoding for continuous representation. However, as the dimensionality $D$ increases, InstantNGP encounters {\bf significant storage requirements and computational expenses} due to the multi-resolution {\bf high-dimensional grids $\{\mathcal{G}_{l}\}_{l=1}^L$ in ${\bf H}({\bf v})$}. The parameter count in InstantNGP is $O(LFN_l^D)$, which grows exponentially w.r.t. the data dimension $D$. For typical CI reconstruction problems, such as compressive MRI and SCI, the requirements for high-dimensional image representation makes InstantNGP less efficient. Therefore, we propose the tensor decomposed multi-resolution grid encoding model to release the storage and computational burden for high-dimensional image representation.
\subsubsection{Motivation and Formulation of GridTD}\label{subsubsec:mot}
The main motivation of the proposed method is to decompose the dense multi-resolution grids of InstantNGP into lightweight one-dimensional (1D) grids in terms of tensor decomposition, which alleviates parameter redundancy for high-dimensional data. Especially, InstantNGP interpolates between the feature vectors stored in heavy $D$-dimensional grids. To address the heavy storage costs, we first decouple the $D$-dimensional coordinate $\bf v$ into $D$ independent 1D coordinates ${\bf v}[d]$ ($d=1,\cdots,D$). Each function value of the 1D coordinates is interpolated and encoded through {\bf lightweight 1D multi-resolution grids}, i.e., ${\bf H}_d(\mathbf{v}[d])$ ($d=1,\cdots,D$). Then we employ the element-wise product to fuse these 1D encodings into a unified feature $\mathbf{H}_{\rm GridTD}(\mathbf{v})$ (inspired by the CP tensor decomposition, as we will discuss in Lemma \ref{def:gtdp}). For simplicity, we let $R:=LF$ be the CP rank of the GridTD model.
\begin{definition}[Tensor decomposed multi-resolution grid encoding (GridTD)]\label{def:gtd}
Given a normalized input coordinate vector $\mathbf{v} \in [0,1)^D$, the tensor decomposed multi-resolution grid encoding $\mathbf{H}_{\rm GridTD}(\mathbf{v}):[0,1)^{D}\rightarrow{\mathbb R}^{R}$ is defined as:
\begin{equation}\small\label{eq_gridtd}
    \mathbf{H}_{\rm GridTD}(\mathbf{v}) :=\left(\bigodot_{d=1}^D {\bf H}_d(\mathbf{v}[d])\right)\in{\mathbb R}^{R},%\bigoplus_{l=1}^{L} \mathbf{h}_l(\mathbf{v}),
\end{equation}
where $R:=LF$ is the CP rank, $\odot$ is the element-wise product, and each component (or say factor vector) ${\bf H}_d(\mathbf{v}[d]):[0,1)\rightarrow{\mathbb R}^{R}$ ($d=1,\cdots,D$) is a one-dimensional multi-resolution grid encoding defined in Definition \ref{def:mrh}.
\end{definition}
Similar to InstantNGP, the encoding feature $\mathbf{H}_{\rm GridTD}(\mathbf{v})$ is processed through an MLP to reconstruct the target value. {Compared to InstantNGP, the GridTD solely uses lightweight 1D multi-resolution grids for the $D$-dimensional function through coordinate decomposition, which can be stored in smaller hash tables and hence significantly eases storage costs (see Table \ref{tab:complexity}).}
Note that in Definition \ref{def:gtd} we have defined the GridTD as a continuous function w.r.t. the given coordinate vector $\bf v$. We can use the formulation to process multiple coordinates in parallel and write the output of the GridTD as a tensor. Given $n_1n_2\cdots n_D$ coordinates of a $D$-th order tensor ${\cal X}\in{\mathbb R}^{n_1\times\cdots\times n_D}$, we can decompose these coordinates into $n_1+n_2+\cdots+n_D$ separated coordinates that represent the individual coordinates of each dimension\footnote{For example, for a 2D grid $(1,1),(1,2),(2,1),(2,2)$, we can use two spatial coordinate vectors ${\bf v}_1=(1,2)$ and ${\bf v}_2=(1,2)$ to fully determine the 2D grid by Cartesian product between ${\bf v}_1$ and ${\bf v}_2$.}. We can then write the tensor parallelism formulation of GridTD for these decomposed coordinates. The tensor parallelism facilitates efficient parallel processing of tensor coordinates, making GridTD more computationally efficient than InstantNGP for high-dimensional image representation in CI reconstruction.   
\begin{lemma}[GridTD under tensor parallelism]\label{def:gtdp}
Given a tensor of size $n_1\times n_2\times \cdots\times n_D$ and its decomposed coordinate vectors ${\bf v}_d\in{[0,1)}^{n_d}$ ($d=1,\cdots,D$)\footnote{For instance, we can set ${\bf v}_d=(0,1/{n_d},2/{n_d},\cdots,(n_d-1)/n_d)$ to be the uniformly sampled coordinates of a tensor along the $d$-th dimension.}, we define ${\cal H}\in{\mathbb R}^{n_1\times \cdots\times n_D\times R}$ as the GridTD encoding of this tensor:
$$
{\cal H}[i_1,i_2,\cdots,i_D, :]:=\mathbf{H}_{\rm GridTD}({\bf v}_1[i_1],{\bf v}_2[i_2],\cdots,{\bf v}_D[i_D]).$$
Then ${\cal H}$ can be written as
\begin{equation}\small\label{eq:tensor}
\begin{split}
&{\cal H}=\left(\bigotimes_{d=1}^D {\bf h}_{1,d},\bigotimes_{d=1}^D {\bf h}_{2,d},\cdots,\bigotimes_{d=1}^D {\bf h}_{R,d}\right),
\end{split}
\end{equation}
where the concatenation is performed in the last dimension, ${\bf h}_{r,d}:=\left(\bigoplus_{i=1}^{n_d}{\bf H}_d({\bf v}_d[i])[r]\right)\in{\mathbb R}^{n_d}$ is the factor vector of the encoding tensor $\cal H$, and ${\bf H}_d(\cdot):[0,1)\rightarrow{\mathbb R}^{R}$ is the 1D multi-resolution grid encoding defined in Definition \ref{def:gtd}.
\end{lemma}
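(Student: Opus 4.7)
The plan is to verify the claimed CP-type identity element-wise, by expanding both sides at an arbitrary index $(i_1,\dots,i_D,r)$ and then reading off the factor vectors. First I would apply the definition of $\mathcal{H}$ together with Definition \ref{def:gtd}, which gives
$$\mathcal{H}[i_1,\dots,i_D,r] = \mathbf{H}_{\rm GridTD}(\mathbf{v}_1[i_1],\dots,\mathbf{v}_D[i_D])[r] = \prod_{d=1}^D \mathbf{H}_d(\mathbf{v}_d[i_d])[r],$$
since the Hadamard product $\bigodot_{d=1}^D$ acts coordinate-wise on the $R$-dimensional feature vectors.

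Next I would identify the factor vectors by setting $\mathbf{h}_{r,d}[i_d] := \mathbf{H}_d(\mathbf{v}_d[i_d])[r]$; this is exactly what the concatenation $\bigoplus_{i=1}^{n_d}$ in the lemma's definition of $\mathbf{h}_{r,d}$ produces, namely a vector in $\mathbb{R}^{n_d}$ whose $i_d$-th entry is the scalar $\mathbf{H}_d(\mathbf{v}_d[i_d])[r]$. Substituting back yields $\mathcal{H}[i_1,\dots,i_D,r] = \prod_{d=1}^D \mathbf{h}_{r,d}[i_d]$. Invoking the entry-wise formula for the outer product recalled in Section \ref{subsec:nota},
$$\left(\bigotimes_{d=1}^D \mathbf{h}_{r,d}\right)[i_1,\dots,i_D] = \prod_{d=1}^D \mathbf{h}_{r,d}[i_d],$$
so the $r$-th slice of $\mathcal{H}$ along the final (feature) mode coincides with the rank-one tensor $\bigotimes_{d=1}^D \mathbf{h}_{r,d}$. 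Concatenating these $R$ slices along the last mode yields exactly \eqref{eq:tensor}.

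The argument is essentially a commutation identity between two products that operate on disjoint axes: the Hadamard product $\bigodot$ in Definition \ref{def:gtd} acts within the $R$-dimensional feature space, while the outer product $\bigotimes$ in \eqref{eq:tensor} acts across the $D$ spatial axes, and each feature channel $r$ is assembled independently from scalars that depend on only one spatial axis. There is no analytic obstacle; the only care required is aligning the indexing conventions — distinguishing the feature index $r$ from the spatial indices $i_1,\dots,i_D$ — and noting that stacking the scalars $\mathbf{H}_d(\mathbf{v}_d[i])[r]$ via $\bigoplus_{i=1}^{n_d}$ is exactly the factor vector $\mathbf{h}_{r,d}$ appearing in the statement.
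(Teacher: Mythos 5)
Your element-wise verification is correct and is essentially the paper's argument: the paper dismisses the lemma as ``a direct repetitive calculation of \eqref{eq_gridtd} for the inferred coordinates,'' and your expansion $\mathcal{H}[i_1,\dots,i_D,r]=\prod_{d=1}^D \mathbf{H}_d(\mathbf{v}_d[i_d])[r]=\prod_{d=1}^D \mathbf{h}_{r,d}[i_d]$ is precisely that calculation written out. Your observation that the Hadamard product acts on the feature axis while the outer product acts on the disjoint spatial axes is a helpful clarification of why the identity holds, but it does not constitute a different route.
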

\begin{proof}
The result is a direct repetitive calculation of \eqref{eq_gridtd} for the inferred coordinates.
\end{proof}
Lemma \ref{def:gtdp} indicates that GridTD can encode tensors in a compact form \eqref{eq:tensor}, resulting in efficient computation (see Table \ref{tab:complexity}). If we perform summing over ${\cal H}$ along the last dimension in \eqref{eq:tensor}, then we recover the CP decomposition formulation $\sum_{r=1}^R\bigotimes_{d=1}^D {\bf h}_{r,d}$ as introduced in \eqref{CP}. Hence, our GridTD is essentially related to this tensor decomposition technique from its encoding structure. Differently, while CP decomposition directly sums the rank-1 components $\bigotimes_{d=1}^D {\bf h}_{r,d}$, we will instead use a lightweight MLP $g_\Theta:{\mathbb R}^R\to{\mathbb R}$ to modulate the rank fusion process by operating on the rank dimension $R$ of the encoding tensor $\cal H$. An illustration of the tensor parallelism formulation of GridTD for $D=3$ is shown in Fig. \ref{fig:gradient}.
\subsubsection{The Efficiency of GridTD}\label{subsubsec:effi}
As compared with InstantNGP, GridTD holds favorable storage and computational efficiency by the decomposition structure. First, for storage optimization, GridTD requires only $D$ one-dimensional grids compared to InstantNGP's $D$-dimensional grids. The storage complexity (parameter count) of InstantNGP is $\underline{O(LFN_l^D)}$, while the storage complexity of GridTD is $\underline{O(LFDN_l)}$. GridTD significantly reduces the number of parameters required to represent a tensor (see Table \ref{tab:complexity} for examples).\par 
\begin{table}[t]
\scriptsize
\tabcolsep = 2pt
\centering
\caption{Storage complexity, parameter number, computational complexity, and running time (300 iterations for inpainting) for a $D$-th order tensor of size $n\times\cdots\times n$ ($n=100, D=3$ here). $L$ denotes the resolution number, $F$ denotes the feature vector length, and $N_l$ denotes the grid resolution.}
\label{tab:complexity}
\begin{tabular}{lcccc}
\toprule
Method & Storage complexity & \#Params.&Computational complexity & Time \\
\midrule
InstantNGP & $\mathcal{O}(LFN_l^D)$ &3.12M& $\mathcal{O}((2n)^DLF)$ & 38.05s\\
GridTD & $\mathcal{O}(LFDN_l)$ &0.02M& $\mathcal{O}(2nDLF)$ & 1.31s\\
\bottomrule\vspace{-0.5cm}
\end{tabular}
\end{table}
Second, GridTD significantly reduces the forward computational complexity for tensor data. From Lemma \ref{def:gtdp}, we can see that to generate the encoding tensor $\cal H$, each 1D grid function ${\bf H}_d(\cdot)$ is queried only $n_d$ times because the 1D encoding ${\bf H}_d({\bf v}_d[i])$ can be shared in parallel for all rank components $r=1,2,\cdots,R$. Hence, GridTD solely performs $\sum_{d=1}^D n_d$ times of lightweight one-dimensional grid encoding to generate the large encoding tensor $\cal H$, and the computational complexity of GridTD is $\underline{O(2nDLF)}$ (the total number of linear interpolations over 1D grids), where $n=\max_dn_d$. As compared, the original InstantNGP \cite{muller2022instant} would require performing $\prod_{d=1}^Dn_D$ times of $D$-dimensional grid encoding to generate the same-sized encoding tensor. The computational complexity of InstantNGP is $\underline{O((2n)^DLF)}$ (the total number of linear interpolations over $D$-dimensional grids) for the same-sized tensor, which is much larger than that of GridTD. The computational efficiency of GridTD makes it highly efficient for image representation, requiring less time for optimization (see Table \ref{tab:complexity} for examples). The storage and computational advantages enable GridTD to outperform InstantNGP in both memory efficiency and optimization speed. Moreover, GridTD maintains equivalent hierarchical modeling abilities with InstantNGP, since 1) GridTD also uses multi-resolution grids for hierarchical function modeling, and 2) for any high-dimensional tensor, it can be exactly factorized into the CP decomposition with proper CP rank $R$ \cite{SIAM_review}, confirming the unblemished representation capacity of GridTD.
\subsubsection{The Effectiveness of GridTD}\label{subsubsec:effe}
From a data modeling perspective, GridTD inherently incorporates two structural modeling techniques. First, GridTD efficiently captures complex hierarchical structures of images from the tensor decomposed multi-resolution grid encoding \cite{SIAM_review}. Second, GridTD captures the inherent smoothness of high-dimensional images from the continuous function representation of parametric grid encoding. More importantly, GridTD enhances the robustness of such smooth characterization across different data dimensions $D$. Especially, instead of performing the $D$-dimensional interpolation directly, GridTD decomposes the problem into one-dimensional factors and applies 1D interpolation separately to each 1D factor. This makes its smoothness characterization stable and less sensitive to dimensionality $D$. To verify this, we analyze the Lipschitz continuity of GridTD and InstantNGP, demonstrating the key theoretical advantage of GridTD: its Lipschitz constant is less sensitive to the input dimension $D$, whereas InstantNGP’s Lipschitz constant scales exponentially with $D$. This property ensures stable and consistent performance for GridTD regardless of the input dimensionality, making GridTD particularly suitable for high-dimensional image representation in CI reconstruction.
\begin{theorem}[Lipschitz smooth bound for InstantNGP]\label{theorem-1} Consider the composition of an MLP $g_\Theta(\cdot):{\mathbb R}^{LF}\to{\mathbb R}$ and the multi-resolution grid encoding $\mathbf{H}(\cdot):[0,1)^D\to{\mathbb R}^{LF}$ defined in Definition \ref{def:mrh}: 
\begin{equation}\small\begin{split}
	(g_\Theta\circ{\bf H})({\bf v}):= {\bf W}_2\sigma\left({\bf W}_1\mathbf{H}(\mathbf{v})+\mathbf{b}\right):[0,1)^D\to{\mathbb R},
	\end{split}
\end{equation}	
where ${\bf W}_1\in{\mathbb R}^{n_1\times LF},{\bf W}_2\in{\mathbb R}^{1\times n_1},\mathbf{b} \in {\mathbb R}^{n_1\times 1}$ are weights and bias of the MLP and $\sigma$ is a $\gamma$-Lipschitz smooth activation function. Assume that the grid norm $\lVert \mathcal{G}_l[\mathbf{z},:]\rVert_{\ell_1}\leq1$ for all resolutions $l$ and vertices ${\bf z}\in{\mathbb Z}^D$. Then for any coordinates ${\bf v}_1,{\bf v}_2\in[0,1)^{D}$, the following Lipschitz smooth bound holds:
\begin{equation}\small
\left|(g_\Theta\circ{\bf H})({\bf v}_1)-(g_\Theta\circ{\bf H})({\bf v}_2)\right|\leq
2^D \gamma \eta D N \lVert{\bf v}_1-{\bf v}_2\rVert_{\ell_1},
\end{equation}	
 % where $\eta_1=\lVert{\bf W}_1\rVert_{\ell_1}\lVert{\bf W}_2\rVert_{\ell_1}$ and $N=\underset{l=1}{\overset{L}{\sum}}N_l$. 
where $ \eta=\lVert{\bf W}_1\rVert_{\ell_1}\lVert{\bf W}_2\rVert_{\ell_1}$ and $N=\underset{l=1}{\overset{L}{\sum}}(N_l-1)$.\\
\end{theorem}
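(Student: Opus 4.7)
My plan is to bound the Lipschitz constant by peeling the composition from the outside in: through the MLP, through the concatenation of the $L$ resolution encodings, through the piecewise multilinear interpolation at each resolution, and finally through the $D$-fold product structure of the interpolation weights. At each stage I expect a clean Hölder-type inequality or telescoping product estimate, and the final constants $2^D$, $D$, and $N=\sum_l(N_l-1)$ should each enter at a specific stage.

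First, for the outer MLP, I apply Hölder-type inequalities to the two linear maps and the $\gamma$-Lipschitzness of $\sigma$ applied coordinate-wise. Using $|{\bf W}_2{\bf y}|\leq\|{\bf W}_2\|_{\ell_1}\|{\bf y}\|_\infty$, $\|\sigma({\bf s}_1)-\sigma({\bf s}_2)\|_\infty\leq\gamma\|{\bf s}_1-{\bf s}_2\|_\infty$, and $\|{\bf W}_1{\bf x}\|_\infty\leq\|{\bf W}_1\|_{\ell_1}\|{\bf x}\|_{\ell_1}$, I obtain $|(g_\Theta\circ{\bf H})({\bf v}_1)-(g_\Theta\circ{\bf H})({\bf v}_2)|\leq\gamma\eta\,\|{\bf H}({\bf v}_1)-{\bf H}({\bf v}_2)\|_{\ell_1}$, which already puts the prefactor $\gamma\eta$ in place. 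Since ${\bf H}=\bigoplus_{l=1}^L{\bf h}_l$, the $\ell_1$-norm splits as $\sum_l\|{\bf h}_l({\bf v}_1)-{\bf h}_l({\bf v}_2)\|_{\ell_1}$, which is where the factor $N=\sum_l(N_l-1)$ will be generated.

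Second, I analyze a single resolution $l$. Suppose first that ${\bf v}_1,{\bf v}_2$ lie in the same grid cell so that $\lfloor(N_l-1){\bf v}_1\rfloor=\lfloor(N_l-1){\bf v}_2\rfloor$. Then ${\bf h}_l({\bf v}_1)-{\bf h}_l({\bf v}_2)=\sum_{{\bf b}\in\{1,2\}^D}\bigl(\mathcal{W}_{{\bf v}_1,l}[{\bf b}]-\mathcal{W}_{{\bf v}_2,l}[{\bf b}]\bigr)\mathcal{G}_l[\lfloor\cdot\rfloor+{\bf b},:]$ over $2^D$ shared corner vertices. The grid-norm hypothesis $\|\mathcal{G}_l[{\bf z},:]\|_{\ell_1}\leq 1$ reduces the estimate to a sum of scalar weight differences. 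Each weight is a product of $D$ factors in $[0,1]$ whose $d$-th factor differs by exactly $|{\bf u}_{{\bf v}_1,l}[d]-{\bf u}_{{\bf v}_2,l}[d]|$, so the telescoping identity $|\prod_d a_d-\prod_d b_d|\leq\sum_d|a_d-b_d|\leq D\,\|{\bf a}-{\bf b}\|_\infty$, combined with $\|{\bf u}_{{\bf v}_1,l}-{\bf u}_{{\bf v}_2,l}\|_\infty=(N_l-1)\|{\bf v}_1-{\bf v}_2\|_\infty\leq(N_l-1)\|{\bf v}_1-{\bf v}_2\|_{\ell_1}$, gives a per-vertex bound $D(N_l-1)\|{\bf v}_1-{\bf v}_2\|_{\ell_1}$. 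Summing over the $2^D$ vertices yields $\|{\bf h}_l({\bf v}_1)-{\bf h}_l({\bf v}_2)\|_{\ell_1}\leq 2^D D(N_l-1)\|{\bf v}_1-{\bf v}_2\|_{\ell_1}$; summing over $l$ and composing with the MLP bound from the first step produces the claimed $2^D\gamma\eta D N\|{\bf v}_1-{\bf v}_2\|_{\ell_1}$.

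The main obstacle will be the cross-cell case, where $\lfloor(N_l-1){\bf v}_1\rfloor\neq\lfloor(N_l-1){\bf v}_2\rfloor$ and the two multilinear expansions involve genuinely different corner vertices, so the clean single-cell algebra above breaks down. I plan to exploit the fact that the $D$-linear interpolation is continuous across cell boundaries by construction: parametrize the segment from ${\bf v}_1$ to ${\bf v}_2$, subdivide it at each cell-boundary crossing, apply the within-cell Lipschitz bound to each sub-segment, and telescope using continuity at the intermediate endpoints. Since the $\ell_1$-length of the concatenated sub-segments equals $\|{\bf v}_1-{\bf v}_2\|_{\ell_1}$, the per-resolution prefactor $2^D D(N_l-1)$ is preserved, and the rest of the composition chain carries through unchanged to deliver the stated bound.
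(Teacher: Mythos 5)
Your proposal is correct and follows what is essentially the canonical route to this bound: Hölder estimates through ${\bf W}_2$, $\sigma$, ${\bf W}_1$ to reduce to $\gamma\eta\|{\bf H}({\bf v}_1)-{\bf H}({\bf v}_2)\|_{\ell_1}$, a per-resolution telescoping-product estimate on the $2^D$ shared interpolation weights within a cell, and subdivision of the segment at cell boundaries (using continuity of the multilinear interpolant and additivity of $\ell_1$-length along a segment) for the cross-cell case, which reproduces each stated constant ($\gamma\eta$, $2^D$, $D$, $N=\sum_l(N_l-1)$) at exactly the expected stage. The only remark worth adding is that your detour through $\sum_d|a_d-b_d|\le D\|{\bf a}-{\bf b}\|_{\infty}\le D\,(N_l-1)\|{\bf v}_1-{\bf v}_2\|_{\ell_1}$ is where the factor $D$ enters; the direct estimate $\sum_d|a_d-b_d|=(N_l-1)\|{\bf v}_1-{\bf v}_2\|_{\ell_1}$ would in fact yield the slightly sharper constant $2^D\gamma\eta N$, so your looser step is what matches the theorem as stated.
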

\begin{theorem}[Lipschitz smooth bound for GridTD]\label{theorem-2} Consider the composition of an MLP $g_\Theta(\cdot):{\mathbb R}^{R}\to{\mathbb R}$ and the tensor decomposed multi-resolution grid encoding $\mathbf{H}_{\rm GridTD}(\cdot):[0,1)^D\to{\mathbb R}^{R}$ defined in Definition \ref{def:gtd}: 
\begin{equation}\small
\label{gtdmlp}
\begin{split}
	(g_\Theta\circ{\bf H}_{\rm GridTD})({\bf v}):= {\bf W}_2\sigma\left({\bf W}_1\mathbf{H}_{\rm GridTD}(\mathbf{v})+\mathbf{b}\right).
	\end{split}
\end{equation}	
Let the assumptions in Theorem \ref{theorem-1} hold. Then for any coordinates ${\bf v}_1,{\bf v}_2\in[0,1)^{D}$, the following Lipschitz smooth bound holds for GridTD:
% \begin{equation*}\small
% \begin{split}
% |(g_\Theta\circ{\bf H}_{\rm GridTD})({\bf v}_1)-(g_\Theta&\circ{\bf H}_{\rm GridTD})({\bf v}_2)|\\&\leq
% 2\gamma \eta DN \lVert{\bf v}_1-{\bf v}_2\rVert_{\ell_1},
% \end{split}
% \end{equation*}	
 \begin{equation}\small
\begin{split}
&|(g_\Theta\circ{\bf H}_{\rm GridTD})({\bf v}_1)-(g_\Theta\circ{\bf H}_{\rm GridTD})({\bf v}_2)|\\\leq&
2\gamma \eta DN \lVert{\bf v}_1-{\bf v}_2\rVert_{\ell_1},
\end{split}
\end{equation}
 % where $\eta_1=\lVert{\bf W}_1\rVert_{\ell_1}\lVert{\bf W}_2\rVert_{\ell_1}$ and $N=\underset{l=1}{\overset{L}{\sum}}N_l$. 
  where $\eta=\lVert{\bf W}_1\rVert_{\ell_1}\lVert{\bf W}_2\rVert_{\ell_1}$ and $N=\underset{l=1}{\overset{L}{\sum}}(N_l-1)$. \\
\end{theorem}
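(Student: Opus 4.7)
The plan is to mirror the layered proof of Theorem \ref{theorem-1}, replacing the $D$-dimensional multilinear interpolation estimate by a 1D interpolation estimate composed with a telescoping argument for the element-wise product. Concretely, I split the argument into three reductions: peel off the MLP, control each 1D factor, and handle the $\odot$-product. The first reduction is essentially unchanged from Theorem \ref{theorem-1}: applying the triangle inequality, H\"older's inequality, and the $\gamma$-Lipschitz property of $\sigma$ to $\mathbf{W}_2\sigma(\mathbf{W}_1\mathbf{H}_{\rm GridTD}(\mathbf{v})+\mathbf{b})$ reduces the bound to
\begin{equation*}
|(g_\Theta\circ\mathbf{H}_{\rm GridTD})(\mathbf{v}_1) - (g_\Theta\circ\mathbf{H}_{\rm GridTD})(\mathbf{v}_2)| \leq \gamma\eta\,\|\mathbf{H}_{\rm GridTD}(\mathbf{v}_1)-\mathbf{H}_{\rm GridTD}(\mathbf{v}_2)\|_{\ell_\infty},
\end{equation*}
so that only the encoding perturbation needs to be controlled.

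Second, I would establish a 1D analogue of the grid Lipschitz estimate. Each factor $\mathbf{H}_d$ is by Definition \ref{def:gtd} a 1D multi-resolution grid encoding, and at resolution $l$ the interpolated feature $\mathbf{h}_{d,l}(v)=(1-u)\mathcal{G}_{d,l}[z,:]+u\mathcal{G}_{d,l}[z+1,:]$ is piecewise linear in $v$ with slope bounded using $\|\mathcal{G}_{d,l}[\mathbf{z},:]\|_{\ell_1}\leq 1$. Summing over the $L$ resolutions of the concatenation, exactly as in the proof of Theorem \ref{theorem-1} but specialized to $D=1$, yields $\|\mathbf{H}_d(v_1)-\mathbf{H}_d(v_2)\|_{\ell_\infty}\leq 2N|v_1-v_2|$ with $N=\sum_{l=1}^L(N_l-1)$, notably with no $2^D$ factor since only a 1D interpolation is involved. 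The same grid hypothesis also yields the unit bound $\|\mathbf{H}_d(v)\|_{\ell_\infty}\leq 1$, which is critical for the next step.

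Third, I would handle the element-wise product via the telescoping identity
\begin{equation*}
\mathbf{H}_{\rm GridTD}(\mathbf{v}_1)-\mathbf{H}_{\rm GridTD}(\mathbf{v}_2)=\sum_{d=1}^{D}\left(\bigodot_{d'<d}\mathbf{H}_{d'}(\mathbf{v}_2[d'])\right)\odot\bigl(\mathbf{H}_d(\mathbf{v}_1[d])-\mathbf{H}_d(\mathbf{v}_2[d])\bigr)\odot\left(\bigodot_{d'>d}\mathbf{H}_{d'}(\mathbf{v}_1[d'])\right),
\end{equation*}
and apply the triangle inequality together with the submultiplicativity $\|\mathbf{x}\odot\mathbf{y}\|_{\ell_\infty}\leq\|\mathbf{x}\|_{\ell_\infty}\|\mathbf{y}\|_{\ell_\infty}$. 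The unit bounds $\|\mathbf{H}_{d'}(\cdot)\|_{\ell_\infty}\leq 1$ collapse the lateral product factors to $1$, and the 1D Lipschitz bound from Step 2 controls each central difference by $2N|v_1[d]-v_2[d]|\leq 2N\|\mathbf{v}_1-\mathbf{v}_2\|_{\ell_1}$. Summing the $D$ telescope terms gives $\|\mathbf{H}_{\rm GridTD}(\mathbf{v}_1)-\mathbf{H}_{\rm GridTD}(\mathbf{v}_2)\|_{\ell_\infty}\leq 2DN\|\mathbf{v}_1-\mathbf{v}_2\|_{\ell_1}$, and combining with Step 1 yields the claimed $2\gamma\eta DN\|\mathbf{v}_1-\mathbf{v}_2\|_{\ell_1}$.

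The main obstacle is the third step: one has to verify that the $D$-fold element-wise product does not inject an exponential factor. The delicate point is to pair the $\ell_\infty$-submultiplicativity of $\odot$ with the unit bound $\|\mathbf{H}_{d'}\|_{\ell_\infty}\leq 1$ provided by the grid-norm assumption; together they collapse the lateral factors of each telescope summand to $1$ so that only a linear factor $D$ (from the telescoping sum) survives. This is precisely the structural mechanism by which GridTD replaces the $2^D$ exponential factor of Theorem \ref{theorem-1} with the linear factor $D$ observed here.
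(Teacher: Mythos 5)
Your proposal is correct and follows essentially the argument the paper relies on: peel off the MLP via H\"older's inequality and the $\gamma$-Lipschitz activation to reduce to $\gamma\eta\,\lVert\mathbf{H}_{\rm GridTD}(\mathbf{v}_1)-\mathbf{H}_{\rm GridTD}(\mathbf{v}_2)\rVert_{\ell_\infty}$, bound each 1D factor's increment by $2(N_l-1)\le 2N$ times the coordinate increment using the grid-norm assumption, and telescope the Hadamard product so that the unit bounds $\lVert\mathbf{H}_{d'}(\cdot)\rVert_{\ell_\infty}\le 1$ collapse the lateral factors and leave only the linear factor $D$ in place of InstantNGP's $2^D$. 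The one cosmetic imprecision is that the $\ell_\infty$ norm of the concatenation over resolutions is a maximum rather than a sum, but bounding $\max_l(N_l-1)$ by $N=\sum_{l=1}^L(N_l-1)$ is harmless and is exactly what the stated constant requires.
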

We observe that the Lipschitz smooth bound of GridTD scales linearly with dimension $D$, whereas the Lipschitz bound of InstantNGP scales exponentially with $D$. Consequently, under the same configuration (such as weight initialization), the degree of smoothness encoded in GridTD does not change abruptly across different dimensions, while the smoothness of InstantNGP would be more sensitive to dimensional changes. Building on this insight, we derive generalization error bounds for both methods. Since GridTD’s Lipschitz constant remains stable w.r.t. dimension, its generalization error scales more favorably compared to InstantNGP, particularly in high-dimensional settings. This property makes GridTD more robust and effective for high-dimensional image representation in compressive sensing, as its generalization error bound is less sensitive to the image dimension.
\begin{theorem}[Generalization error bound via Lipschitz smoothness]\label{theorem-3}
Suppose the assumption in Theorem \ref{theorem-2} holds. Let $\mathcal{X} \times \mathcal{Y} \subset \left[ 0,1 \right)^D \times \left[ 0,1 \right]$ (coordinate $\times$ image value), and ${\mathscr D}$ is a data distribution over $\mathcal{X} \times \mathcal{Y}$. Consider the GridTD model $(g_\Theta\circ{\bf H}_{\rm GridTD}): \left[ 0,1 \right)^D \to \left[ 0,1 \right]$ with Lipschitz constant $L = \max_{{\bf x}} \|\nabla (g_\Theta\circ{\bf H}_{\rm GridTD})({\bf x})\|\leq2\gamma\eta DN$. Then with probability $1 - \delta$ over the sample of training data $\{({\bf x}_1,y_1), \ldots, ({\bf x}_n,y_n)\} \sim {\mathscr D}$, we have the following generalization error bound for $\mathcal{G}[g_\Theta\circ{\bf H}_{\rm GridTD}] := \mathbb{E}_{({\bf x},y) \sim {\mathscr D}} ((g_\Theta\circ{\bf H}_{\rm GridTD})({\bf x}) - y)^2 - \frac{1}{n} \sum_{i=1}^n ((g_\Theta\circ{\bf H}_{\rm GridTD})({\bf x}_i) - y_i)^2:$
\begin{equation}
  \mathcal{G}[g_\Theta\circ{\bf H}_{\rm GridTD}] \leq \frac{8\gamma\eta DN}{\sqrt{n}} + 3 \sqrt{\frac{\log \frac{2}{\delta}}{2n}}.  
\end{equation}
\small

\end{theorem}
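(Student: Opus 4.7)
The plan is to invoke the classical Rademacher-complexity framework, using Theorem~\ref{theorem-2} to supply the complexity control and a McDiarmid-style concentration for the residual term. First, I observe that since both the range of $g_\Theta\circ\mathbf{H}_{\rm GridTD}$ and the label space are contained in $[0,1]$, the squared loss $\ell_f(\mathbf{x},y):=(f(\mathbf{x})-y)^2$ takes values in $[0,1]$. By the standard Rademacher-based generalization bound for bounded losses (e.g., Mohri--Rostamizadeh--Talwalkar, Thm.~3.3), with probability at least $1-\delta$ over the draw of the sample,
\begin{equation*}
\mathcal{G}[g_\Theta\circ\mathbf{H}_{\rm GridTD}] \leq 2\mathfrak{R}_n(\ell\circ\mathcal{F}_L) + 3\sqrt{\frac{\log(2/\delta)}{2n}},
\end{equation*}
where $\mathcal{F}_L$ is the class of $L$-Lipschitz GridTD--MLP candidates containing the trained model. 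The second summand already matches the concentration term of the claim.

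Next I would decouple the loss from the function class via Talagrand's contraction inequality: since $(u-y)^2$ is $2$-Lipschitz in $u$ for $u,y\in[0,1]$, we obtain $\mathfrak{R}_n(\ell\circ\mathcal{F}_L)\leq 2\mathfrak{R}_n(\mathcal{F}_L)$. The heart of the argument then reduces to bounding $\mathfrak{R}_n(\mathcal{F}_L)$. Using the Lipschitz constant $L\leq 2\gamma\eta DN$ furnished by Theorem~\ref{theorem-2}, together with a Rademacher bound $\mathfrak{R}_n(\mathcal{F}_L)\leq L/\sqrt{n}$ for the Lipschitz class, the three ingredients combine to
\begin{equation*}
\mathcal{G}[g_\Theta\circ\mathbf{H}_{\rm GridTD}] \leq 2\cdot 2\cdot \frac{L}{\sqrt{n}} + 3\sqrt{\frac{\log(2/\delta)}{2n}} = \frac{8\gamma\eta DN}{\sqrt{n}} + 3\sqrt{\frac{\log(2/\delta)}{2n}},
\end{equation*}
which is the claimed bound.

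The main obstacle is producing the $L/\sqrt{n}$ Rademacher bound on $\mathcal{F}_L$. A generic Dudley chaining over all $L$-Lipschitz functions $[0,1)^D\to[0,1]$ produces a dimension-dependent nonparametric rate $O(L\,n^{-1/D})$ rather than the parametric $1/\sqrt{n}$ needed here. To recover the parametric rate I would exploit the finite-dimensional parametric structure of GridTD: construct a sup-norm $\epsilon$-net of $\mathcal{F}_L$ by covering the hash-grid tensors $\{\mathcal{G}_l\}$ and the MLP parameters $\mathbf{W}_1,\mathbf{W}_2,\mathbf{b}$ on their bounded domains, invoke the Lipschitz inequality of Theorem~\ref{theorem-2} to translate parameter-space perturbations into output perturbations with the correct $L$-scaling, and then apply Massart's finite-class lemma to the resulting cover. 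Carefully tracking constants so that the effective complexity is exactly $L=2\gamma\eta DN$, rather than a larger dimension-amplified quantity, is the delicate technical component: essentially all of the dimensional dependence of the final generalization bound is concentrated in this step, and it is precisely where the favourable (linear in $D$) scaling from Theorem~\ref{theorem-2} is inherited by the generalization error.
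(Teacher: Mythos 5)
Your overall template --- the bounded-loss Rademacher generalization bound supplying the outer factor $2$ and the $3\sqrt{\log(2/\delta)/(2n)}$ term, Talagrand contraction with constant $2$ for the squared loss on $[0,1]$, and a complexity bound of the form $\mathfrak{R}_n(\mathcal{F}_L)\leq L/\sqrt{n}$ fed by the Lipschitz constant $L\leq 2\gamma\eta DN$ from Theorem~\ref{theorem-2} --- is the same decomposition the paper's constant structure reflects ($8\gamma\eta DN = 2\cdot 2\cdot L$), and the first two ingredients are standard and correctly invoked.

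The gap is the third ingredient, which you diagnose but do not close, and it is the step that carries the entire content of the theorem. As you yourself observe, the generic entropy of the $L$-Lipschitz ball on $[0,1)^D$ gives only a nonparametric rate $O(L\,n^{-1/D})$ for $D\geq 2$, so ``$\mathfrak{R}_n(\mathcal{F}_L)\leq L/\sqrt{n}$'' cannot be obtained from Lipschitzness alone. Your proposed repair --- a sup-norm cover of the parameter space (hash grids plus $\mathbf{W}_1,\mathbf{W}_2,\mathbf{b}$) followed by Massart's finite-class lemma --- yields a bound of the form $\inf_{\epsilon}\bigl(C\epsilon+\sqrt{2\log N(\epsilon)/n}\bigr)$ with $\log N(\epsilon)\asymp p\log(1/\epsilon)$, where $p$ is the total parameter count. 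That route produces $O\bigl(\sqrt{p\log n/n}\bigr)$: the complexity is governed by the number of parameters, not by the input-space Lipschitz constant, and no amount of constant tracking converts $\sqrt{p\log n}$ into $2\gamma\eta DN$. So the step you label ``delicate'' is not merely delicate --- the mechanism you propose for it proves a structurally different bound. To obtain the theorem as stated you need a lemma bounding the (empirical) Rademacher complexity of the relevant class directly by $L/\sqrt{n}$, which for a genuinely $D$-dimensional class requires either additional structure beyond Lipschitz continuity or an explicit appeal to the specific complexity lemma the paper relies on; your write-up supplies neither, so the proof is incomplete at its central point.
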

Similar to Theorem \ref{theorem-3}, we can deduce the generalization error bound of the InstantNGP model $(g_\Theta\circ{\bf H}): \left[ 0,1 \right)^D \to \left[ 0,1 \right]$ from its Lipschitz bound $2^D\gamma\eta DN:$
\[\small
\mathcal{G}[g_\Theta\circ{\bf H}] \leq \frac{2^{D+2}\gamma\eta DN}{\sqrt{n}} + 3 \sqrt{\frac{\log \frac{2}{\delta}}{2n}}.
\]
We observe that the error bound of InstantNGP scales exponentially with dimension $D$, while the error bound of GridTD scales linearly due to their respective Lipschitz bounds. This suggests GridTD maintains stable performance and generalization across dimensions, whereas InstantNGP may struggle.\par 
\begin{table}[t]
\tabcolsep=4pt
\centering
\caption{PSNR comparison of InstantNGP (INGP) and GridTD for data inpainting with different data dimensions $D$.}
\label{tab:gen}
\begin{tabular}{ccccccc}
\hline
\multirow{2}{*}{Dimension} & \multicolumn{2}{c}{SR=0.2}  & \multicolumn{2}{c}{SR=0.1}& \multicolumn{2}{c}{SR=0.05}  \\
\cline{2-7}
 &  INGP &GridTD \;& INGP &GridTD  \;& INGP &GridTD  \\
\hline
$D=1$ &  32.54&32.54\;& 26.93&26.93\;&  25.62&25.62 \\
$D=2$ & 29.24&\bf34.38\;&26.24& \bf29.81\;&  23.85 &\bf26.50 \\
$D=3$ & 27.75&\bf37.09\;& 24.30& \bf36.87\;& 21.87&\bf36.22   \\
\hline
\end{tabular}\vspace{-0.2cm}
\end{table}
\begin{figure}[t]
\scriptsize
    \centering
    \setlength{\tabcolsep}{2pt} % 调整列间距
    \begin{tabular}{ccccc}
        \multirow{1}*{\rotatebox[origin=c]{90}{$D=1$\hspace{-1.2cm}}}&\includegraphics[width=0.105\textwidth]{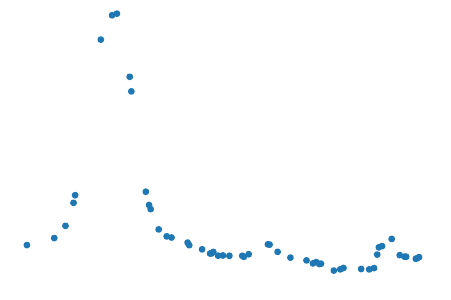} &
        \includegraphics[width=0.105\textwidth]{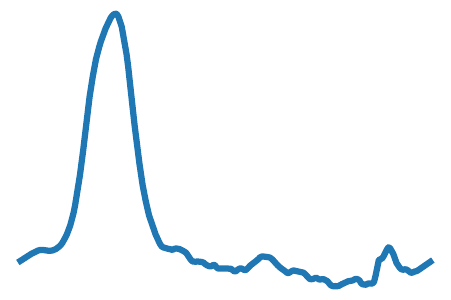} &
        \includegraphics[width=0.105\textwidth]{figs/Toy/output1D.pdf} &
        \includegraphics[width=0.105\textwidth]{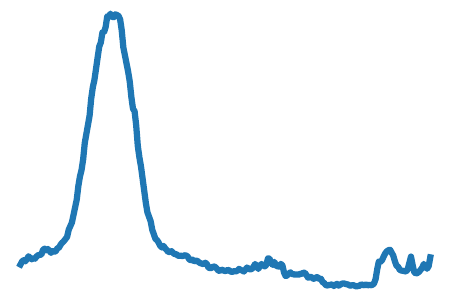} \\
        \multirow{1}*{\rotatebox[origin=c]{90}{$D=2$\hspace{-1.5cm}}}&\includegraphics[width=0.105\textwidth]{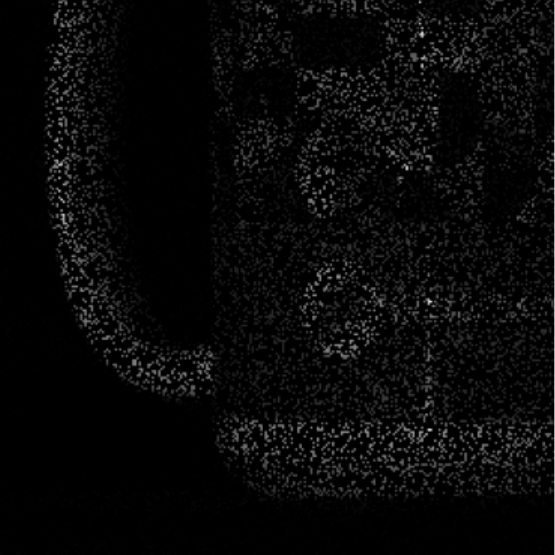} &
        \includegraphics[width=0.105\textwidth]{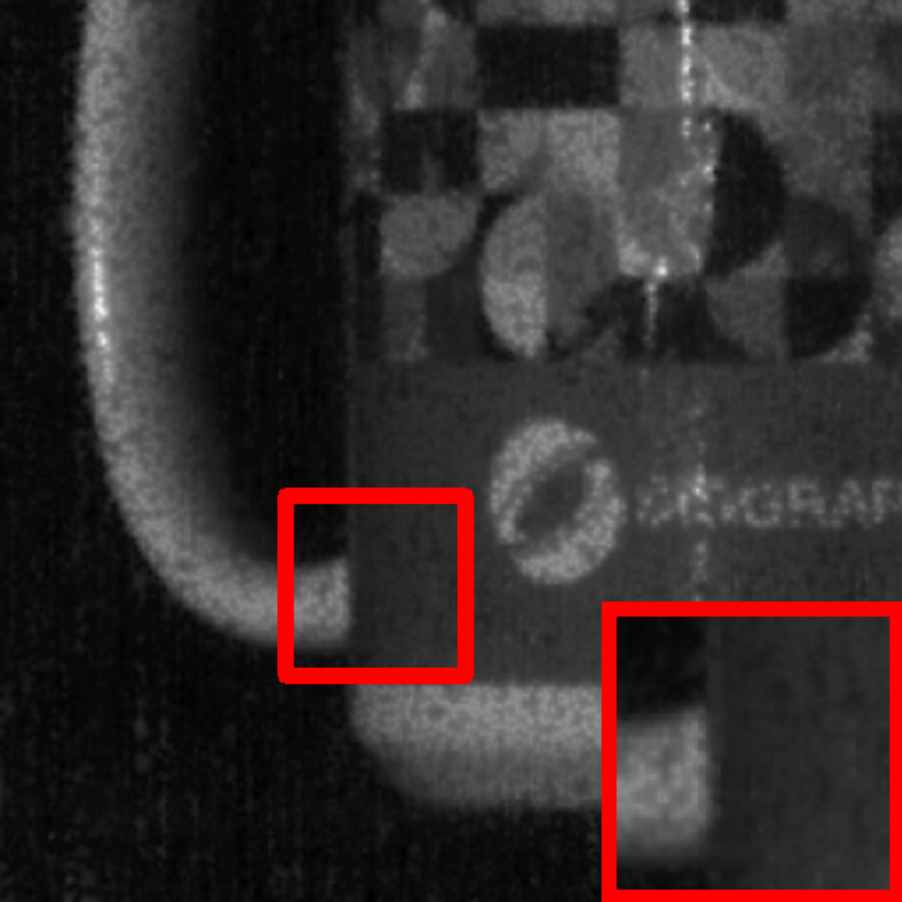} &
        \includegraphics[width=0.105\textwidth]{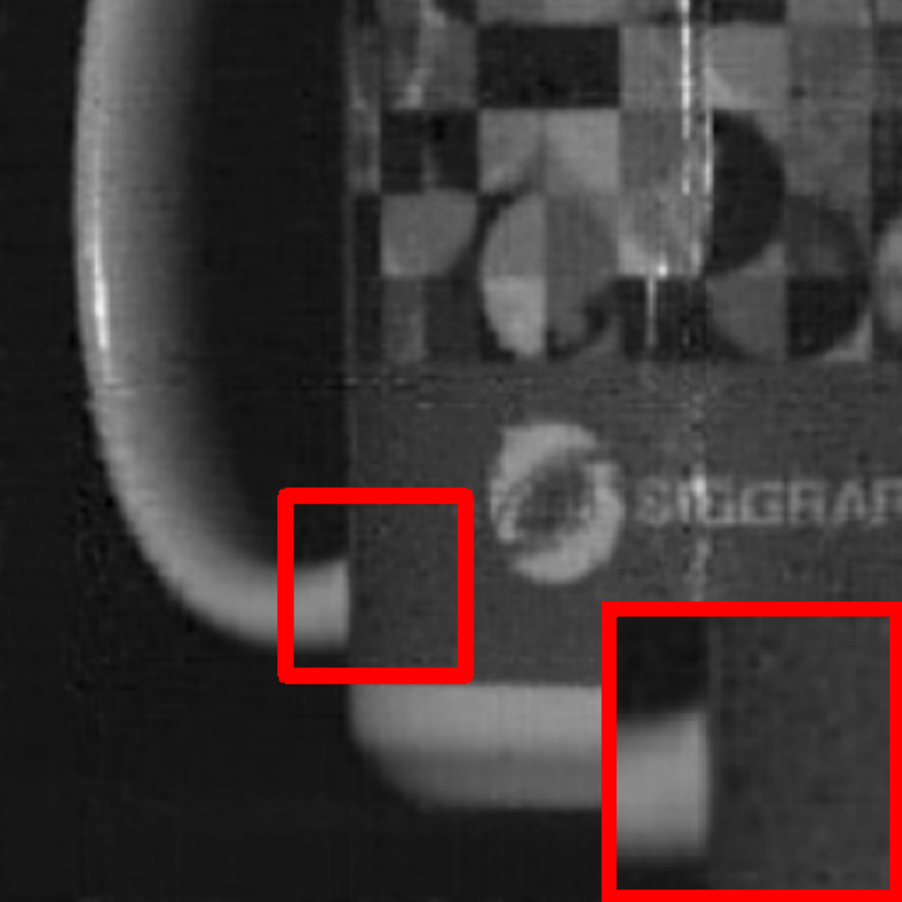} &
        \includegraphics[width=0.105\textwidth]{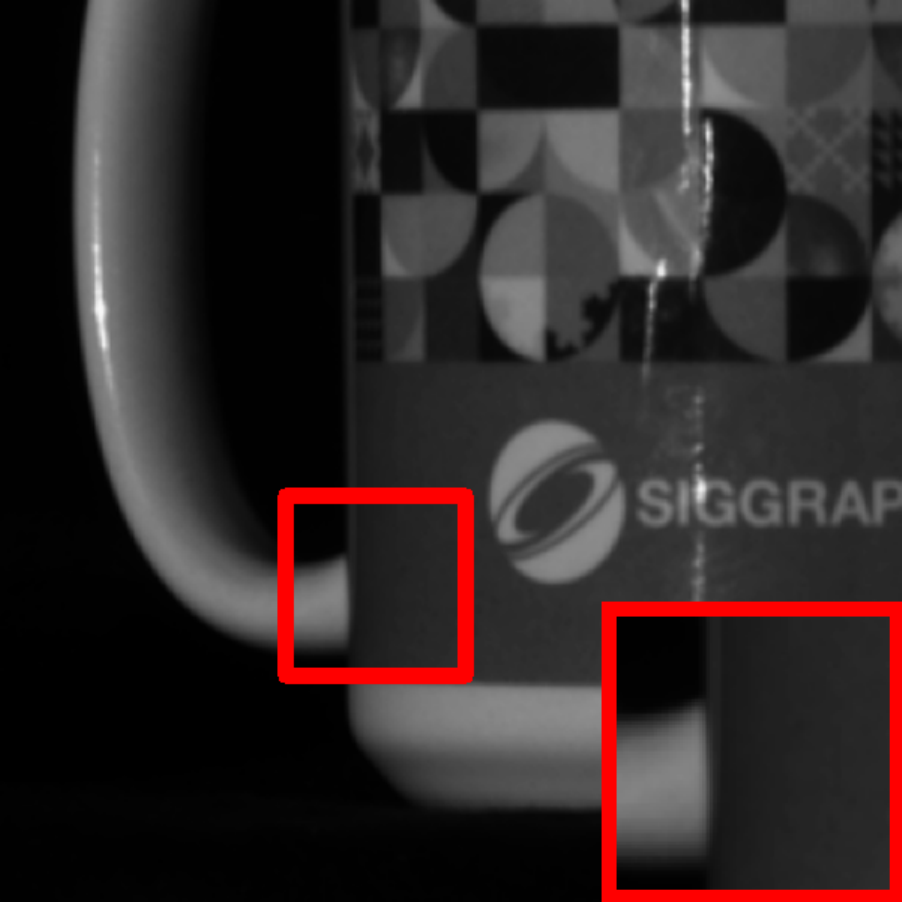} \\
        \multirow{1}*{\rotatebox[origin=c]{90}{$D=3$\hspace{-1.5cm}}}&\includegraphics[width=0.105\textwidth]{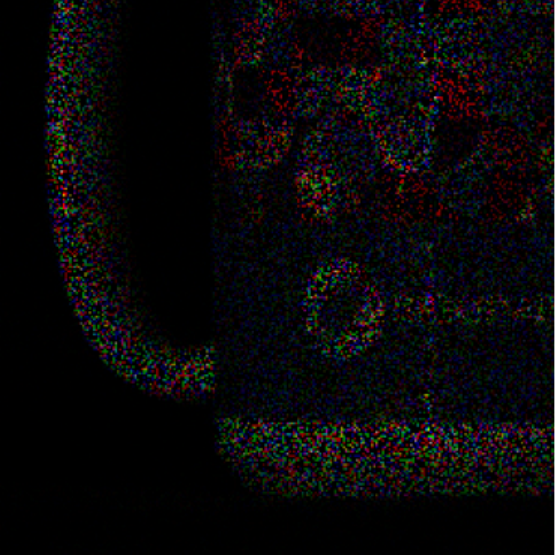} &
        \includegraphics[width=0.105\textwidth]{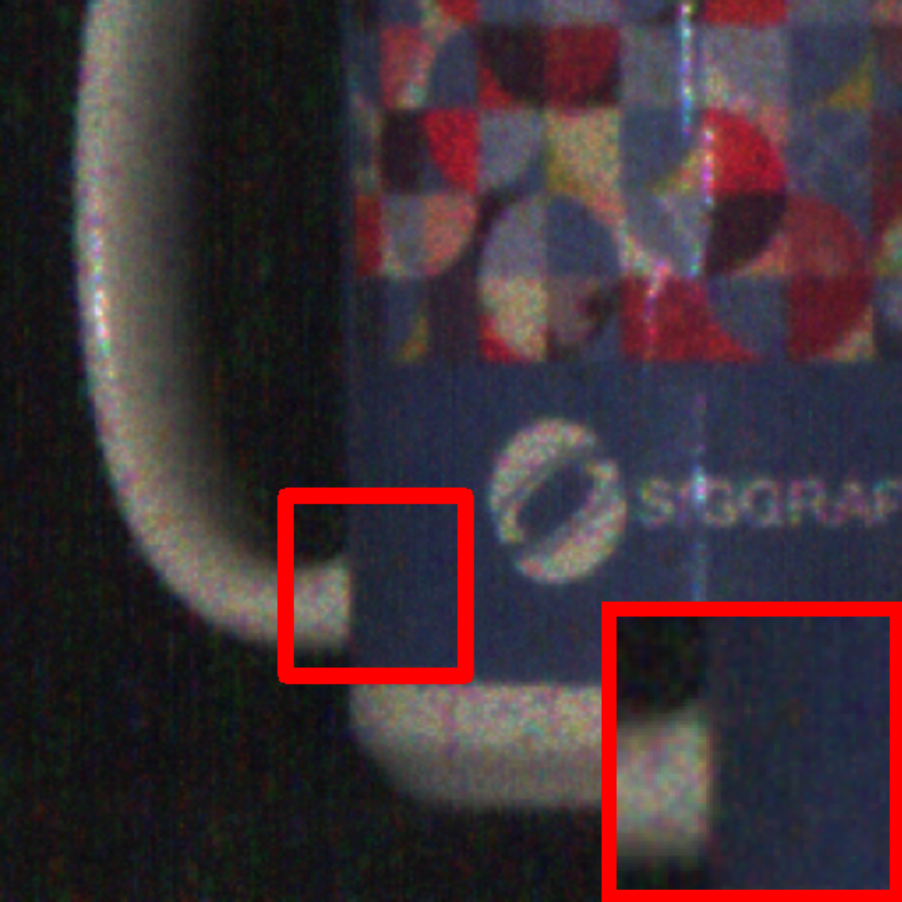} &
        \includegraphics[width=0.105\textwidth]{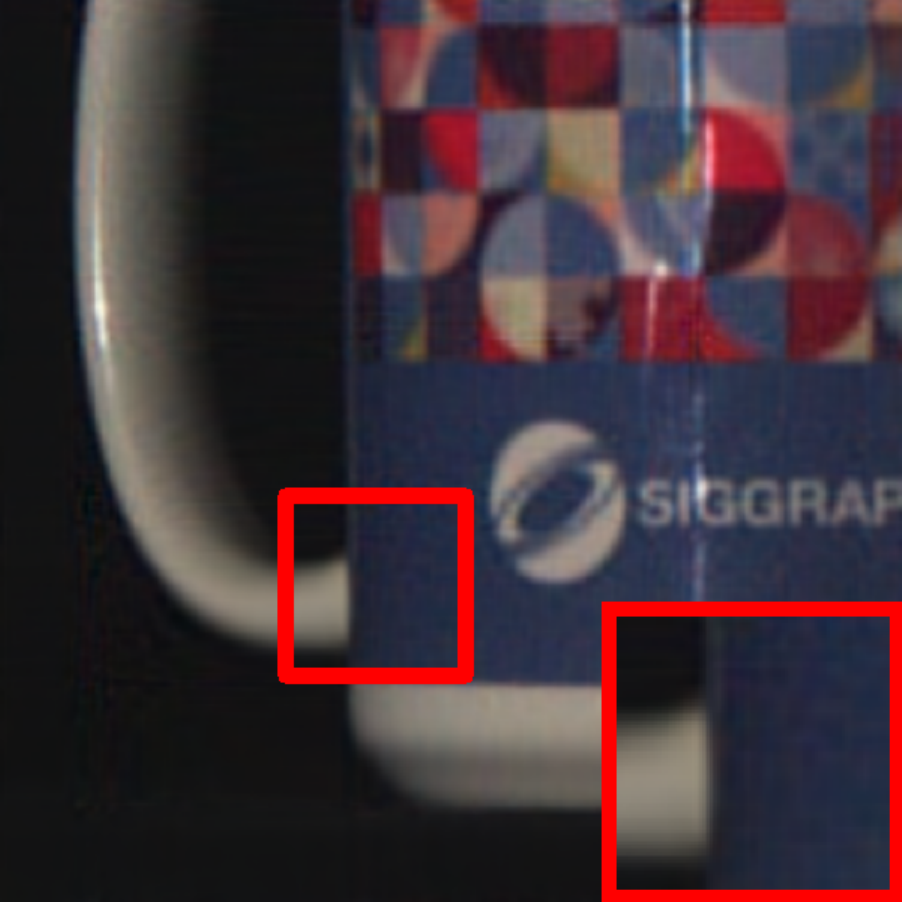} &
        \includegraphics[width=0.105\textwidth]{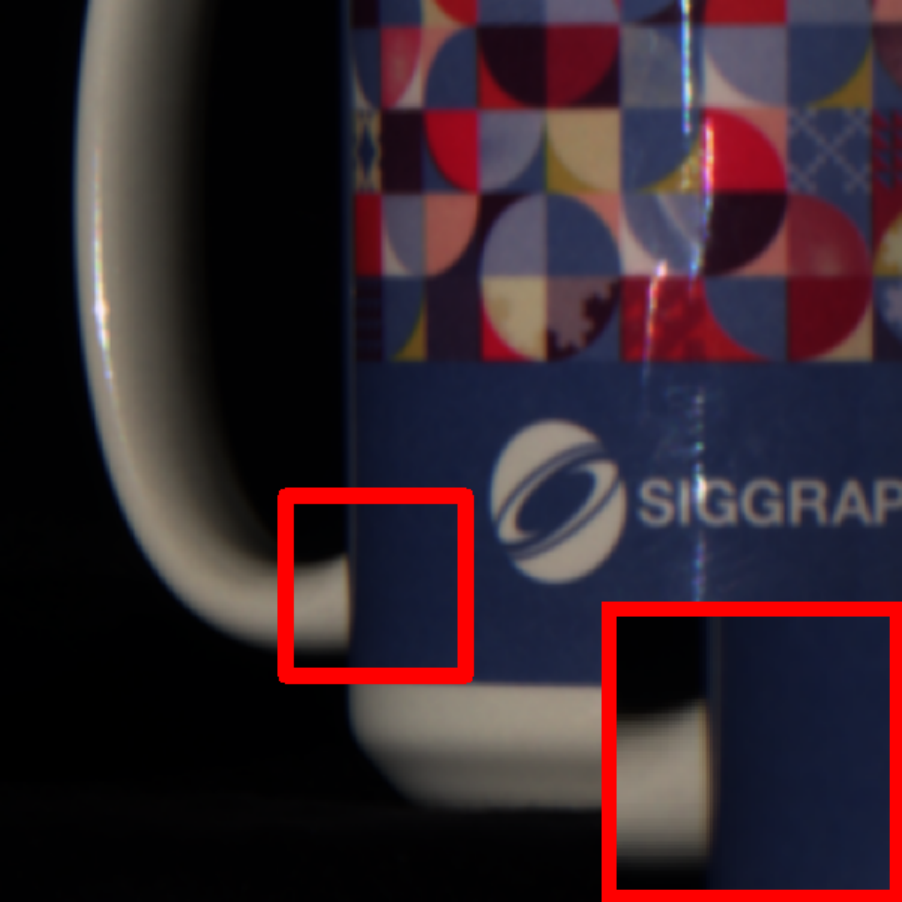} \\
    &Measurement & InstantNGP & GridTD & Original\\
    \end{tabular}
    \caption{Toy examples on the data inpainting task. From top to down list the inpainting results for 1D, 2D, and 3D signals using InstantNGP and our GridTD under the same hyperparameter configuration. GridTD enjoys favorable generalization abilities across different data dimensions by virtue of its more stable generalization error bound w.r.t. the dimension $D$.}\vspace{-0.4cm}
    \label{fig:generalize}
\end{figure}
To verify the theoretical analysis, we conduct toy experiments on the inpainting task to comparing how well GridTD and InstantNGP generalize across different dimensions (vectors, matrices, and tensors). All experimental settings, including grid tensor initialization strategies, are kept consistent across dimensions to test the robustness of different methods. The inpainting results under different data dimensions (vectors, matrices, and tensors) and different sampling rates (SRs) (the proportion of observed entries) are shown in Table \ref{tab:gen} and Fig. \ref{fig:generalize}. We can observe that while GridTD and InstantNGP perform equally well for the one-dimensional example\footnote{Indeed, GridTD recovers to the InstantNGP model when $D=1$.}, the performance of InstantNGP degrades when increasing dimensions $D=2,3$. As compared, GridTD remains stable for higher dimensions $D=2,3$. The results align with our theory that GridTD holds more stable Lipschitz smoothness and generalization error bounds when $D$ changes (Theorems \ref{theorem-2}-\ref{theorem-3}), making it more robust across dimensions. In practice, InstantNGP would require reconfigurations of the hyperparameter (such as the variance of Gaussian initialization) to adapt to different data dimensions, while GridTD is more stable and versatile for different dimensions, demonstrating GridTD’s stronger generalization ability across dimensions.
\subsection{Temporal Affine Adapter and Smooth Regularization}\label{subsec:Aff}
To improve the dynamic structure modeling of GridTD for dynamic high-dimensional data (such as temporal videos), we further propose two key technical improvements when adapting our framework to compressive imaging reconstruction.\par 
First, we introduce the temporal affine adapter, which adaptively learns temporal motion patterns (scaling, rotation, and translation) between video frames by reusing the temporal grid encoding of GridTD. Traditional tensor decomposition methods rely on the assumption that video data exhibits low-rank structures along its temporal dimension. However, complex background interference often reduces frame-to-frame correlations along the temporal dimension. This leads to deviations from an ideal low-rank tensor structure. To address this issue, we adopt a relaxed pseudo low-rank assumption, which assumes that the original video tensor can be derived from a latent low-rank tensor through frame-wise affine transformations. We then learn the affine transformation parameters adaptively during optimization. Let the generated low-rank tensor of GridTD be $\mathcal{L} \in \mathbb{R}^{n_1 \times n_2 \times n_3}$, where $n_1$ and $n_2$ denote the spatial dimensions, and $n_3$ is the number of frames. The temporal affine adapter $\mathcal{A}:\mathbb{R}^{n_1 \times n_2 \times n_3}\to\mathbb{R}^{n_1 \times n_2 \times n_3}$ applies a frame-wise affine transformation $\Lambda^{(t)}$ to each frame $\mathcal{L}[:, :, t]$ followed by bilinear interpolation, defined as
\begin{equation}\small\label{affine}
  \mathcal{A}(\mathcal{L})[i, j, t] := \mathrm{Bilinear}\left(\mathcal{L}[:, :, t],\, \Lambda^{(t)}(i, j)\right).  
\end{equation}
Here, $\mathrm{Bilinear}(\cdot, \cdot)$ denotes the bilinear interpolation operation, and $\Lambda^{(t)}(i, j)\in{\mathbb R}^2$ gives the affined position of the coordinate $(i, j)$ after the affine transformation in frame $t$. The affined tensor $\mathcal{A}(\mathcal{L})\in\mathbb{R}^{n_1 \times n_2 \times n_3}$, derived from the latent low-rank tensor $\cal L$, is designed to more accurately capture the dynamic structure between video frames. The affine transformation $\Lambda^{(t)}$ here is defined as
\[
\Lambda^{(t)}(i, j) :=
\begin{pmatrix}
s^{(t)} \cos\theta^{(t)} & -s^{(t)} \sin\theta^{(t)} & b_x^{(t)} \\
s^{(t)} \sin\theta^{(t)} & \;\;s^{(t)} \cos\theta^{(t)} & b_y^{(t)}
\end{pmatrix}
\begin{pmatrix}
i \\
j \\
1
\end{pmatrix},
\]
where the scaling parameter $s^{(t)}$ and the rotation angle $\theta^{(t)}$ are learnable parameters for each frame, directly optimized during training. However, we found that directly setting the translation parameters $b_x^{(t)},b_y^{(t)}$ as learnable parameters fails to capture the temporal translation dependencies across frames. Hence, we propose the temporal affine adapter by introducing additional INRs to continuously learn translation parameters $b_x^{(t)},b_y^{(t)}$, by reusing the temporal grid encoding ${\bf H}_3(\cdot):[0,1)\to{\mathbb R}^R$ of GridTD (defined in Definition \ref{def:gtd}). This effectively leverages the temporal feature of GridTD to enhance the robustness of the affine transformation. Specifically, the translation parameters $(b_x^{(t)}, b_y^{(t)})$ are generated by two INRs $f_x(\cdot),f_y(\cdot):{\mathbb R}^R\to {\mathbb R}$ conditioned on the temporal grid encoding ${\bf H}_3(\cdot)$:
\[
b_x^{(t)} = f_x\left(\mathbf{H}_3\left(\frac{t-1}{n_3}\right)\right),\; b_y^{(t)} = f_y\left(\mathbf{H}_3\left(\frac{t-1}{n_3}\right)\right),
\]
where $\mathbf{H}_3(\frac{t-1}{n_3})$ denotes the encoded temporal feature of frame $t$, and $f_x(\cdot)$ and $f_y(\cdot)$ are two INRs with shared inputs but independent parameters. The INRs are optimized along with GridTD during unsupervised reconstruction. The temporal affine adapter introduced above utilizes the conditional dependency and intrinsic correlation along the temporal dimension by reusing the temporal grid encoding $\mathbf{H}_3(\cdot)$, hence enabling better temporal coherence across frames. In experiments, the temporal affine adapter is used for video SCI reconstruction.\par
Furthermore, to better exploit the spatial-spectral or spatial-temporal smoothness of the data, we further leverage a smooth regularization term using the spatial-spectral total variation (SSTV). The spatial TV for an image ${\cal X}\in{\mathbb R}^{n_1\times n_2\times n_3}$ is defined as $\mathrm{TV}({\cal X}) = \| {\bf D}_x {\cal X}  \|_{\ell_1} + \| {\bf D}_y{\cal X} \|_{\ell_1}$, where ${\bf D}_x {\cal X}= {\cal X}[2:n_1,:,:] - {\cal X}[1:n_1-1,:,:]$ and ${\bf D}_y {\cal X} = {\cal X}[:,2:n_2,:] - {\cal X}[:,1:n_2-1,:]$ are discrete difference tensors. The second-order SSTV for an image ${\cal X}$ is defined as
\begin{equation}\small
\label{SSTV}
\mathrm{SSTV}({\cal X}) = \| {\bf D}_x ({\bf D}_z {\cal X})  \|_{\ell_1} + \| {\bf D}_y({\bf D}_z {\cal X}) \|_{\ell_1},
\end{equation}
where ${\bf D}_z{\cal X} = {\cal X}[:,:,2:n_3] - {\cal X}[:,:,1:n_3-1]$ is the spectral/temporal difference tensor. The SSTV captures the second-order spatial-temporal or spatial-spectral smoothness of high-dimensional data to improve robustness against noise \cite{sstv}. We impose the spatial TV and SSTV on the recovered tensor data to further enhance such robustness.
\subsection{GridTD-Based ADMM Algorithm for CI Reconstruction}\label{subsec:Alg}
For compressive imaging reconstruction, we integrate the GridTD representation model with the plug-and-play alternating direction method of multipliers (PnP-ADMM) algorithmic framework \cite{PnP}. For simplicity, we take the video SCI \eqref{eq:sci_model_alt} as an example to detail the PnP-ADMM algorithm, and the PnP-ADMM for dynamic MRI and spectral SCI can be deduced analogously.
\begin{algorithm}[t]
    \renewcommand\arraystretch{1.1}
    \caption{{GridTD-based plug-and-play ADMM algorithm for video snapshot compressive imaging.}}\label{alg:GridTD}
    \begin{algorithmic}[1]
        \renewcommand{\algorithmicrequire}{\textbf{Input:}} 
        \REQUIRE Compressed measurement $\mathbf{Y}$, sensing masks $\{\mathcal{M}_t\}$, maximum iteration $K$, $\kappa>1$;
        \renewcommand{\algorithmicrequire}{\textbf{Initialization:}} 
        \REQUIRE Initialize $\mathcal{X}^1 = 0$, $\mathcal{V}^1 = 0$, $\mathcal{U}^1 = 0$, randomly initialize GridTD parameters $\Theta$;
        \FOR{$k = 1$ to $K$}
             \STATE\textbf{Update $\mathcal{X}^{k+1}$} via closed-form solution \eqref{eq:close_form};
            \STATE\textbf{Update ${\cal V}^{k+1}$} via optimizing the GridTD parameters $\Theta$ according to \eqref{eq:v_subproblem_tensor} using  backpropagation;
             \STATE {\bf Update Lagrange multiplier} via $\mathcal{U}^{k+1} = \mathcal{U}^k + \mathcal{X}^{k+1} - \mathcal{V}^{k+1}$, $\rho^{k+1} = \kappa \rho^k$;
        \ENDFOR
        \renewcommand{\algorithmicrequire}{\textbf{Output:}} 
        \REQUIRE The reconstruction ${\cal X}$.
    \end{algorithmic}
\end{algorithm}
By leveraging the proposed GridTD as the high-dimensional image representation, the optimization model for the video SCI problem \eqref{eq:sci_model_alt} is formulated as
\begin{equation}\small\small
\min_{{\cal X},{\cal V}} \frac{1}{2}\left\|{\bf Y}-\sum_{t=1}^{n_3} \mathcal{M}_t \odot \mathcal{X}_t\right\|_F^2+{\rm GridTD}({\cal V}),\;{\rm s.t.}\;{\cal X}={\cal V},
\label{eq:pnp_admm}
\end{equation}
where $\mathbf{Y} \in \mathbb{R}^{n_1 \times n_2}$ denotes the compressed measurement, $\mathcal{X} \in \mathbb{R}^{n_1 \times n_2 \times n_3}$ denotes the desired reconstruction, $\mathcal{M}_t$ denotes the $t$-th mask corresponding to the $t$-th frame $\mathcal{X}_t$, $\mathcal{V}$ is an auxiliary variable, and ${\rm GridTD}({\cal V})$ denotes an implicit regularizer associated with the GridTD model. The constrained problem can be decomposed into subproblems via ADMM:
\begin{equation}\small\label{admm_eqs}
\small
\left\{
\begin{aligned}
\mathcal{X}^{k+1} &= \arg\min_{\mathcal{X}} \ \frac{1}{2} \left\| \mathbf{Y} - \sum_{t=1}^{n_3} \mathcal{M}_t \odot \mathcal{X}_t \right\|_F^2 + \frac{\rho^k}{2} \left\| \mathcal{X} - \mathcal{V}^k + \mathcal{U}^k \right\|_F^2, \\
\mathcal{V}^{k+1} &= \arg\min_{\mathcal{V}} \  \text{GridTD}(\mathcal{V}) + \frac{\rho^k}{2} \left\| \mathcal{V} - \mathcal{X}^{k+1} - \mathcal{U}^k \right\|_F^2, \\
\mathcal{U}^{k+1} &= \mathcal{U}^k + \mathcal{X}^{k+1} - \mathcal{V}^{k+1},\;\rho^{k+1} = \kappa \rho^k,
\end{aligned}
\right. 
\end{equation}
where $\mathcal{U}$ is the Lagrange multiplier, $\rho$ is a penalty parameter, and $\kappa>1$ is a constant. For each frame $t$, the exact solution of the $\mathcal{X}$-subproblem is:
  \begin{equation}\small
  \begin{split}
      \mathcal{X}_t^{k+1} &= \frac{1}{\rho^k} {\cal B}_t - \frac{\sum_{t=1}^{n_3} {\cal M}_t \odot {\cal B}_t} {(\rho^k)^2 + \rho^k  \sum_{t=1}^{n_3} {\cal M}_t^2} \odot {\cal M}_t ,
  \end{split}
    \label{eq:close_form}
\end{equation}
where ${\cal B}_t = \rho^k \left( \mathcal{V}_t^k - \mathcal{U}_t^k \right) + {\cal M}_t \odot \mathbf{Y}$ and the division between matrices is performed element-wisely. \par 
The auxiliary tensor ${\cal V}\in{\mathbb R}^{n_1\times n_2\times n_3}$ is parameterized by the GridTD model under tensor parallelization, and hence the $\mathcal{V}$-subproblem is represented as the update of GridTD model parameters. Specifically, $\mathcal{V}$ is represented by
\begin{equation}\small\label{eq:V}
{\cal V}:={\cal A}({\cal L}),\;{\cal L}[i,j,t]:=(g_\theta\circ{\bf H}_{\rm GridTD})\left(\frac{i-1}{n_1},\frac{j-1}{n_2},\frac{t-1}{n_3}\right),
\end{equation}
where ${\cal A}({\cal L})$ denotes the affine transformation \eqref{affine} derived from the latent low-rank tensor $\cal L$, and $\cal L$ is parameterized by the GridTD model $(g_\theta\circ{\bf H}_{\rm GridTD})(\cdot):{\mathbb R}^3\to{\mathbb R}$, in which $g_\theta:{\mathbb R}^R\to{\mathbb R}$ is a lightweight MLP and ${\bf H}_{\rm GridTD}:{\mathbb R}^3\to{\mathbb R}^R$ is the GridTD encoding defined in Definition \ref{def:gtd}. For the tensor data ${\cal L}$, the tensor parallelism in Lemma \ref{def:gtdp} can be employed for efficient encoding computation. Hence, the learnable parameters here include the MLP parameters in $g_\theta(\cdot)$, the grid parameters in ${\bf H}_{\rm GridTD}(\cdot)$ (stored in learnable hash tables), and the learnable parameters in the temporal affine adapter ${\cal A}(\cdot)$ including the INR parameters $f_x(\cdot),f_y(\cdot)$ and the scaling/rotation parameters. We use $\Theta$ to denote all these learnable parameters. Consequently, the $\cal V$-subproblem in the ADMM \eqref{admm_eqs} is equivalent to update the network parameters $\Theta$ as: 
\begin{equation}\small
\begin{aligned}
\min_{\Theta} \  \frac{\rho^k}{2} \left\| \mathcal{V}_\Theta - \mathcal{X}^{k+1} - \mathcal{U}^k \right\|_F^2+\lambda_1{\rm TV}({\cal V}_\Theta)+\lambda_2{\rm SSTV}({\cal V}_\Theta),
\label{eq:v_subproblem_tensor}
\end{aligned}
\end{equation}
where ${\cal V}_\Theta$ denotes the tensor obtained in \eqref{eq:V}, which is parameterized by learnable parameters $\Theta$. We introduce the smooth regularizations $\text{TV}({\cal V}_\Theta)$ and $\text{SSTV}({\cal V}_\Theta)$ defined in \eqref{SSTV} as additional priors imposed on the tensor ${\cal V}_\Theta$. The optimization of \eqref{eq:v_subproblem_tensor} can be conducted by standard gradient descent. Following related works \cite{miaotci,feng2025spatiotemporal}, we leverage the Adam optimizer to update all parameters in $\Theta$ within the $\cal V$-subproblem. Finally, the Lagrange multiplier $\cal U$ and the penalty parameter $\rho$ are updated according to \eqref{admm_eqs}.\par 
\begin{figure*}[t]
\scriptsize
    \centering
    \setlength{\tabcolsep}{2pt} % 调整列间距
    \begin{tabular}{cccccccc}
    % Measurement & GAP-TV & PnP-DIP & DVP& SCI-BDVP(GAP) & InstantNGP&GridTD & Original\\
        \includegraphics[width=0.117\textwidth]{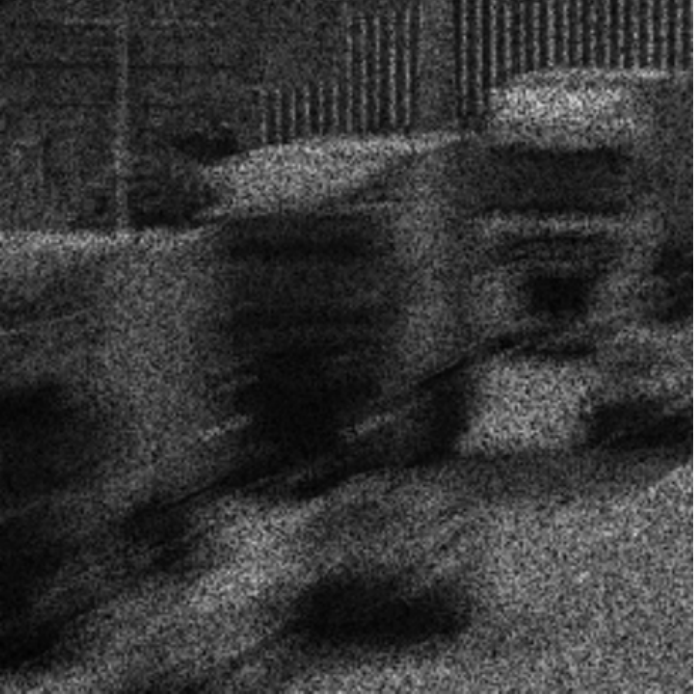} &
        \includegraphics[width=0.117\textwidth]{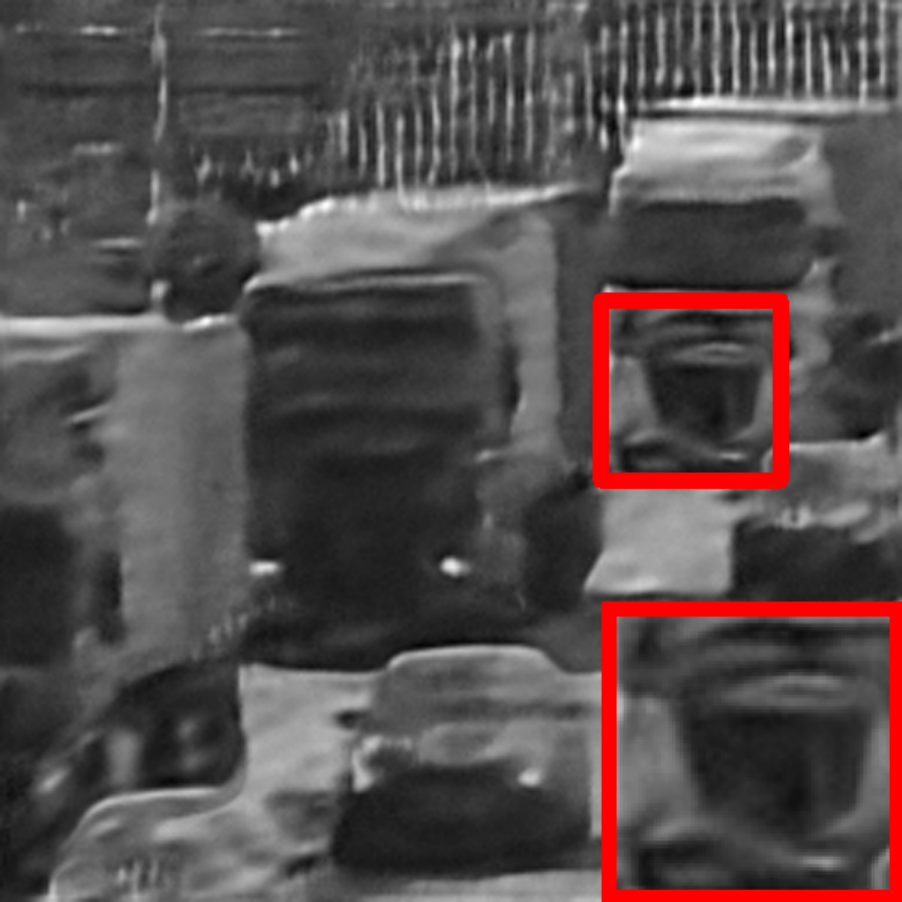} &
        \includegraphics[width=0.117\textwidth]{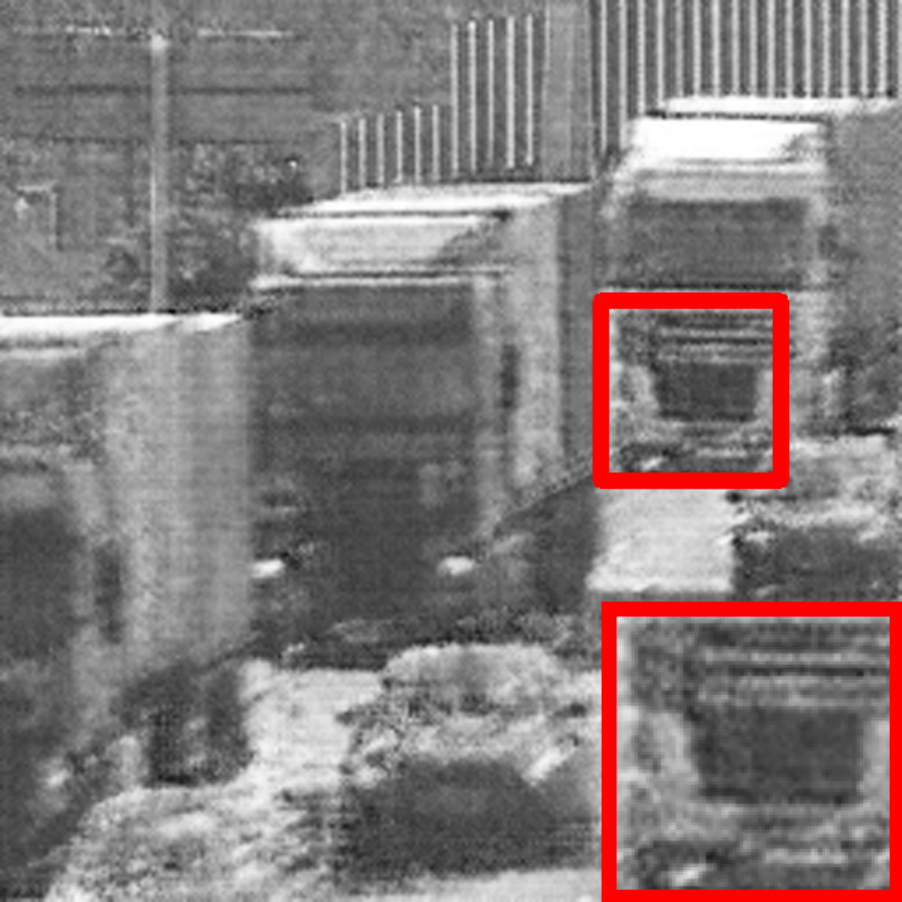} &
        \includegraphics[width=0.117\textwidth]{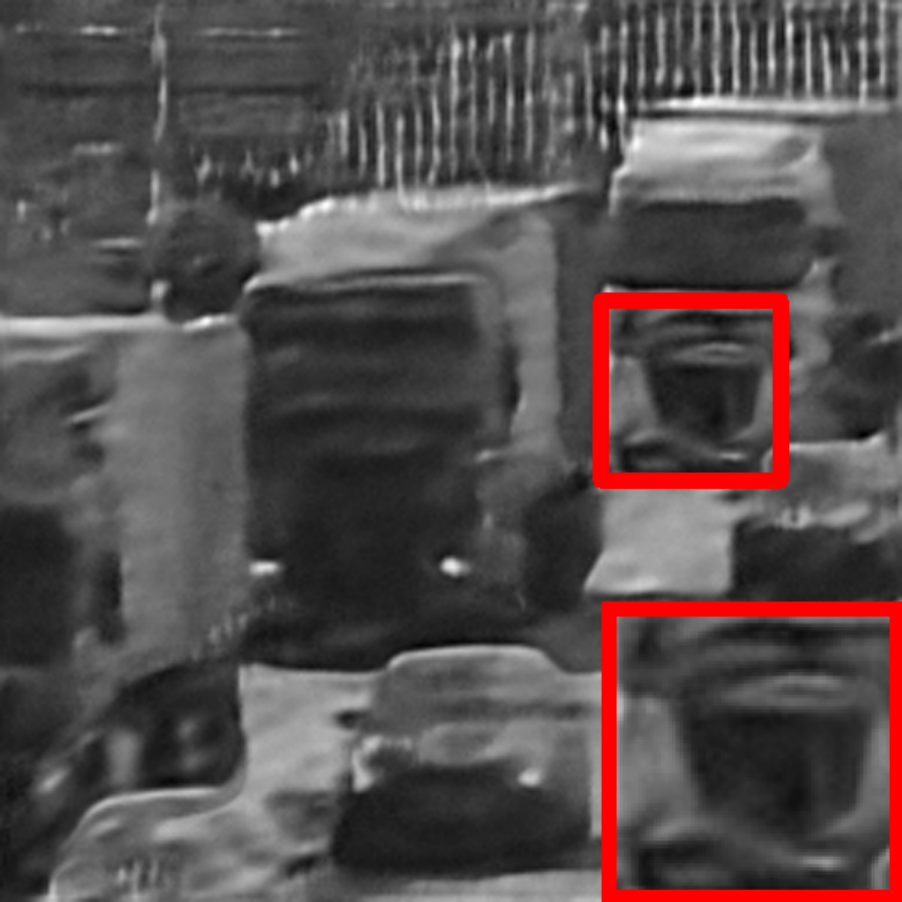} &
        \includegraphics[width=0.117\textwidth]{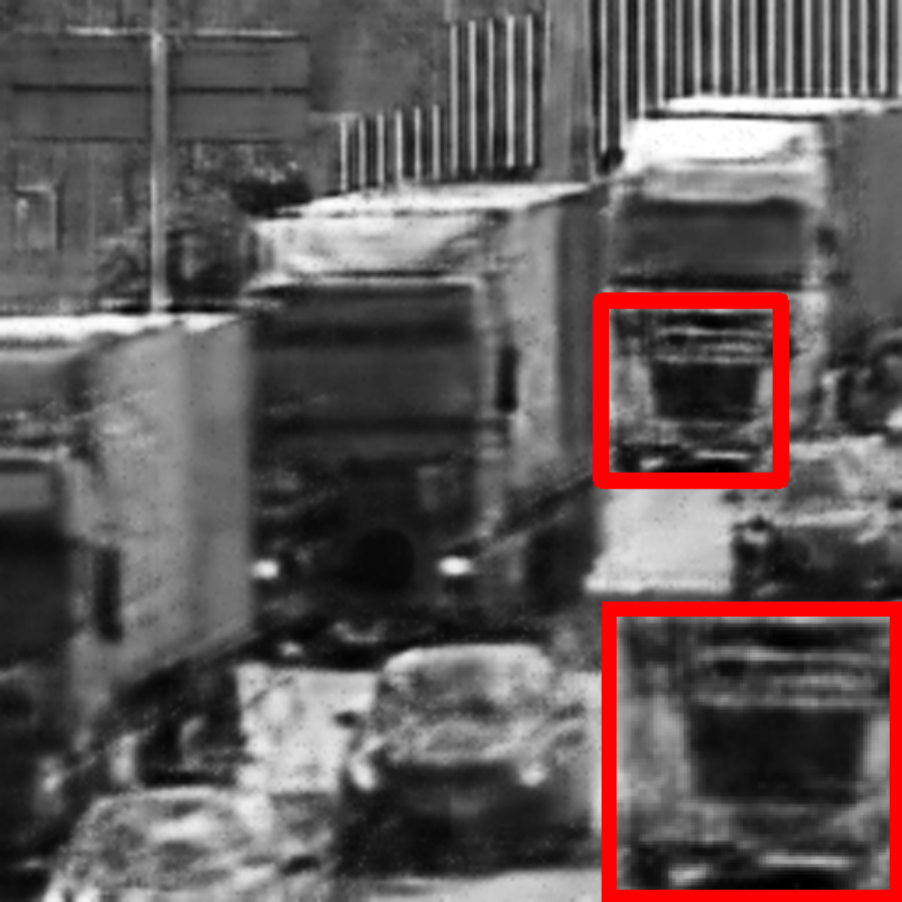}&
        \includegraphics[width=0.117\textwidth]{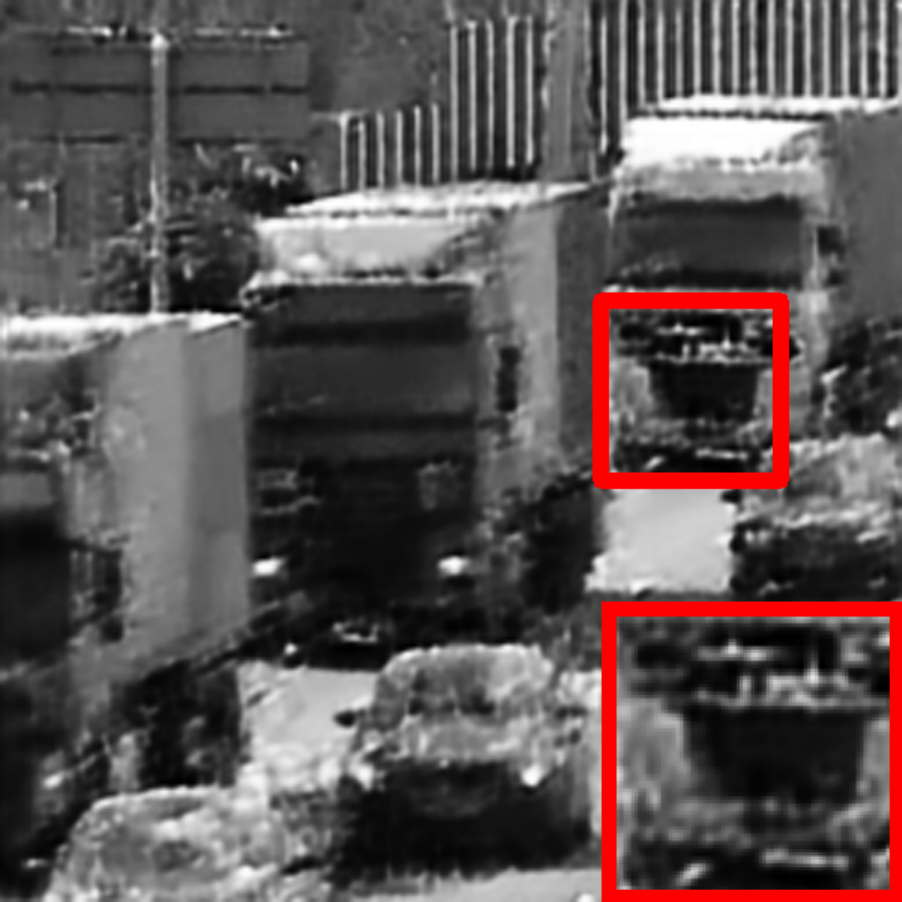} &

        \includegraphics[width=0.117\textwidth]{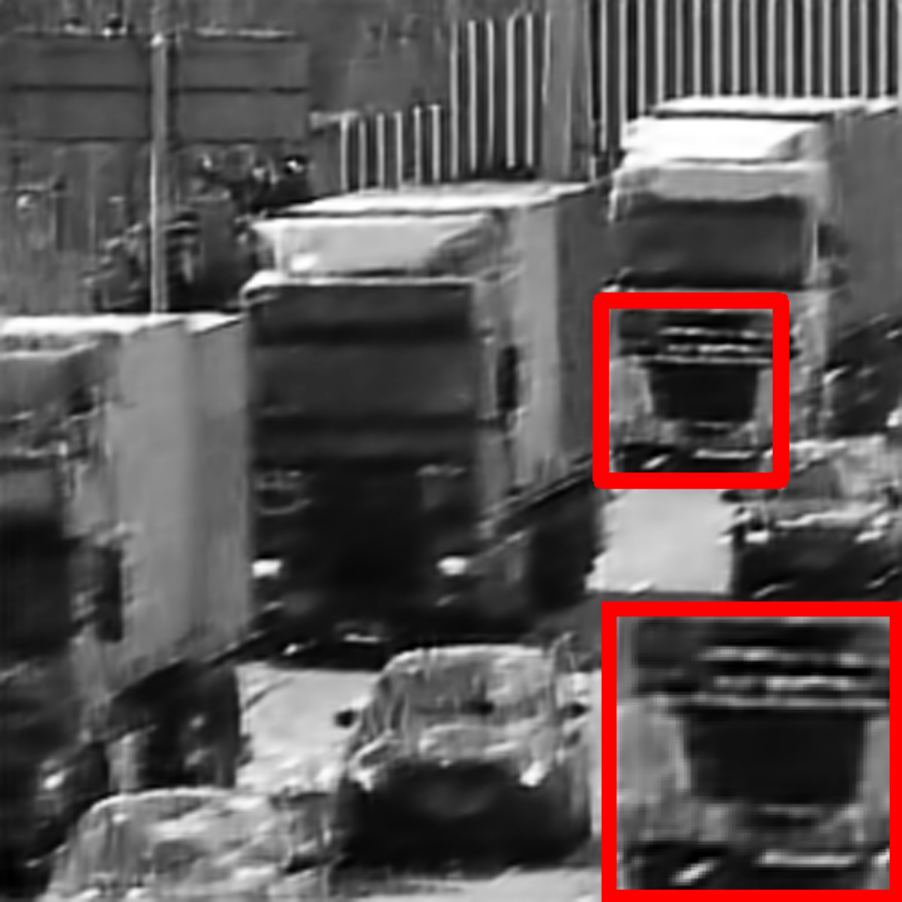} &
        \includegraphics[width=0.117\textwidth]{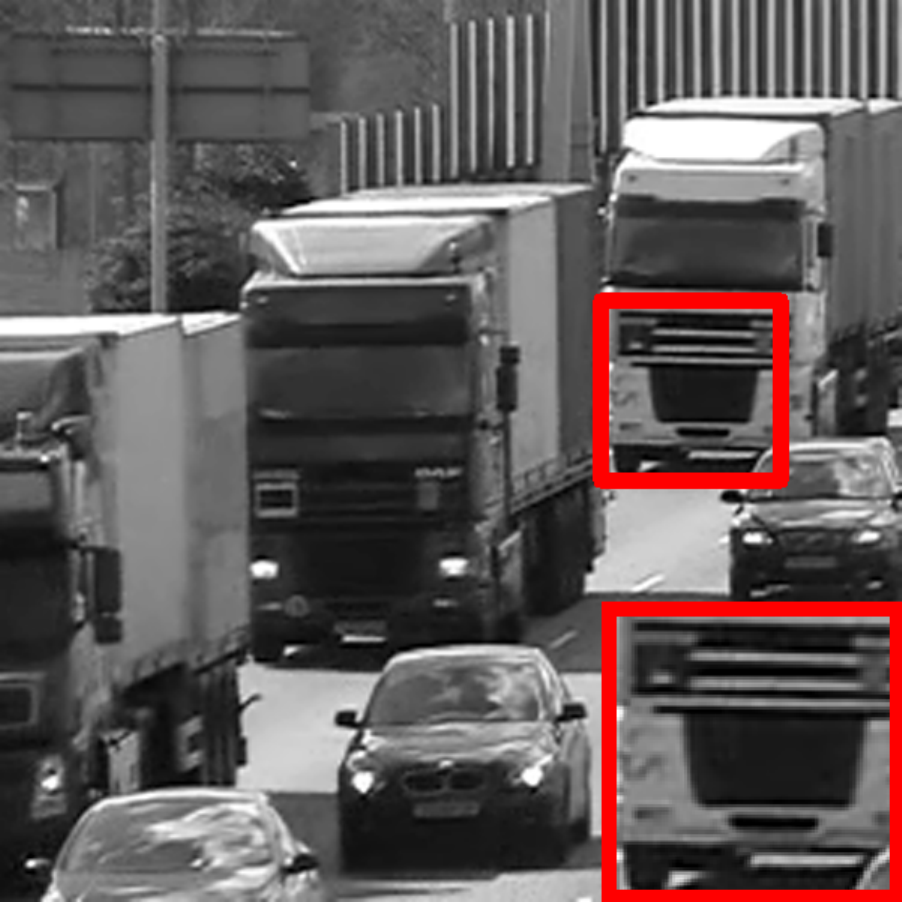} \\
        \includegraphics[width=0.117\textwidth]{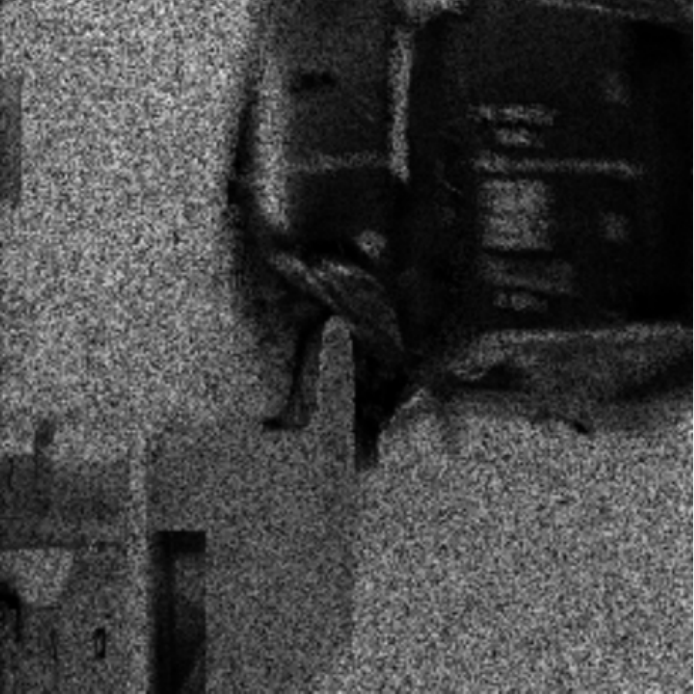} &
        \includegraphics[width=0.117\textwidth]{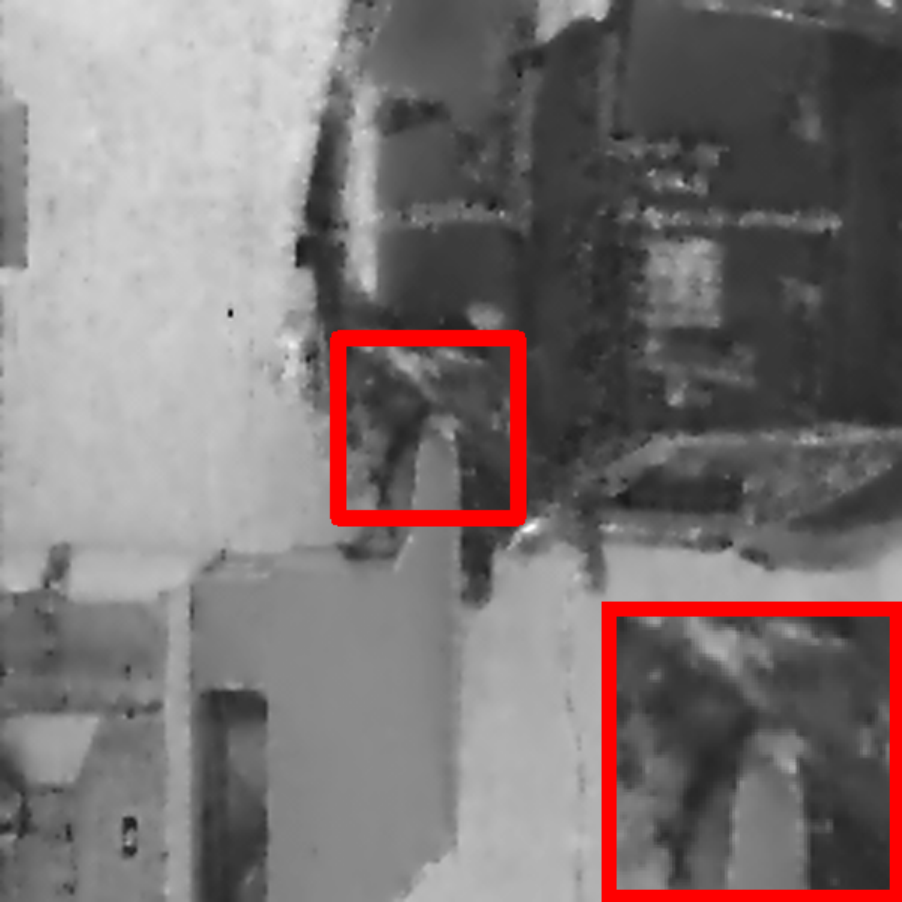} &
        \includegraphics[width=0.117\textwidth]{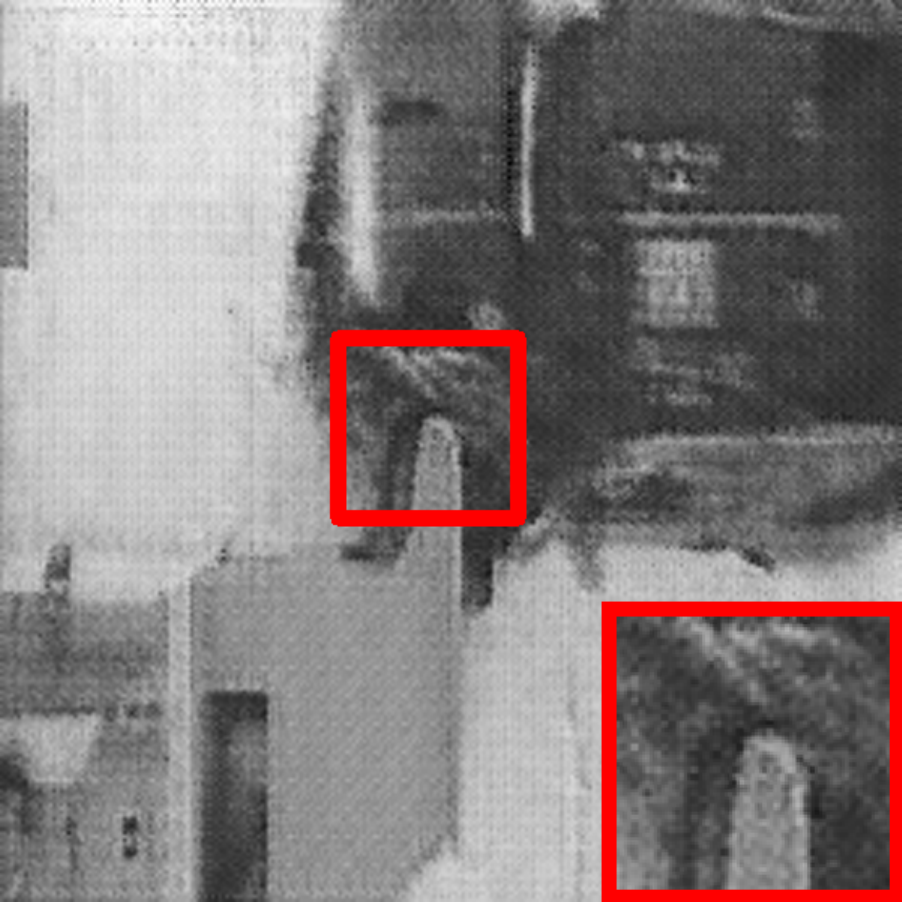} &
        \includegraphics[width=0.117\textwidth]{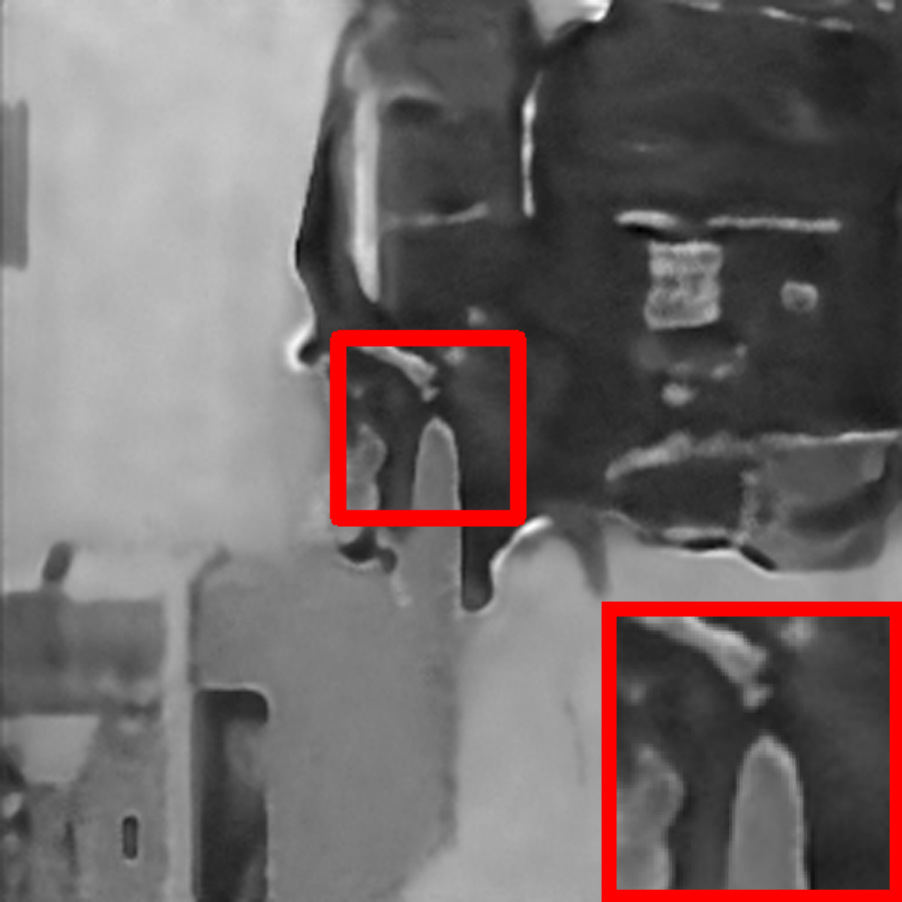} &
        \includegraphics[width=0.117\textwidth]{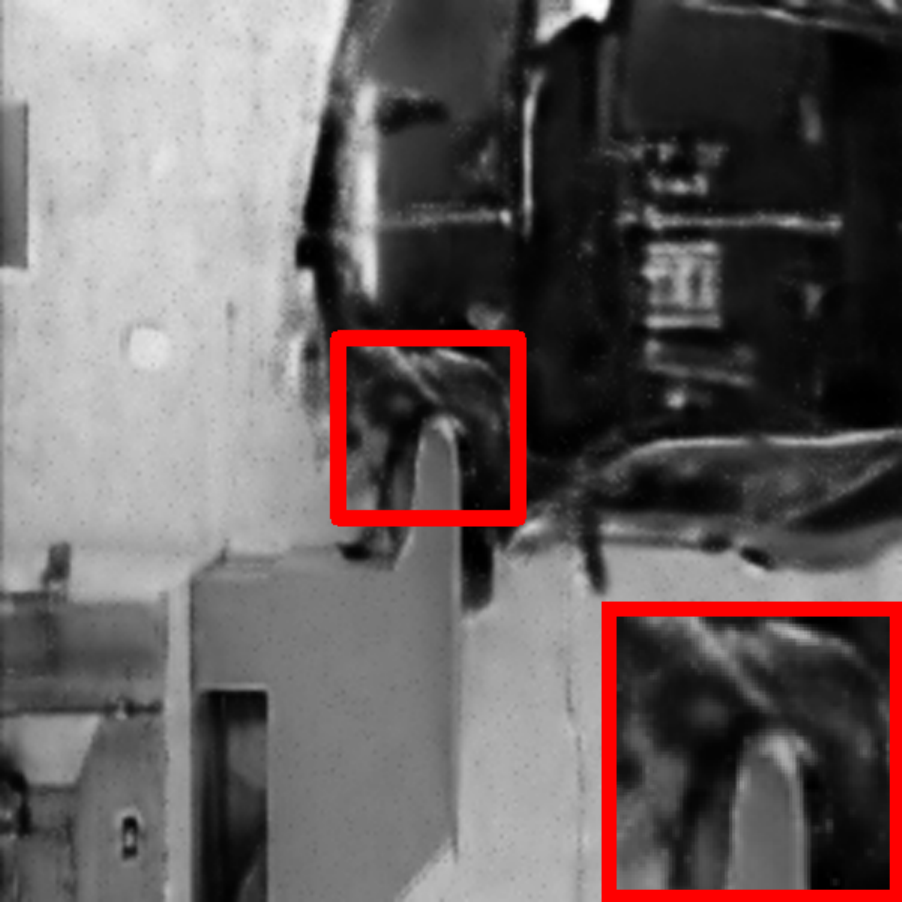}&
        \includegraphics[width=0.117\textwidth]{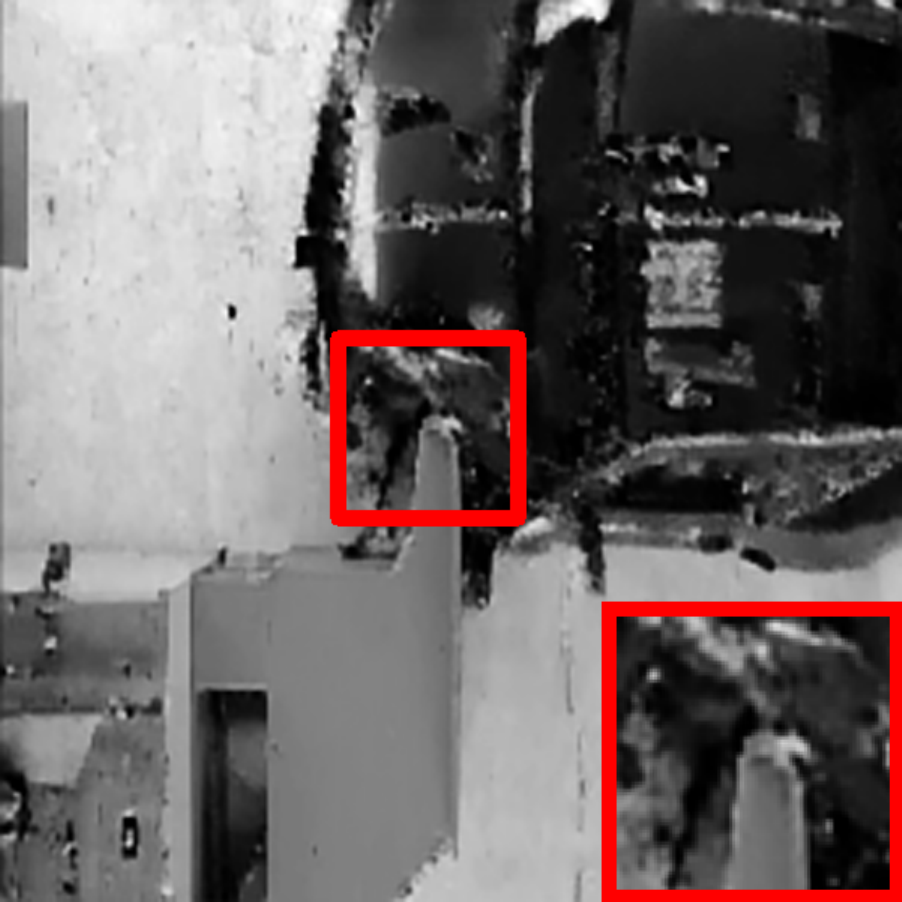} &

\includegraphics[width=0.117\textwidth]{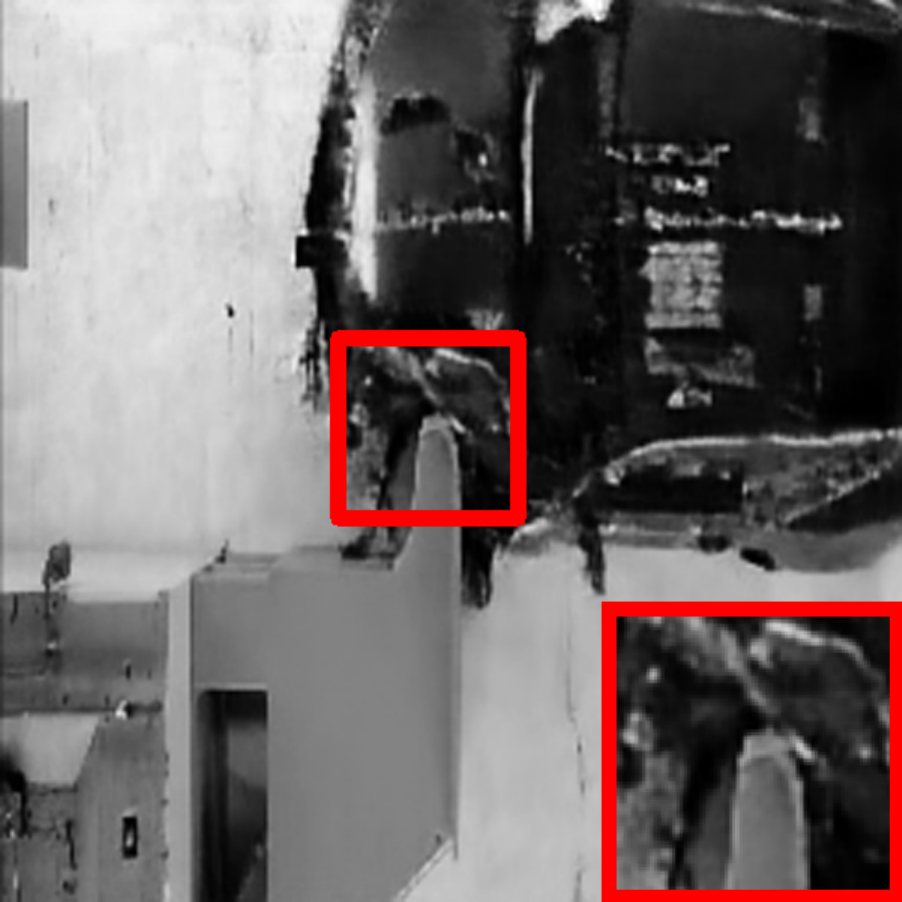} &
        \includegraphics[width=0.117\textwidth]{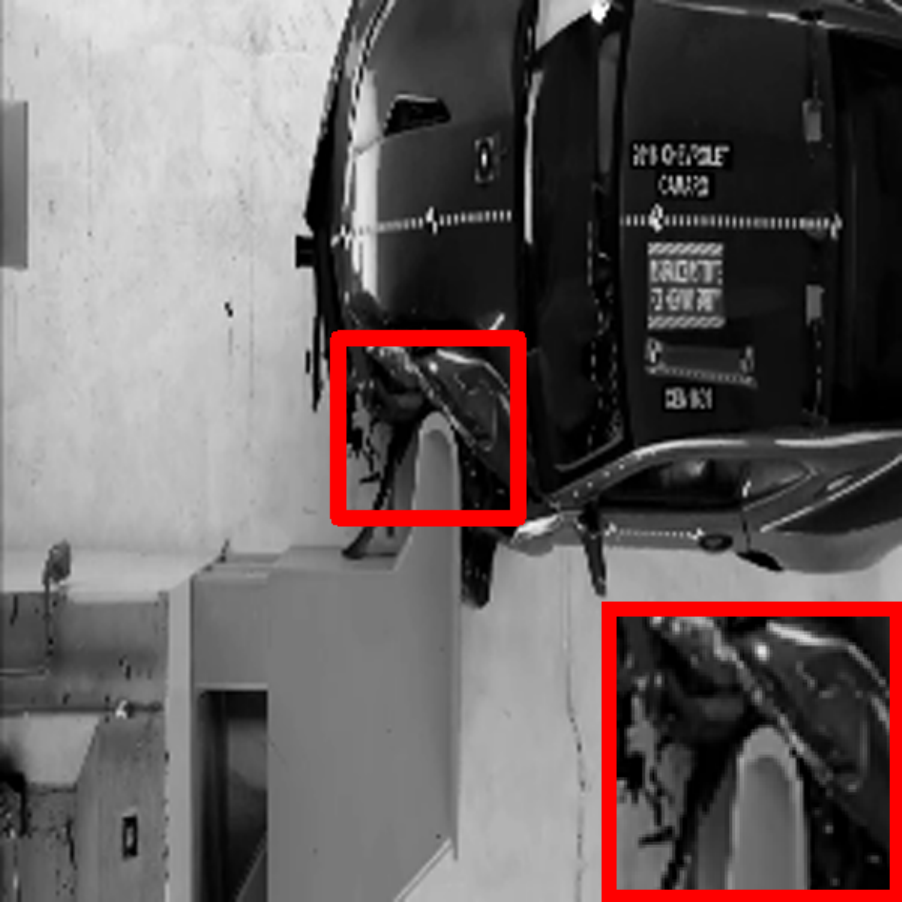} \\
\vspace{-0.2cm}
    Measurement & GAP-TV & PnP-DIP & DVP& SCI-BDVP(GAP) & InstantNGP&GridTD & Original
    \end{tabular}
    \caption{
Reconstruction comparison using various methods on two video frames for video SCI. The top row shows the results for the Traffic scene, and the bottom row shows the results for the Crash scene.
}\vspace{-0.1cm}
    \label{fig:video}
\end{figure*}
\begin{table*}[t]
\tabcolsep=7pt
\centering
\caption{Quantitative results on the six grayscale simulation benchmark videos for video SCI. (PSNR/SSIM)}
\label{table:video}
\begin{tabular}{lcccccccc}
\hline
\textbf{Method} & \textbf{Aerial} & \textbf{Crash} & \textbf{Drop} & \textbf{Kobe} & \textbf{Runner} & \textbf{Traffic} & \textbf{Average} &Average time\\ 
\hline
GAP-TV          & 25.04/0.828  & 24.67/0.827  & 34.32/0.966  & 26.71/0.843  & 29.58/0.911 & 20.76/0.702  & 26.85/0.846  &0.875s\\  
PnP-DIP         & 24.53/0.729  & 23.56/0.656  & 31.20/0.909  & 23.99/0.664  & 27.57/0.791  & 21.94/0.662  & 25.47/0.735  &429.80s \\ 
DVP  &23.04/0.689 &23.33/0.718 &31.75/0.921 &26.10/0.765 &29.09/0.842 &20.38/0.589 & 25.62/0.754  &326.68s\\
Factorized-DVP  & \underline{26.84}/\underline{0.860}  & \underline{26.05}/0.850 & 36.69/0.970  & 25.54/0.740  & 30.76/0.890  & \underline{23.38}/0.760  & 28.21/0.845  & 951.20s\\ 
SCI-BDVP (GAP)  & 26.01/0.829  & 25.57/0.819 & 40.25/\underline{0.983}  & 28.20/0.881 & 34.38/0.954  & 22.87/0.759  &  29.55/0.871 &5342.59s\\ 

InstantNGP& 26.27/0.858  & 25.56/\underline{0.873} & \underline{40.80}/\textbf{0.989}  & \underline{28.92}/\underline{0.888} & \underline{34.50}/\underline{0.955}  & 22.70/\underline{0.773}  &  \underline{29.79}/\underline{0.889} & 324.02s\\ 

{\bf GridTD}            & \textbf{27.12}/\textbf{0.873}  & \textbf{26.69}/\textbf{0.892} & \textbf{41.55}/\textbf{0.989}  & \textbf{29.23}/\textbf{0.896} & \textbf{35.22}/\textbf{0.957} &\textbf{24.00}/\textbf{0.808} & \textbf{30.64}/\textbf{0.903}  &128.72s \\ 
\hline
\end{tabular}\vspace{-0.2cm}
\end{table*}
The PnP-ADMM algorithm of GridTD for video SCI is illustrated in Algorithm \ref{alg:GridTD}. Under mild assumptions, the PnP-ADMM \eqref{admm_eqs} induced by the GridTD model admits a fixed point convergence \cite{PnP}. We first introduce the Lipschitz smoothness of the SCI fidelity term, and then illustrate the convergence result based on the Lipschitz smoothness.
\begin{lemma}[Lipschitz smoothness]\label{lemma:2}Consider the SCI fidelity term function $f(\mathcal{X}) := \frac{1}{2} \left\|\mathbf{Y} -\sum_{t=1}^{n_3} \mathcal{M}_t \odot \mathcal{X}_t \right\|_F^2$, where $\mathcal{X} \in \mathbb{R}^{n_1 \times n_2 \times n_3}$, each mask satisfies $\mathcal{M}_t \in \{0,1\}^{n_1 \times n_2}$, and $\mathbf{Y} \in \mathbb{R}^{n_1 \times n_2}$ is the observed measurement. Then the function $f(\cdot)$ is Lipschitz smooth with Lipschitz constant $
L \leq \left(\sqrt{n_3}\|\mathcal{X}\|_F + \|\mathbf{Y}\|_F\right)\sqrt{ \sum_{t=1}^{n_3} \|\mathcal{M}_t\|_F^2}.$
\end{lemma}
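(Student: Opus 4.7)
The plan is to reduce the Lipschitz-smoothness claim to a bound on $\|\nabla f(\mathcal{X})\|_F$ and then apply the mean value inequality. Since $f$ is a smooth quadratic in $\mathcal{X}$, the gradient is straightforward: differentiating frame by frame with the chain rule gives
\[
\nabla_{\mathcal{X}_t} f(\mathcal{X}) \;=\; -\,\mathcal{M}_t\odot\bigl(\mathbf{Y}-\textstyle\sum_{s=1}^{n_3}\mathcal{M}_s\odot\mathcal{X}_s\bigr)\;=\;-\mathcal{M}_t\odot r(\mathcal{X}),
\]
where I abbreviate the residual as $r(\mathcal{X}):=\mathbf{Y}-\sum_{s}\mathcal{M}_s\odot\mathcal{X}_s\in\mathbb{R}^{n_1\times n_2}$. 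Stacking over $t$ gives $\|\nabla f(\mathcal{X})\|_F^2=\sum_{t=1}^{n_3}\|\mathcal{M}_t\odot r(\mathcal{X})\|_F^2$.

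Next I would control each Hadamard term. Using the elementwise bound $|\mathcal{M}_t(i,j)\,r(\mathcal{X})(i,j)|\le|\mathcal{M}_t(i,j)|\cdot\|r(\mathcal{X})\|_\infty$ and then the trivial inequality $\|r(\mathcal{X})\|_\infty\le\|r(\mathcal{X})\|_F$ yields
\[
\|\mathcal{M}_t\odot r(\mathcal{X})\|_F^2\;\le\;\|\mathcal{M}_t\|_F^{2}\,\|r(\mathcal{X})\|_F^{2},
\]
so summing over $t$ produces $\|\nabla f(\mathcal{X})\|_F\le\|r(\mathcal{X})\|_F\sqrt{\sum_{t=1}^{n_3}\|\mathcal{M}_t\|_F^2}$. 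I then bound the residual using the triangle inequality and the $\{0,1\}$ entries of the masks: $\|\sum_t \mathcal{M}_t\odot\mathcal{X}_t\|_F\le\sum_t\|\mathcal{X}_t\|_F\le\sqrt{n_3}\,\|\mathcal{X}\|_F$ by Cauchy--Schwarz, so $\|r(\mathcal{X})\|_F\le\|\mathbf{Y}\|_F+\sqrt{n_3}\,\|\mathcal{X}\|_F$. Chaining the two inequalities gives exactly
\[
\|\nabla f(\mathcal{X})\|_F\;\le\;\bigl(\sqrt{n_3}\,\|\mathcal{X}\|_F+\|\mathbf{Y}\|_F\bigr)\sqrt{\textstyle\sum_{t=1}^{n_3}\|\mathcal{M}_t\|_F^2}.
\]

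Finally, the Lipschitz conclusion follows from the mean value inequality on the line segment between any two tensors: $|f(\mathcal{X}_1)-f(\mathcal{X}_2)|\le\sup_{\mathcal{X}}\|\nabla f(\mathcal{X})\|_F\,\|\mathcal{X}_1-\mathcal{X}_2\|_F$, where the supremum is taken over the segment, yielding the claimed constant $L$ (interpreted as the local Lipschitz modulus on the bounded ADMM iterate set, since $L$ depends on $\|\mathcal{X}\|_F$). The only subtle point — and the main thing to be careful about — is the handling of the Hadamard product: a naive use of $\|A\odot B\|_F\le\|A\|_F\|B\|_F$ is false, so I need the elementwise/$\ell_\infty$ reduction above, which crucially exploits that the masks are $\{0,1\}$-valued. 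Everything else is bookkeeping with the triangle inequality and Cauchy--Schwarz.
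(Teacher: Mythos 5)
Your proof is correct and follows what is essentially the only route to the stated constant, which is surely the paper's own: compute $\nabla_{\mathcal{X}_t}f=-\mathcal{M}_t\odot r(\mathcal{X})$, factor the mask Frobenius norms out of the Hadamard products to get $\|\nabla f(\mathcal{X})\|_F\le\|r(\mathcal{X})\|_F\sqrt{\sum_t\|\mathcal{M}_t\|_F^2}$, bound the residual via the triangle inequality and Cauchy--Schwarz to obtain the $\sqrt{n_3}\|\mathcal{X}\|_F+\|\mathbf{Y}\|_F$ factor, and finish with the mean value inequality (correctly reading the $\mathcal{X}$-dependent constant as a local Lipschitz modulus of $f$ rather than a gradient-Lipschitz constant). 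One small correction to your closing remark: the inequality $\|A\odot B\|_F\le\|A\|_F\|B\|_F$ is in fact true for arbitrary matrices, since $\sum_{i,j}a_{ij}^2b_{ij}^2\le\bigl(\sum_{i,j}a_{ij}^2\bigr)\bigl(\sum_{i,j}b_{ij}^2\bigr)$ with all terms nonnegative, so your $\ell_\infty$ detour, while valid, is unnecessary and the $\{0,1\}$ structure of the masks is not actually needed at that step.
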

\begin{theorem}[Fixed point convergence of GridTD-induced PnP-ADMM \eqref{admm_eqs}]
\label{admm}
Supposed the $\mathcal{V}$-subproblem is bounded, i.e., $\left\|\mathcal{V}^{k+1} - (\mathcal{X}^{k+1} + \mathcal{U}^k) \right\|_F \leq \frac{\alpha_2}{\rho^k}$. Let the sequence generated by the ADMM algorithm in \eqref{admm_eqs} be $\left\{ \mathcal{X}^k, \mathcal{V}^k, \mathcal{U}^k \right\}$. Then, this sequence converges to a fixed point $\{\mathcal{X}^*, \mathcal{V}^*, \mathcal{U}^*\}$, i.e., as $k \to\infty$ we have $
\left\lVert \mathcal{X}^k - \mathcal{X}^* \right\rVert_F \to 0, \;
\left\lVert \mathcal{V}^k - \mathcal{V}^* \right\rVert_F \to 0, \;
\left\lVert \mathcal{U}^k - \mathcal{U}^* \right\rVert_F \to 0$.
\end{theorem}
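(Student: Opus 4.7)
The plan is to follow the classical fixed-point route for PnP-ADMM under a geometrically inflating penalty $\rho^k$, specialized to the SCI fidelity through the closed-form $\mathcal{X}$-update \eqref{eq:close_form} and the Lipschitz smoothness provided by Lemma \ref{lemma:2}. The overall strategy is to first show that the dual iterate vanishes geometrically, then derive a matching $O(1/\rho^k)$ bound for the fidelity step, and finally telescope both bounds into a Cauchy argument for $\{\mathcal{X}^k\}$ and $\{\mathcal{V}^k\}$.

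First, I would define the $\mathcal{V}$-residual $\epsilon^{k+1} := \mathcal{V}^{k+1} - (\mathcal{X}^{k+1} + \mathcal{U}^k)$ so the hypothesis reads $\|\epsilon^{k+1}\|_F \leq \alpha_2/\rho^k$. Substituting into the multiplier update $\mathcal{U}^{k+1} = \mathcal{U}^k + \mathcal{X}^{k+1} - \mathcal{V}^{k+1}$ collapses to $\mathcal{U}^{k+1} = -\epsilon^{k+1}$. Since $\rho^k = \kappa^{k-1}\rho^1$ with $\kappa > 1$, this yields $\|\mathcal{U}^{k+1}\|_F \leq \alpha_2/(\kappa^{k-1}\rho^1)$, so $\mathcal{U}^k \to \mathcal{U}^* = 0$ and $\sum_k \|\mathcal{U}^{k+1} - \mathcal{U}^k\|_F$ is summable as a geometric series. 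The primal residual $\mathcal{X}^{k+1} - \mathcal{V}^{k+1} = \mathcal{U}^{k+1} - \mathcal{U}^k$ inherits the same geometric decay.

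Next I would turn to the $\mathcal{X}$-step. Its first-order optimality condition from \eqref{admm_eqs} gives $\nabla f(\mathcal{X}^{k+1}) + \rho^k(\mathcal{X}^{k+1} - \mathcal{V}^k + \mathcal{U}^k) = 0$, so
\[
\|\mathcal{X}^{k+1} - (\mathcal{V}^k - \mathcal{U}^k)\|_F = \|\nabla f(\mathcal{X}^{k+1})\|_F/\rho^k.
\]
Combining Lemma \ref{lemma:2} (which controls $\|\nabla f\|_F$ in terms of $\|\mathcal{X}\|_F$, $\|\mathbf{Y}\|_F$ and the masks $\{\mathcal{M}_t\}$) with a bootstrap induction that propagates a uniform bound on the orbit through the closed form \eqref{eq:close_form}, I would obtain $\|\mathcal{X}^{k+1} - (\mathcal{V}^k - \mathcal{U}^k)\|_F \leq \alpha_1/\rho^k$ for some constant $\alpha_1 > 0$. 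This is the fidelity-side counterpart of the assumed $\mathcal{V}$-side bound.

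Finally I would close the loop via a Cauchy argument. Writing the iterate differences as
\[
\mathcal{X}^{k+1} - \mathcal{X}^k = (\mathcal{V}^k - \mathcal{V}^{k-1}) - (\mathcal{U}^k - \mathcal{U}^{k-1}) + r_1^k,\quad \|r_1^k\|_F \lesssim 1/\rho^{k-1},
\]
\[
\mathcal{V}^{k+1} - \mathcal{V}^k = (\mathcal{X}^{k+1} - \mathcal{X}^k) + (\mathcal{U}^k - \mathcal{U}^{k-1}) + r_2^k,\quad \|r_2^k\|_F \lesssim 1/\rho^k,
\]
and unrolling against the geometric series $\sum_k \kappa^{-k}$ shows that all three increment sequences are summable, so the iterates are Cauchy in the Frobenius norm and converge to some $(\mathcal{X}^*, \mathcal{V}^*, \mathcal{U}^*)$ with $\mathcal{X}^* = \mathcal{V}^*$ and $\mathcal{U}^* = 0$. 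The hardest step, I expect, will be establishing the uniform a-priori bound on $\{\mathcal{X}^k\}$ needed to make $\alpha_1$ truly constant; without it, Lemma \ref{lemma:2}'s gradient control is only local and the geometric bookkeeping breaks. I plan to handle this by a single bootstrap induction: at step $k$, the geometric bound on $\mathcal{U}^k$ combined with the closed form \eqref{eq:close_form} gives an a-priori estimate of $\|\mathcal{X}^{k+1}\|_F$ in terms of $\|\mathbf{Y}\|_F$, $\{\|\mathcal{M}_t\|_F\}$, $\|\mathcal{V}^k\|_F$ and $\rho^k$, which (together with a matching bound on $\|\mathcal{V}^k\|_F$ inherited from the proximity $\mathcal{V}^k \approx \mathcal{X}^k + \mathcal{U}^{k-1}$) closes the induction.
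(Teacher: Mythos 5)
Your overall route is the right one and matches the paper's: this is the classical bounded-denoiser fixed-point argument of the PnP-ADMM framework \cite{PnP}, with Lemma \ref{lemma:2} supplying the gradient bound for the fidelity step and the assumed $\mathcal{V}$-side bound playing the role of the bounded-denoiser condition. Your observations that $\mathcal{U}^{k+1}=-\epsilon^{k+1}$ (hence $\|\mathcal{U}^{k+1}\|_F\leq\alpha_2/\rho^k$ decays geometrically), that the $\mathcal{X}$-step optimality gives $\|\mathcal{X}^{k+1}-(\mathcal{V}^k-\mathcal{U}^k)\|_F\leq\|\nabla f(\mathcal{X}^{k+1})\|_F/\rho^k$, and that a bootstrap induction is needed to turn Lemma \ref{lemma:2}'s state-dependent gradient bound into a uniform constant $\alpha_1$ are all correct and necessary.

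However, the closing Cauchy step as you have written it does not work. Substituting your first displayed recursion into the second, the $\mathcal{U}$-increments cancel and you get
\begin{equation*}
\|\mathcal{V}^{k+1}-\mathcal{V}^k\|_F \;\leq\; \|\mathcal{V}^{k}-\mathcal{V}^{k-1}\|_F + \|r_1^k\|_F + \|r_2^k\|_F,
\end{equation*}
and unrolling this against a summable forcing term only yields that the increments are \emph{bounded} by $\|\mathcal{V}^1-\mathcal{V}^0\|_F + O(1)$ — it does not show they tend to zero, let alone that they are summable, so the Cauchy conclusion does not follow. The fix is to avoid coupling consecutive increments to each other and instead bound each increment directly through the anchor points: since $\mathcal{V}^k-\mathcal{U}^k-\mathcal{X}^k=\epsilon^k+\mathcal{U}^{k-1}-\mathcal{U}^k$, the triangle inequality gives
\begin{equation*}
\|\mathcal{X}^{k+1}-\mathcal{X}^k\|_F \leq \frac{\alpha_1}{\rho^k} + \frac{\alpha_2}{\rho^{k-1}} + \frac{\alpha_2}{\rho^{k-2}} + \frac{\alpha_2}{\rho^{k-1}},
\end{equation*}
which is itself geometrically small; the same then follows for $\|\mathcal{V}^{k+1}-\mathcal{V}^k\|_F$ via $\mathcal{V}^{k+1}=\mathcal{X}^{k+1}+\mathcal{U}^k+\epsilon^{k+1}$, and summability (hence the Cauchy property and convergence to a fixed point with $\mathcal{X}^*=\mathcal{V}^*$, $\mathcal{U}^*=0$) is immediate from $\sum_k\kappa^{-k}<\infty$. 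Note also that this direct bound is what makes your bootstrap legitimate: telescoping $\|\mathcal{V}^{k+1}\|_F\leq\|\mathcal{V}^k\|_F+O(\kappa^{-k})$ for \emph{boundedness} of the orbit is fine (there you only need a bounded, not summable, accumulation), so be careful to keep the two telescoping arguments separate.
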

The optimization algorithms for MRI reconstruction \eqref{eq:mri_model} and spectral SCI \eqref{eq:spectral_sci_model} are analogous to the PnP-ADMM \eqref{admm_eqs}, and the fixed-point convergence can be established likewise. The fixed-point convergence ensures the stability and robustness of the proposed GridTD-based ADMM algorithm.
\section{Experiments}
We conduct three groups of compressive imaging reconstruction tasks, including video SCI, spectral SCI, and compressive dynamic MRI reconstruction, to comprehensively verify the effectiveness of the proposed GridTD method. Due to space considerations, the {implementation details} of our method (model configurations, trade-off parameters, and optimizer settings, etc.) are provided in the supplementary file.
%Fig. \ref{fig:MRI1} further evaluates performance under higher compression (13 spokes per frame). Here, error maps demonstrate that our approach robustly recovers fine details degraded in reconstructions from InstantNGP and other baselines, highlighting its resilience to aggressive undersampling.
\subsection{Video Snapshot Compressive Imaging}
\subsubsection{Datasets and Baselines}
We use 6 grayscale simulation benchmark videos, including Aerial, Vehicle, Kobe, Drop, Runner, and Traffic~\cite{yuan2021plug} with a size of $256 \times 256 \times 8$. For the video SCI sensing mask, we sample the mask values from a Bernoulli distribution with $p=0.5$, following the implementation in recent work \cite{zhao2024untrained}. We select six representative baselines: the GAP-TV~\cite{yuan2016generalized} that combines generalized alternating projection with a TV regularization; the PnP-DIP~\cite{pnpdip} that incorporates the DIP as an image prior into the iterative reconstruction process; the DVP~\cite{dvp} that leverages DIP to capture internal structural redundancy in videos; the Factorized-DVP~\cite{miaotci} that uses a tensor decomposition DVP for video SCI; the SCI-BDVP~\cite{zhao2024untrained} that integrates bagged-DVP into the SCI problem; and the InstantNGP \cite{muller2022instant} with TV and temporal affine adapter.
\begin{figure*}[ht]
    \centering
    \scriptsize
    \setlength{\tabcolsep}{2pt} % 调整列间距
    \begin{tabular}{cccccc}
        \includegraphics[width=0.125\textwidth]{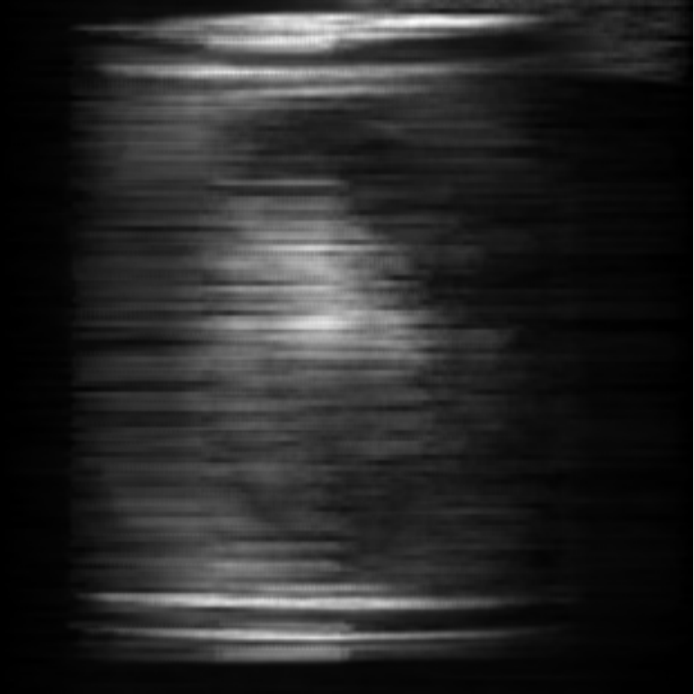} &
        \includegraphics[width=0.125\textwidth]{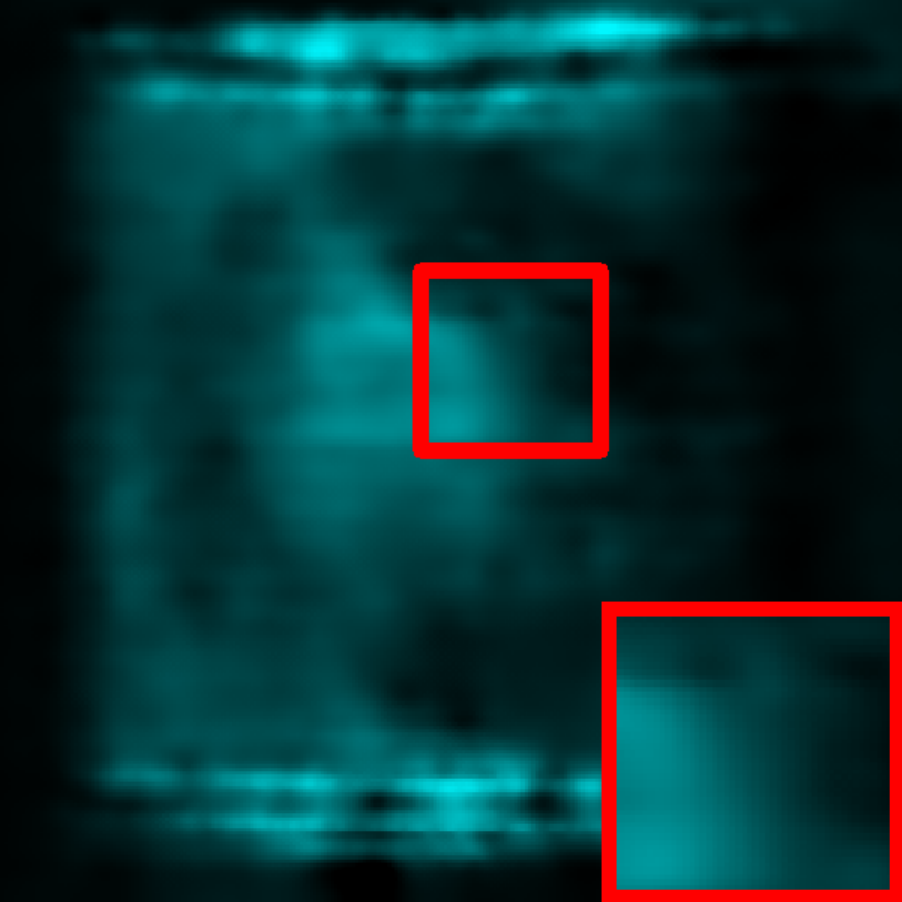} &
        \includegraphics[width=0.125\textwidth]{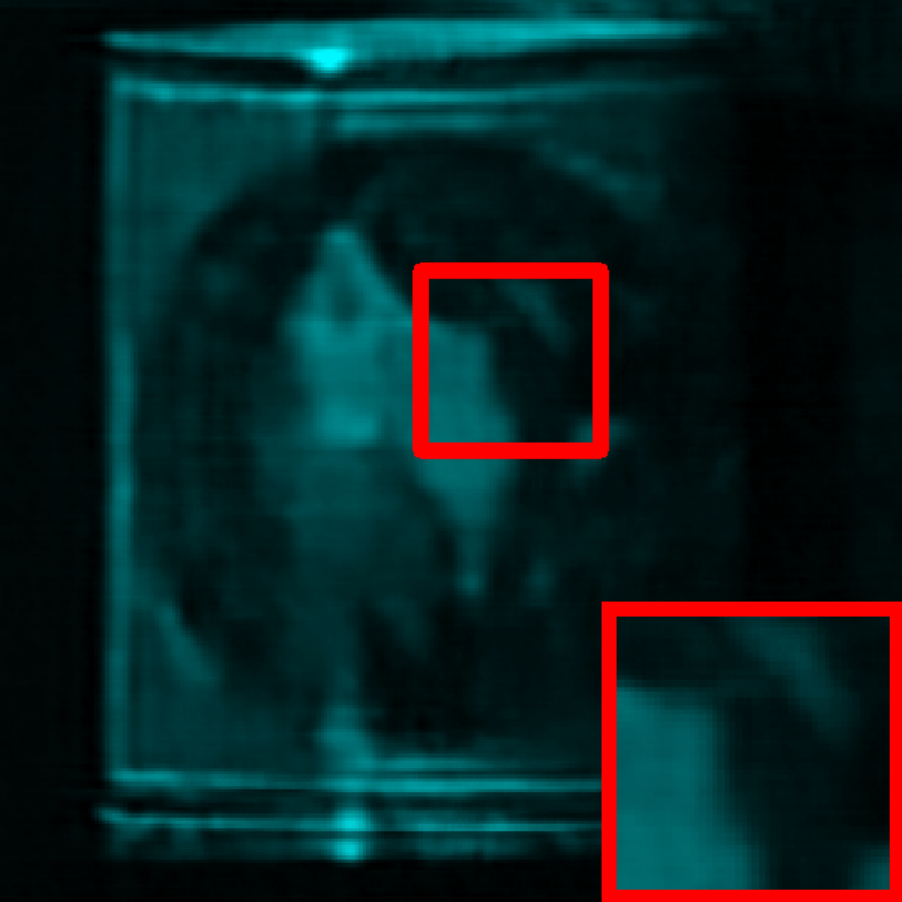} &
        \includegraphics[width=0.125\textwidth]{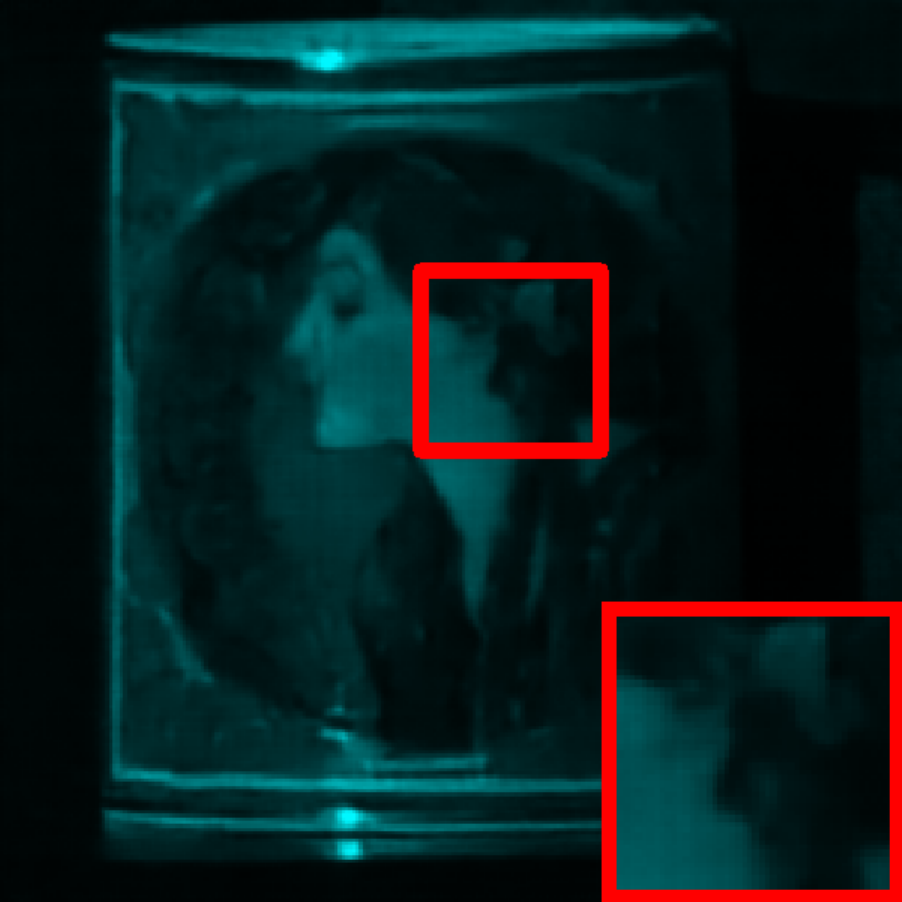} &
        \includegraphics[width=0.125\textwidth]{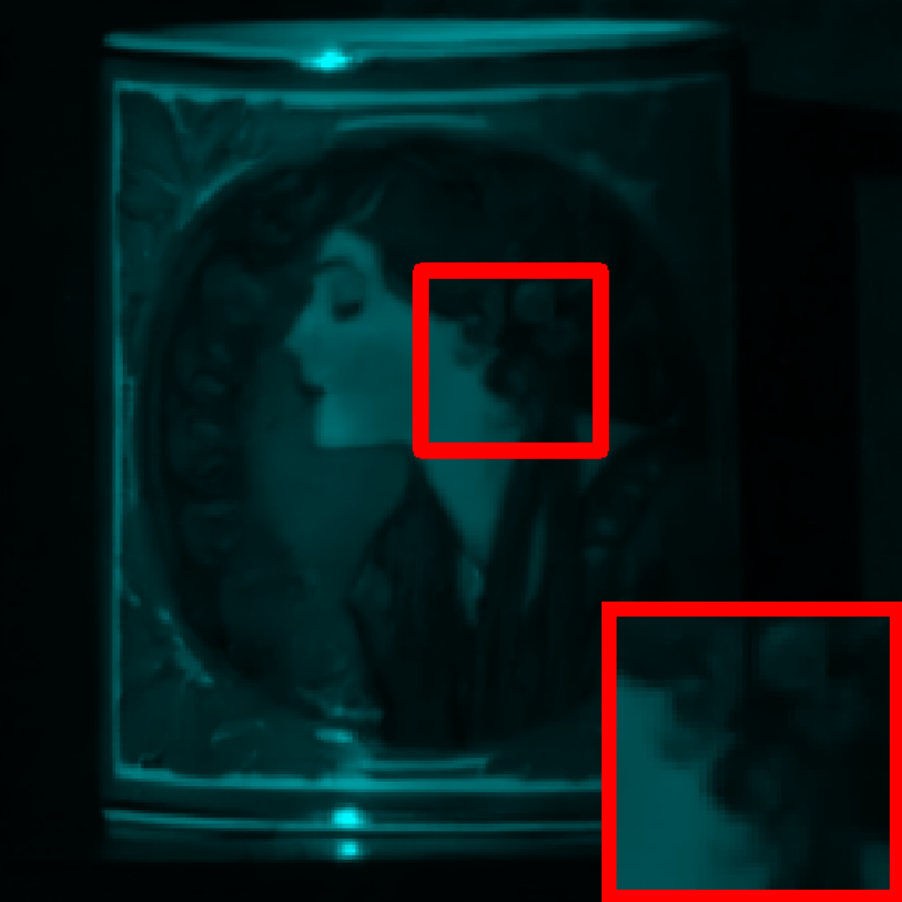} &
        \includegraphics[width=0.125\textwidth]{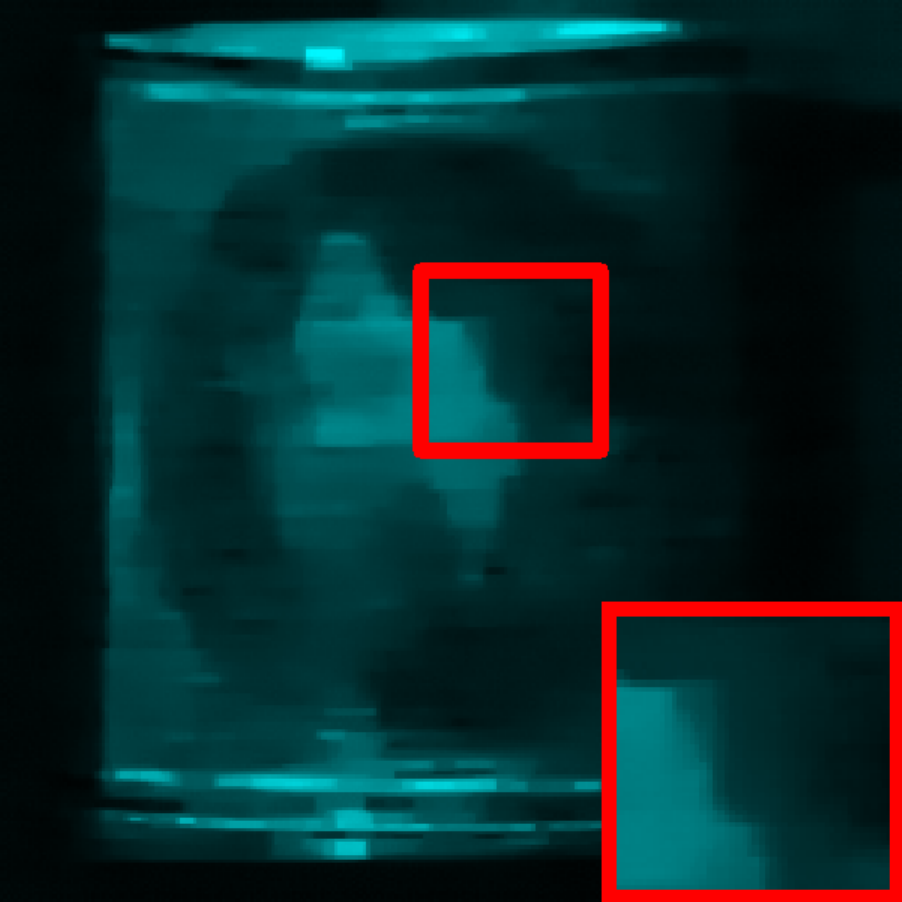}  \\
        Measurement & GAP-TV & $\lambda$-Net & TSA-Net & HDNet & HIR-Diff \\
        \includegraphics[width=0.125\textwidth]{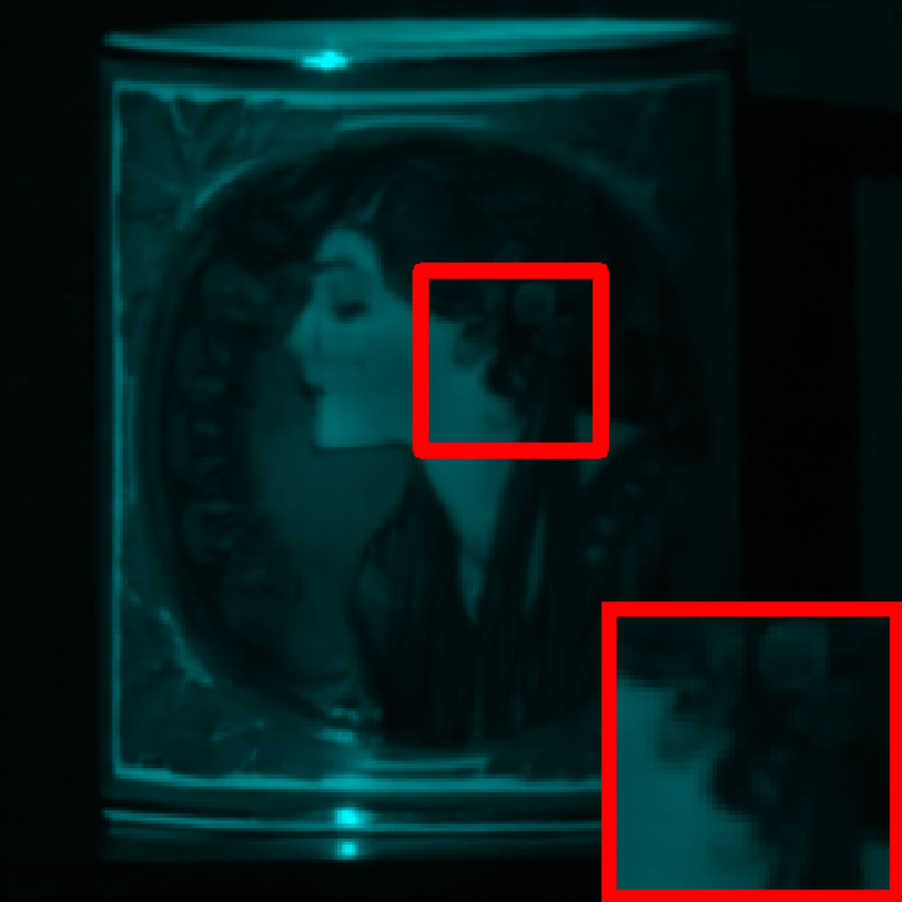} &
        \includegraphics[width=0.125\textwidth]{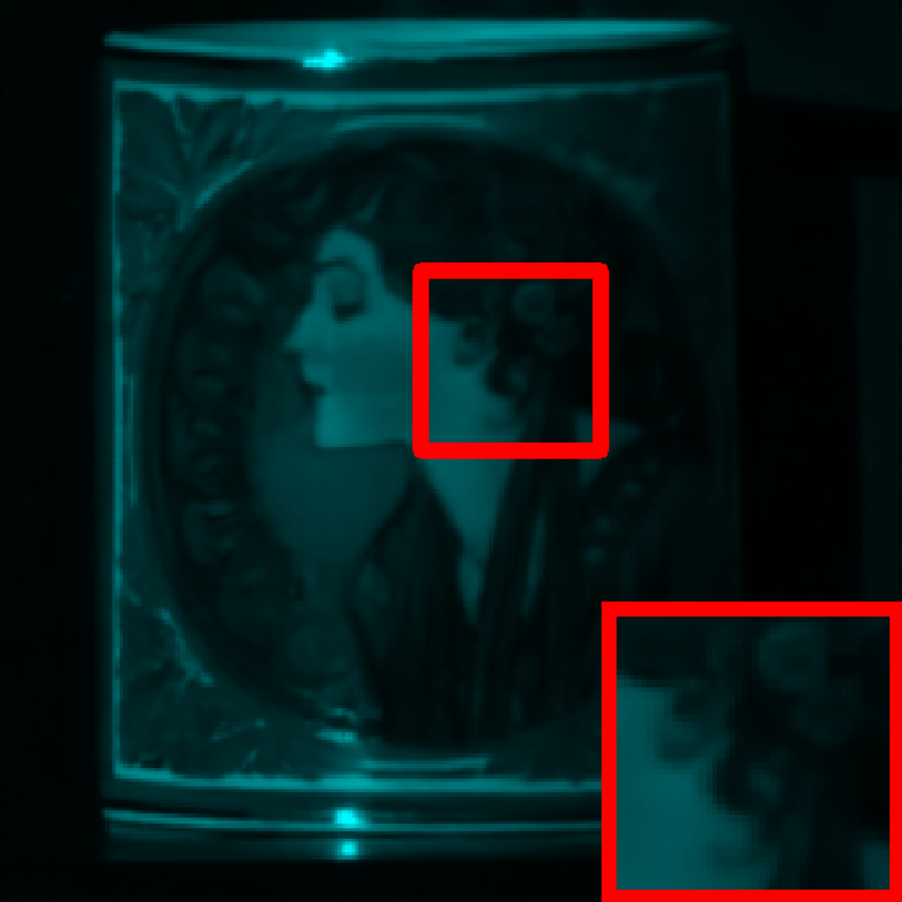} &
        \includegraphics[width=0.125\textwidth]{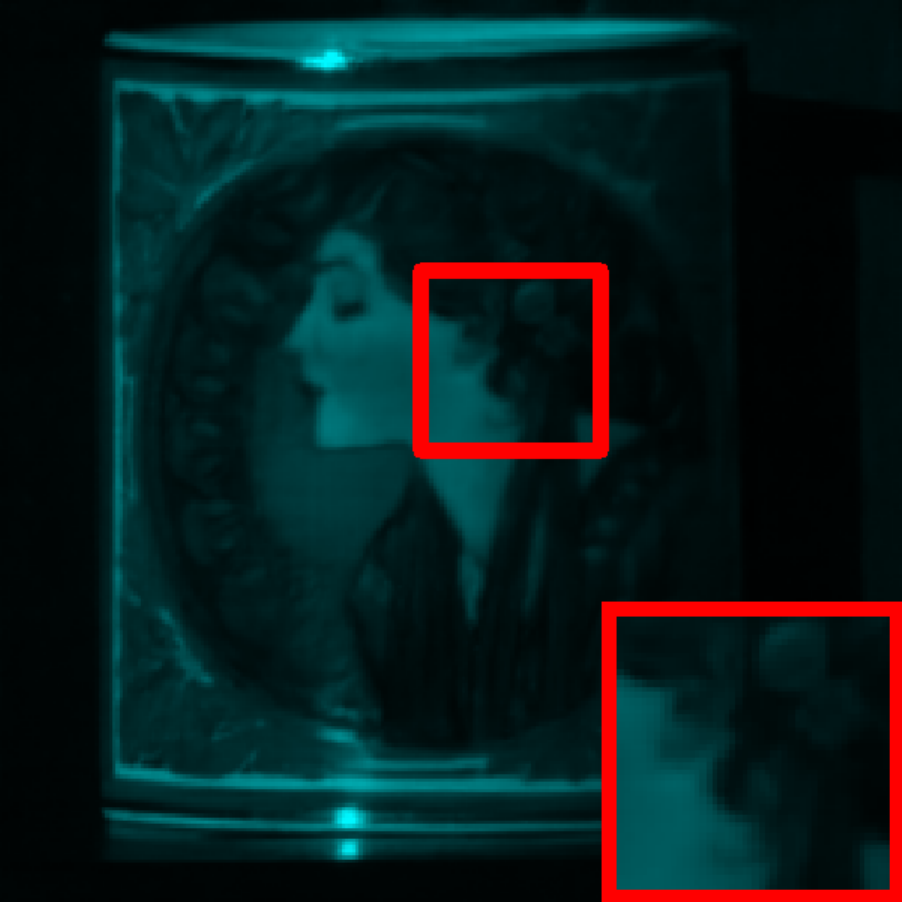} &
        \includegraphics[width=0.125\textwidth]{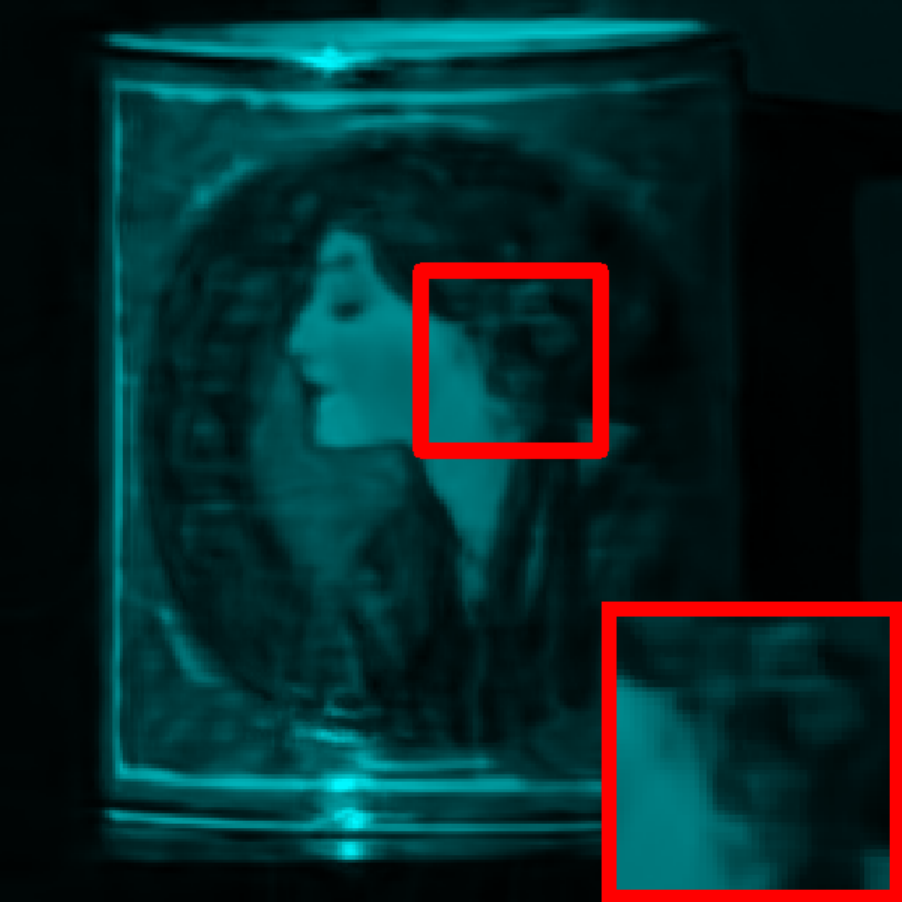} &
        \includegraphics[width=0.125\textwidth]{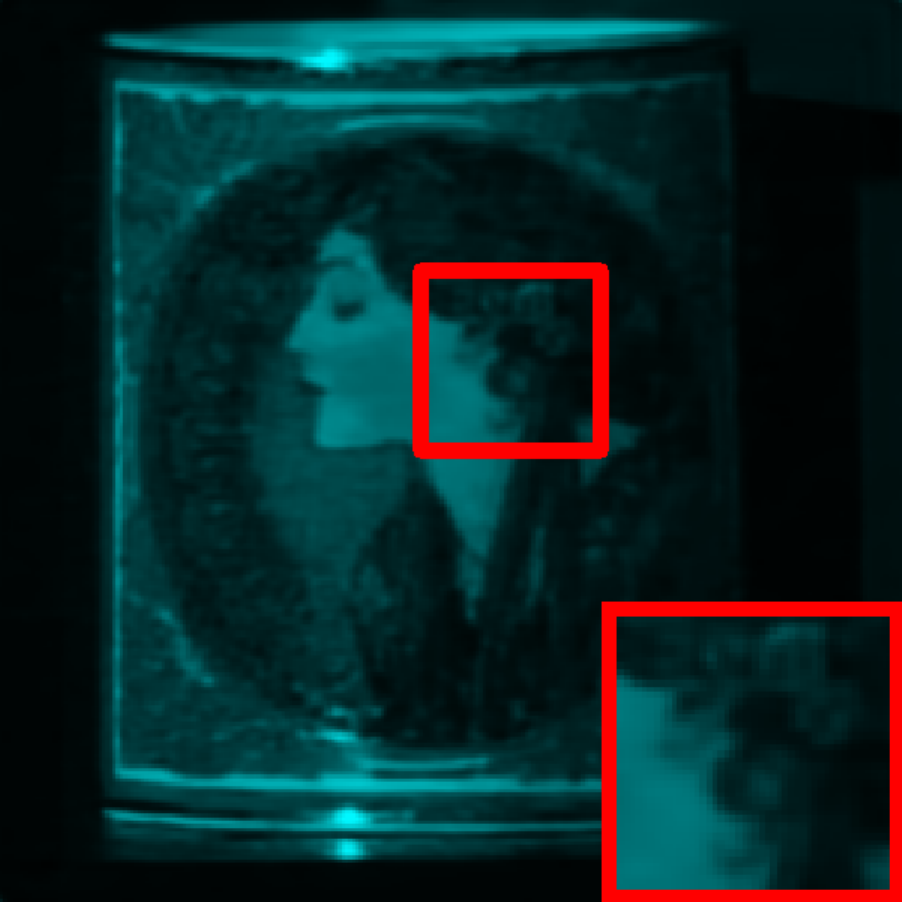} &
         \includegraphics[width=0.125\textwidth]{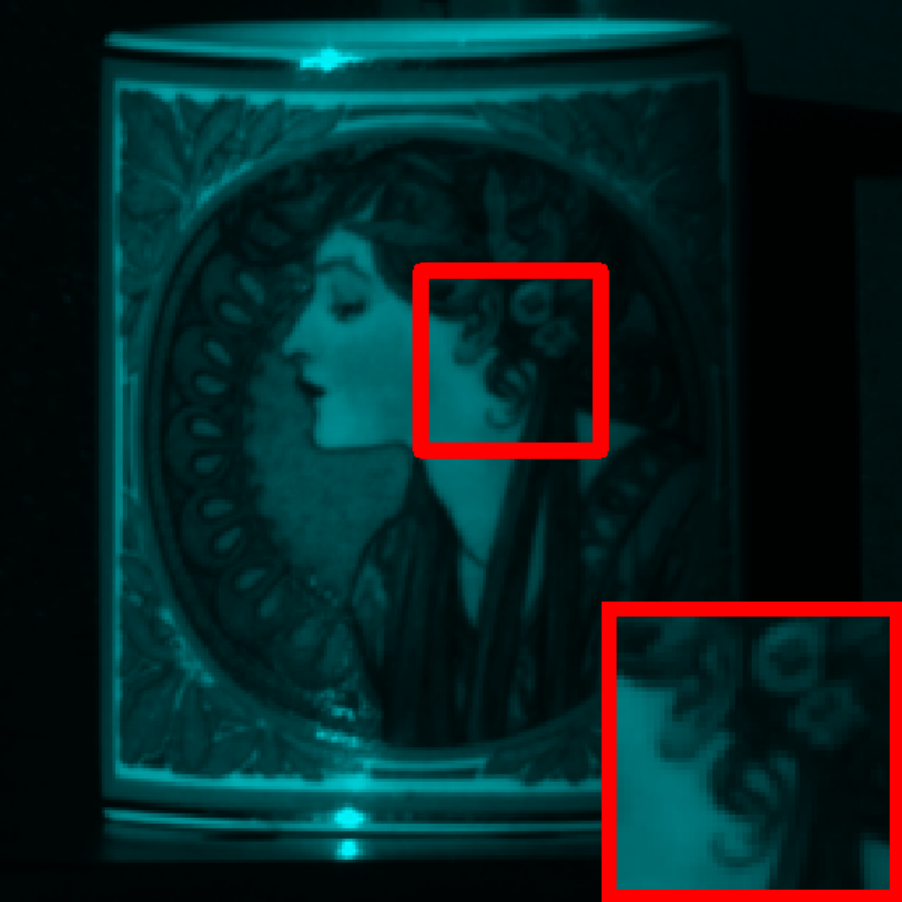} \\\vspace{-0.2cm}
         MST-L & CST-L & MST++ & LRSDN & GridTD &Original  \\
    \end{tabular}
    \caption{
Reconstruction comparison using various methods on Scene 1 from the KAIST dataset for spectral SCI.
}\vspace{-0.1cm}
    \label{fig:spectral}
\end{figure*}
\begin{table*}[t]
\centering
\caption{Quantitative results on the KAIST dataset for spectral SCI. Except for GAP-TV, HIR-Diff, LRSDN, InstantNGP, and the proposed GridTD, all other comparison methods are supervised deep learning methods. (PSNR/SSIM)}
\label{table:spectral}
\setlength{\tabcolsep}{2pt}
\begin{tabular}{lccccccccccc}
\hline
\textbf{Method}   & \textbf{Scene1}        & \textbf{Scene2}        & \textbf{Scene3}        & \textbf{Scene4}        & \textbf{Scene5}        & \textbf{Scene6}        & \textbf{Scene7}        & \textbf{Scene8}       & \textbf{Scene9}       & \textbf{Scene10}       & \textbf{Average}       \\
\hline
$\lambda$-Net & 29.75/0.818 & 27.74/0.732 & 29.58/0.834 & 37.65/0.911 & 26.64/0.777 & 27.29/0.783 & 26.65/0.766 & 26.24/0.780 & 28.58/0.785 & 26.18/0.695 & 28.63/0.788 \\
TSA-Net   & 32.37/0.891 & 31.12/0.860 & 32.35/0.911 & 39.72/0.957 & 29.48/0.879 & 31.11/0.903 & 30.30/0.875 & 29.35/0.889 & 31.67/0.902 & 29.24/0.869 & 31.67/0.894 \\
HDNet     & 35.17/0.936 & 35.73/0.942 & 36.13/0.942 & 42.78/0.976 & 32.72/0.946 & 34.52/0.955 & 33.70/0.923 & 32.49/0.947 & 34.93/0.941 & 32.39/0.944 & 35.06/0.945 \\
MST-L     & 35.51/0.943 & 36.19/0.948 & 36.48/0.951 & 42.36/0.976 & 33.00/0.947 & 34.77/0.956 & 34.15/0.926 & 32.94/0.952 & 35.12/0.940 & 32.80/0.947 & 35.33/0.949 \\
CST-L     & \underline{35.86}/\textbf{0.948} & \textbf{36.57}/\textbf{0.954} & 37.54/\underline{0.960} & 42.58/0.978 & 33.43/\underline{0.952} & \textbf{35.64}/\textbf{0.965} & 34.46/0.935 & \textbf{33.80}/\textbf{0.964} & 35.94/0.951 & \textbf{33.43}/\textbf{0.956} & 35.93/\textbf{0.956} \\
MST++     & 35.84/\underline{0.944} & \underline{36.29}/\underline{0.949} & 37.46/0.957 & 44.28/0.981 & 33.42/\underline{0.952} & \underline{35.46}/\underline{0.960} & 34.38/0.932 & \underline{33.72}/\underline{0.958} & 36.72/0.953 & \underline{33.40}/\underline{0.953} & \underline{36.10}/\underline{0.954} \\
\midrule
GAP-TV    & 25.92/0.700 & 24.73/0.610 & 25.71/0.735 & 36.19/0.866 & 23.18/0.651 & 22.41/0.617 & 23.43/0.643 & 22.56/0.607 & 24.83/0.683 & 24.29/0.525 & 25.33/0.664 \\
HIR-Diff & 29.22/0.806 & 26.79/0.724 & 30.21/0.867 & 42.81/0.962 & 26.52/0.804 & 26.76/0.675 & 27.43/0.831 & 25.70/0.756 & 27.89/0.810 & 26.32/0.692 & 28.97/0.793\\
LRSDN     & 35.44/0.922 & 34.89/0.910 & \underline{38.90}/0.959 & 45.29/0.986 &\underline{34.71}/0.948 & 33.18/0.932 & \underline{37.76}/\underline{0.962} & 30.57/0.905 & \underline{39.49}/0.962 & 30.62/0.896 & 36.09/0.938 \\
InstantNGP & 34.22/0.924 & 35.48/0.936 & 36.71/0.950 & \underline{46.09}/\underline{0.987} &34.45/0.954 & 33.01/0.945 & 36.77/0.959 & 29.74/0.918 & 39.11/\underline{0.965} & 31.06/0.911 & 35.66/0.945 \\
{\bf GridTD} & \textbf{35.98}/0.937 & 35.99/0.937 & \textbf{39.04}/\textbf{0.964} & \textbf{47.05}/\textbf{0.989} & \textbf{35.26}/\textbf{0.958} & 34.41/0.955 & \textbf{37.97}/\textbf{0.965} & 31.56/0.928 & \textbf{40.73}/\textbf{0.974} & 31.89/0.927 & \textbf{36.99}/0.953 \\
\hline
\end{tabular}
\end{table*}
\begin{figure*}[h]
    \centering
    \begin{tabular}{cc}
    \includegraphics[width=0.43\textwidth,height=0.24\textwidth]{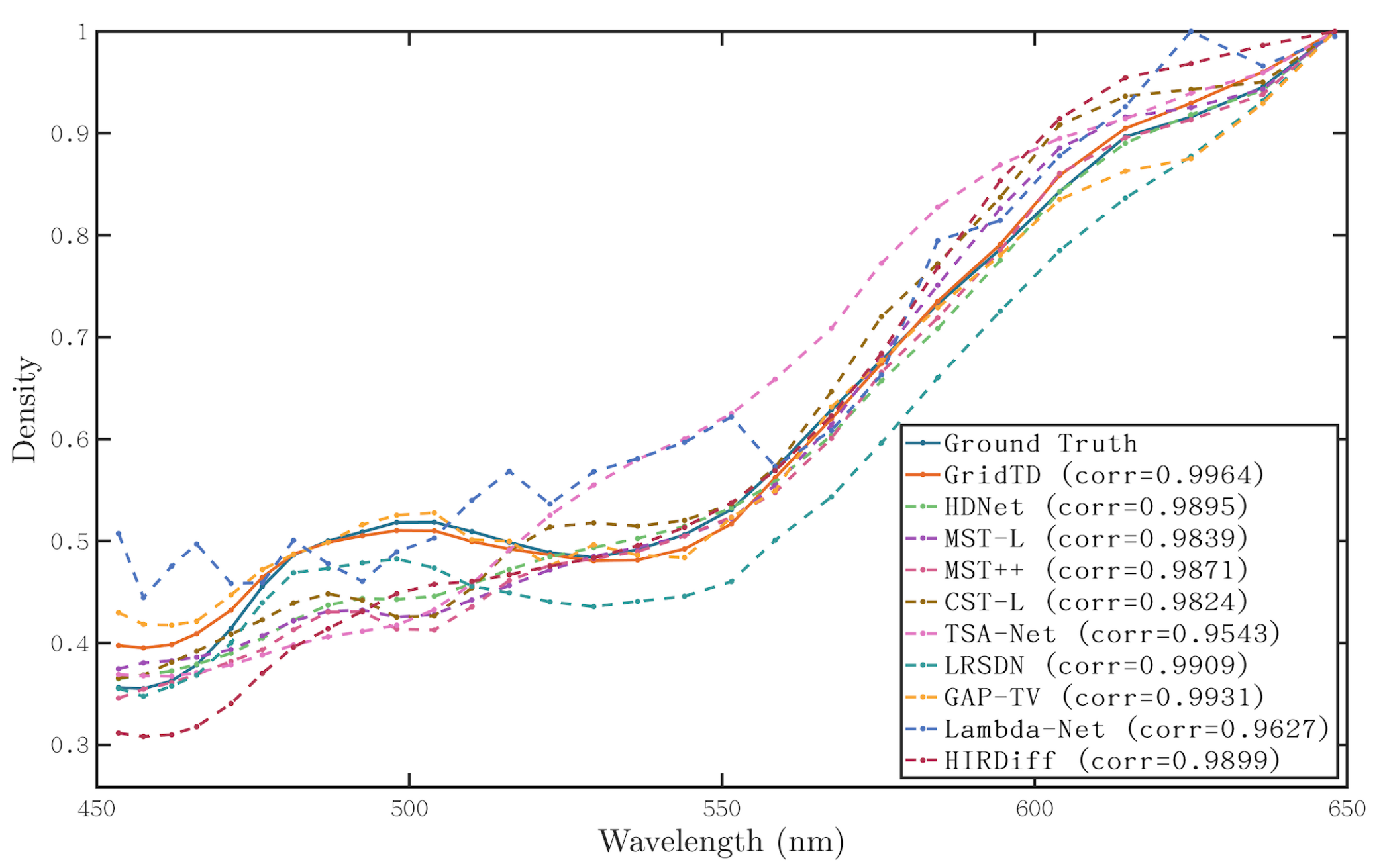} &
        \includegraphics[width=0.43\textwidth,height=0.24\textwidth]{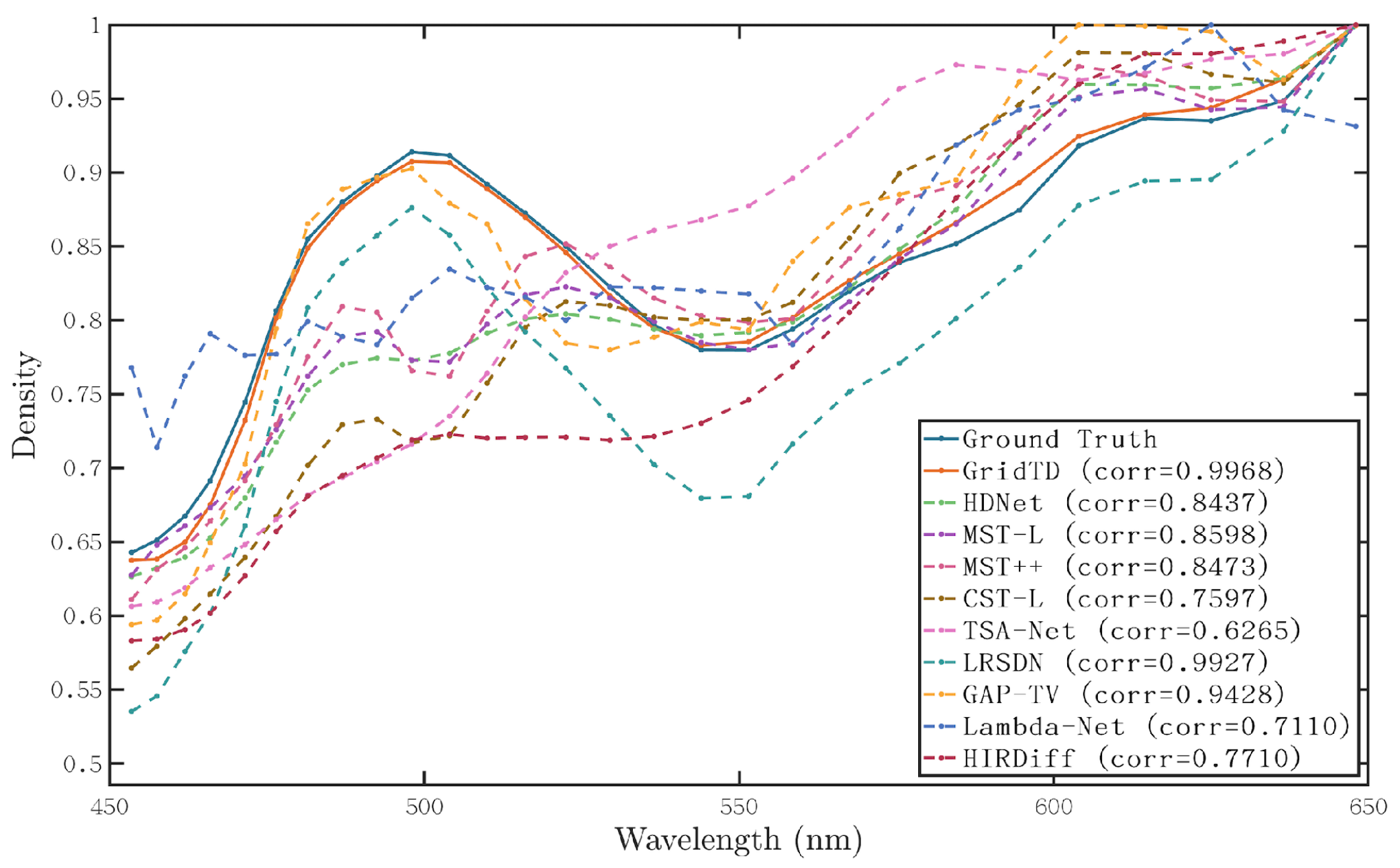} \\
    \end{tabular}\vspace{-0.2cm}
    \caption{The recovered spectral curves and the corresponding correlation coefficients of different methods for spectral SCI reconstruction on two local patches of {\bf Scene 1}.\vspace{-0.3cm}}
    \label{fig:spectral_curve}
\end{figure*}
\subsubsection{Results}
As shown in Table~\ref{table:video}, our method achieves superior results in both peak signal-to-noise ratio (PSNR) and structure similarity (SSIM) metrics compared to all other unsupervised video SCI methods. On these standard benchmark datasets, our GridTD improves the average PSNR by approximately 1 dB over the best existing unsupervised methods, with consistent improvements observed in SSIM as well. Moreover, from Table \ref{table:video}, we see that GridTD is much faster than other DIP-based SCI methods due to the compact structure of GridTD. As illustrated in Fig. \ref{fig:video}, GridTD produces clearer edge details, particularly around image boundaries and textured regions. This is attributed to the multi-resolution grid encoding that enhances the representation ability. In addition, GridTD exhibits fewer visually perceptible motion blur artifacts. This performance gain is due to the temporal affine adapter, which effectively captures motion information, thereby enhancing robustness for dynamic scenes.
\subsection{Spectral Snapshot Compressive Imaging}
\subsubsection{Datasets and Baselines}
For spectral SCI, we select 10 representative test scenes from the KAIST hyperspectral dataset~\cite{kaist}, each with high spatial and spectral resolution. The coding mask settings we adopted are the same as those used in \cite{mst}. We retain these settings in our study to ensure reproducibility of results and fairness in comparisons. We compare our method with ten mainstream reconstruction algorithms, which represent a range of paradigms from traditional optimization to end-to-end deep learning, model-driven deep unfolding, and pre-trained models. These include {\bf supervised methods} $\lambda$-Net~\cite{lambda} (attention mechanism-based network), HDNet~\cite{hdnet}, TSA-Net~\cite{tsa}, MST-L~\cite{mst}, CST-L~\cite{cst}, and MST++~\cite{mst++} (transformer-based architectures), and {\bf unsupervised methods} GAP-TV~\cite{yuan2016generalized},  LRSDN~\cite{lrsdn} (integrates deep priors with low-rank modeling), HIR-Diff~\cite{pang2024hir} (pre-trained diffusion model-based method), and InstantNGP \cite{muller2022instant} with TV. For fair comparison, all baselines are evaluated using their pre-trained checkpoints from the MST toolbox\footnote{\url{https://github.com/caiyuanhao1998/MST}}.
\subsubsection{Results}
As shown in Table~\ref{table:spectral}, the proposed method demonstrates competitive performance on the KAIST test dataset. Notably, compared to the state-of-the-art unsupervised method LRSDN, our method achieves an average PSNR improvement of approximately 1 dB. Moreover, our method compares favorably against powerful supervised approaches (including powerful transformer-based methods MST++ and CST-L), further validating the effectiveness and potential of the proposed unsupervised framework. As illustrated in Fig. \ref{fig:spectral}, our method excels in recovering high-frequency details of images, outperforming other approaches. The comparison results reveal that existing methods often produce overly smoothed image structures. In contrast, our method shows a stronger ability to accurately preserve edge details and texture contours. This performance gain is primarily attributed to the powerful multi-resolution grid encoding. From Fig. \ref{fig:spectral_curve}, we can see that GridTD accurately recovers spectral curves of the high-dimensional HSIs as compared with other methods, demonstrating the spectral reconstruction ability of GridTD. 
\subsection{Compressive Sensing Dynamic MRI Reconstruction}
\subsubsection{Datasets and Baselines}
The proposed method is tested on the retrospective cardiac cine dataset OCMR\cite{chen2020ocmr}. The detailed information of the dataset is listed in the supplementary file. The acquired data are transformed into the $k$-space domain using the multi-channel NUFFT implementation with golden-angle radial trajectories based on Fibonacci number sequencing. Coil sensitivity maps are derived through the ESPRIT algorithm\cite{espirit} to enable parallel imaging reconstruction. Different acceleration factors (AFs) are simulated for evaluation, including 21, 13, 8, 5, 3, 2, 1 spokes per frame (AF=9.7, 15.7, 25.5, 40.8, 67.9, 101.9, 203.7).\par
\begin{figure*}[t]
\scriptsize
    \centering
    \setlength{\tabcolsep}{2pt} % 调整列间距
    \begin{tabular}{cccccc}
        \includegraphics[width=0.125\textwidth]{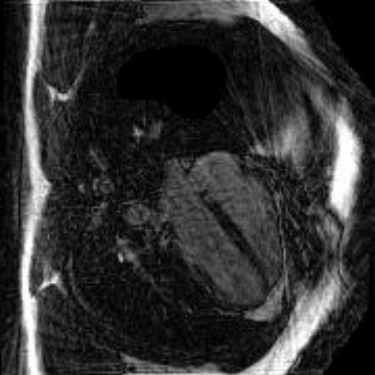} &
        \includegraphics[width=0.125\textwidth]{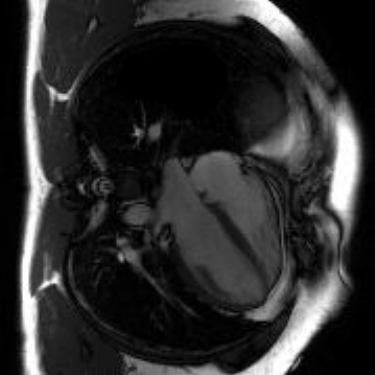} &
        \includegraphics[width=0.125\textwidth]{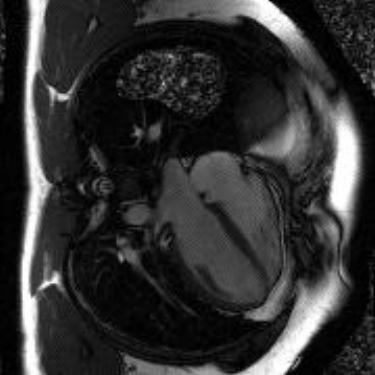} &
        \includegraphics[width=0.125\textwidth]{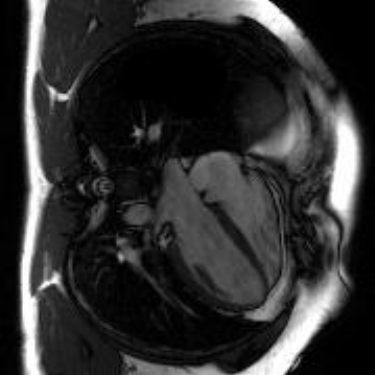}&
        \includegraphics[width=0.125\textwidth]{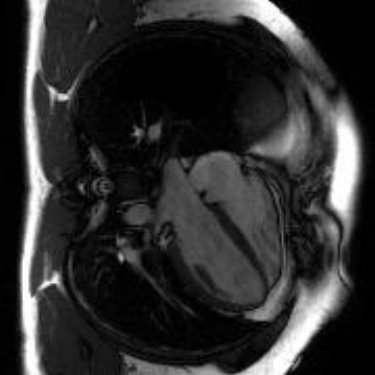} &
        \includegraphics[width=0.125\textwidth]{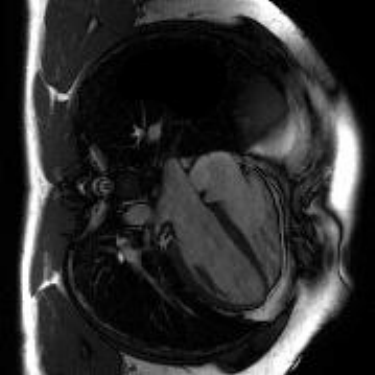} \\
        \includegraphics[width=0.125\textwidth]{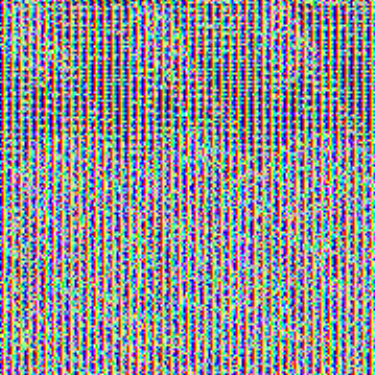} &
        \includegraphics[width=0.125\textwidth]{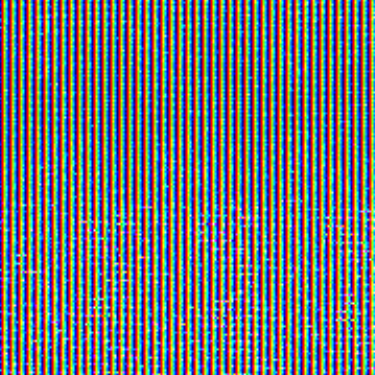} &
        \includegraphics[width=0.125\textwidth]{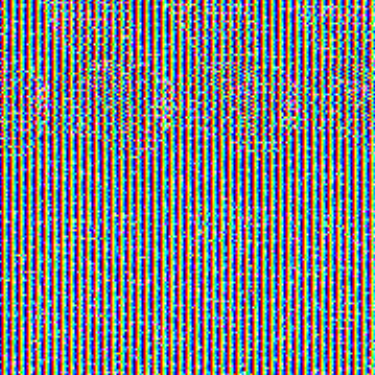} &
        \includegraphics[width=0.125\textwidth]{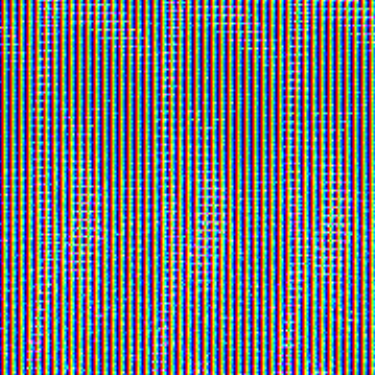}&
        \includegraphics[width=0.125\textwidth]{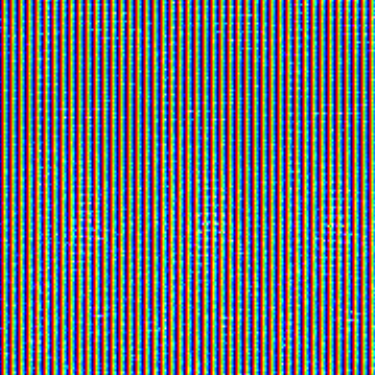} &
        \includegraphics[width=0.125\textwidth]{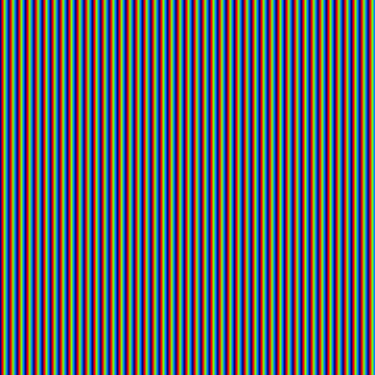} \\
    NUFFT & GRASP & FMLP& InstantNGP & GridTD & Original\\
        \includegraphics[width=0.125\textwidth]{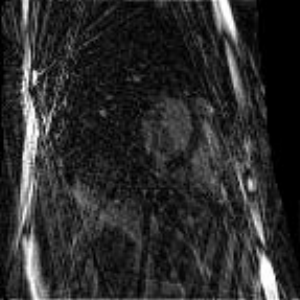} &
        \includegraphics[width=0.125\textwidth]{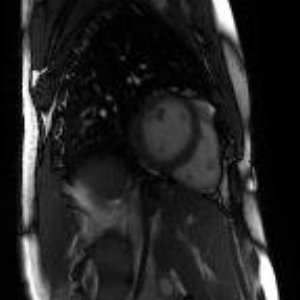} &
        \includegraphics[width=0.125\textwidth]{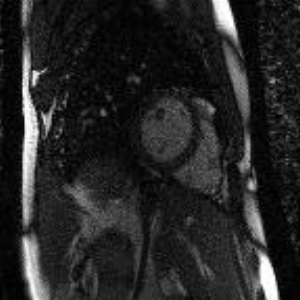} &
        \includegraphics[width=0.125\textwidth]{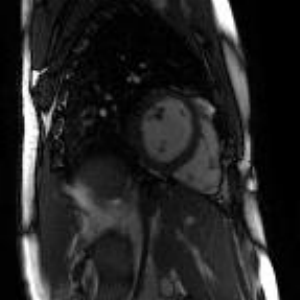}&
        \includegraphics[width=0.125\textwidth]{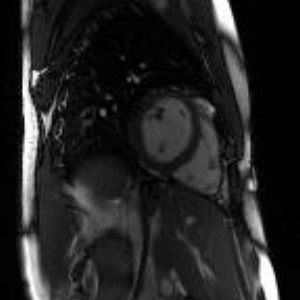} &
        \includegraphics[width=0.125\textwidth]{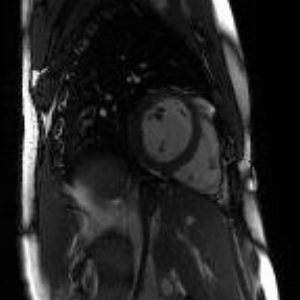} \\
        \includegraphics[width=0.125\textwidth]{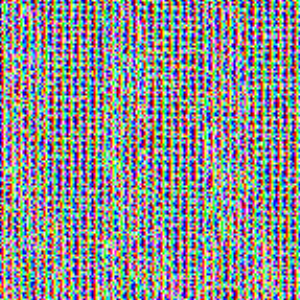} &
        \includegraphics[width=0.125\textwidth]{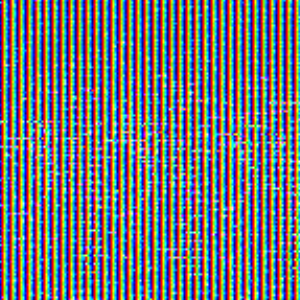} &
        \includegraphics[width=0.125\textwidth]{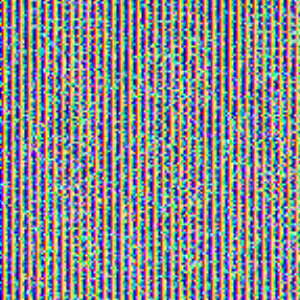} &
        \includegraphics[width=0.125\textwidth]{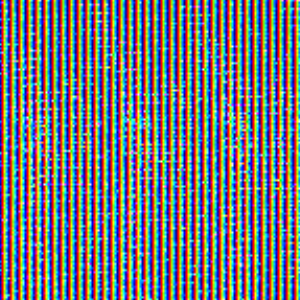}&
        \includegraphics[width=0.125\textwidth]{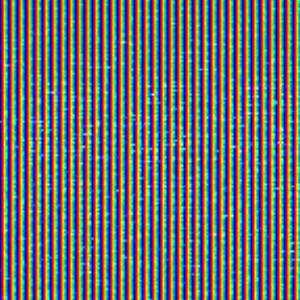} &
        \includegraphics[width=0.125\textwidth]{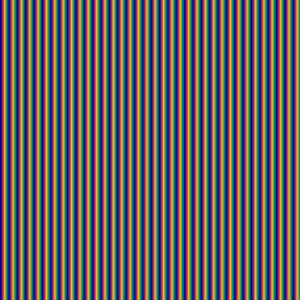} \\\vspace{-0.2cm}
    NUFFT & GRASP & FMLP& InstantNGP & GridTD & Original
    \end{tabular}
    \caption{
Reconstruction comparison using various methods for 21 spokes per frame (acceleration factor = 9.7) and 13 spokes per frame (acceleration factor = 15.7) on the cardiac cine dataset. The first and third rows show the reconstructed MRIs, and the second and fourth rows show the corresponding error maps.
}\vspace{-0.1cm}
    \label{fig:MRI1}
\end{figure*}
\begin{table*}[t]
\tabcolsep=8pt
\centering
\caption{Average quantitative results on the cardiac cine dataset for dynamic MRI reconstruction. (PSNR/SSIM)}
\label{tab:comparison}
\begin{tabular}{cccccccccc}
\hline
\multirow{2}{*}{Method} & \multicolumn{7}{c}{Acceleration factor (AF)} &\multirow{2}{*}{Average time}\\
\cline{2-8}
 & 9.7 & 15.7 & 25.5 & 40.8 & 67.9 & 101.9 & 203.7 \\
\hline
NUFFT & 27.70/0.666 & 24.47/0.536 & 22.21/0.444 & 20.68/0.395 & 19.51/0.353&18.91/0.351&18.12/0.351 & 0.467s\\
GRASP & 38.99/0.955 & 38.28/0.948 & 36.66/0.935 & 34.09/0.897 & 30.14/0.820&24.41/0.671&18.82/0.427 & 5.10s\\
FMLP & 37.84/0.935 & 35.96/0.894 & 33.55/0.831 & 31.30/0.754 & 29.03/0.668&26.77/0.557&\underline{24.88}/0.469 &659.07s \\
InstantNGP & \underline{42.38}/\underline{0.979} & \underline{39.78}/\underline{0.965} & \underline{37.23}/\underline{0.944} & \underline{34.53}/\underline{0.914} & \underline{31.07}/\underline{0.852} &\underline{28.36}/\underline{0.788}&24.80/\textbf{0.676} &654.16s\\
\textbf{GridTD} & \textbf{43.98}/\textbf{0.983} & \textbf{42.22}/\textbf{0.975} & \textbf{39.61}/\textbf{0.962} & \textbf{36.03}/\textbf{0.932} & \textbf{32.27}/\textbf{0.867} &\textbf{29.43}/\textbf{0.791}&\textbf{24.96}/\underline{0.643} &322.56s\\
\hline
\end{tabular}\vspace{-0.3cm}
\end{table*}
We select four representative baselines, including the NUFFT that provides direct zero-filled reconstructions without additional processing; the GRASP \cite{feng2014golden} that incorporates temporal frame grouping, time-averaged coil sensitivity estimation, and TV regularization; the FMLP \cite{kunz2024implicit} that employs INR and Fourier feature encoding for reconstruction; and the InstantNGP \cite{feng2025spatiotemporal} that employs the multi-resolution hash grid encoding for dynamic MRI reconstruction.
\subsubsection{Results}
The quantitative results for compressive dynamic MRI reconstruction are shown in Table~\ref{tab:comparison}. The proposed GridTD method outperforms competing methods across all evaluated settings in PSNR and SSIM metrics. Compared to other unsupervised deep learning methods FMLP and InstantNGP, our GridTD is over 50\% faster due to the compact model structure. In Fig. \ref{fig:MRI1}, we present some visualizations of the reconstructed MRIs and the corresponding error maps, which demonstrate that GridTD better preserves structural edges and robustly recovers fine details, showing the superiority of our method for dynamic MRI reconstruction. 
All these results on video SCI, spectral SCI, and dynamic MRI reconstruction validate the strong performance of GridTD, positioning it as a versatile and state-of-the-art approach for compressive imaging reconstruction.
\subsection{Ablation Study and Parameter Analysis}
We conduct comprehensive ablation studies and parameter analyses to analyze the impact of critical hyperparameters and components in GridTD. All experiments are performed on the \textbf{Runner} scene for the video SCI task. For parameter analysis, we test the hyperparameters including the feature vector dimension $F$ (Table \ref{table:feature}), the number of resolutions $L$ (Table \ref{table:level}), and the TV and SSTV balance parameters $\lambda_1$ and $\lambda_2$ (Table \ref{table:tv} and Table \ref{table:sstv}). We note that varying the feature dimension $F$ and the number of resolutions $L$ is equivalent to varying the CP rank of the GridTD model, hence the parameter sensitivity analysis for the CP rank parameter is omitted. For each test, the optimal performance is achieved at $F=60$, $L=60$, $\lambda_1 = 5\lambda$, and $\lambda_2 = 3.5\lambda$, respectively, where $\lambda=4 \times 10^{-6}$. These parameter settings are expected to strike a favorable balance between expressiveness and stability. We note that varying these parameters within a small range does not abruptly influence the performance, as demonstrated in the parameter analyses. Hence, the GridTD model is quite robust to the variations of these parameters. 

The ablation studies are conducted to verify the contributions of TV, SSTV, tensor decomposition (TD), temporal affine adapter, and the MLP depth. Without TD, the proposed GridTD degenerates to InstantNGP. As shown in Table~\ref{table:structure}, removing either the TV,  SSTV, TD, or affine adapter leads to performance degradation, demonstrating the contributions of all modules introduced in our work. Table~\ref{table:structure} also shows that using a lightweight MLP with depth 2 is sufficient to offer a good trade-off between performance and model size, further demonstrating the strong representation capacity of the GridTD encoding model with only a lightweight MLP.
\begin{table}[t]
\centering
\scriptsize
\setlength{\tabcolsep}{4pt}
\caption{Sensitivity analysis for the feature dimension \( F \) in the GridTD model \eqref{gtdmlp}.}
\label{table:feature}
\begin{tabular}{c|ccccccccc}
\hline
Dataset & \multicolumn{9}{c}{Video SCI on Runner} \\
\hline 
\( F \) &20&30&40&50&60&70&80&90&100  \\
\hline
PSNR     & 34.98 & 34.96 & 34.71 & 34.86 &35.22& 35.10 & 34.97 & 34.93 & 34.99 \\
\hline
SSIM     & 0.954 & 0.955 & 0.954 & 0.954 & 0.957&0.957 & 0.956 & 0.956 & 0.956 \\
\hline
\end{tabular}
\end{table}
\begin{table}[t]
\centering
\scriptsize
\setlength{\tabcolsep}{4pt}
\caption{Sensitivity analysis for the  the resolution number \( L \) in the GridTD model \eqref{gtdmlp}.}
\label{table:level}
\begin{tabular}{c|ccccccccc}
\hline
Dataset & \multicolumn{9}{c}{Video SCI on Runner} \\
\hline 
\( L \) &\hspace{5pt}20&30&40&50&60&70&80&90&100  \\
\hline
PSNR     & \hspace{5pt}34.94 & 35.08 & 35.02 & 35.03 &35.22& 35.20 & 34.93 & 35.12 & 34.92 \\
\hline
SSIM     & \hspace{5pt}0.955 & 0.956 & 0.954 & 0.956 &0.957& 0.957 & 0.955 & 0.957 & 0.955 \\
\hline
\end{tabular}
\vspace{-0.3cm}
\end{table}
\begin{table}[t]
\centering
\scriptsize
\setlength{\tabcolsep}{4pt}
\caption{Sensitivity analysis for the TV balance parameter $\lambda_1 = p_1\lambda$ in model \eqref{eq:v_subproblem_tensor}, where $\lambda$ is fixed at $4 \times 10^{-6}$.}
\label{table:tv}
\begin{tabular}{c|ccccccccc}
\hline
Dataset & \multicolumn{9}{c}{Video SCI on Runner} \\
\hline 
$p_1$ &3&3.5&4&4.5&5&5.5&6&6.5&7  \\
\hline
PSNR     & 35.09 & 34.93 & 35.05 & 35.13 & 35.22&34.91&34.97&34.92&35.18 \\
\hline
SSIM    & 0.955 & 0.955 & 0.956 & 0.955 &0.957& 0.956&0.956&0.955&0.957 \\
\hline
\end{tabular}
\end{table}
\begin{table}[h]
\centering
\scriptsize
\setlength{\tabcolsep}{4pt}
\caption{Sensitivity analysis for the SSTV balance parameter $\lambda_2=p_2\lambda$ in model \eqref{eq:v_subproblem_tensor}, where $\lambda$ is fixed at $4 \times 10^{-6}$.}
\label{table:sstv}
\begin{tabular}{c|cccccccccc}
\hline
Dataset & \multicolumn{9}{c}{Video SCI on Runner} \\
\hline 
$p_2$ &1.5&2&2.5&3&3.5&4&4.5&5&5.5  \\
\hline
PSNR     & 35.12 &35.11& 35.15 & 35.06 & 35.22 & 35.00 & 34.99 & 34.99 & 35.00 \\
\hline
SSIM     &0.955& 0.955 & 0.955 & 0.956 & 0.957 & 0.956 & 0.956& 0.956 & 0.957 \\
\hline
\end{tabular}
\end{table}
\begin{table}[!h]
\centering
\scriptsize
\setlength{\tabcolsep}{5pt}
\renewcommand{\arraystretch}{0.8}
\caption{Ablation study for the TV, SSTV, tensor decomposition (TD), affine adapter, and MLP depth (number of layers) in the GridTD model \eqref{gtdmlp}.}
\label{table:structure}
\begin{tabular}{ccccc|cc}
\hline
 \multicolumn{5}{c|}{Component configuration} & \multicolumn{1}{c}{PSNR} & \multicolumn{1}{c}{SSIM} \\
\hline
 TV & SSTV & TD & Affine & MLP Layers & Value & Value  \\
\hline
\ding{51} & \ding{51}& \ding{51} & \ding{51} & 2 & 35.22  & 0.957\\
\ding{55} & \ding{51}& \ding{51} & \ding{51} & 2 & 34.79 & 0.950  \\
\ding{51} & \ding{55}& \ding{51} & \ding{51} & 2 & 34.76 & 0.950  \\
\ding{51} & \ding{51}& \ding{55} & \ding{51} & 2 & 34.50 & 0.955  \\
\ding{51} & \ding{51}& \ding{51} & \ding{55} & 2 & 35.05  & 0.957 \\
\ding{51} & \ding{51}&\ding{51} & \ding{51} & 3 & 35.15  & 0.956 \\
\ding{51} & \ding{51}&\ding{51} & \ding{51} & 4 & 35.20 & 0.957 \\
\hline
\end{tabular}
\vspace{-0.3cm}
\end{table}
\section{Conclusions}
We have proposed the tensor decomposed multi-resolution grid encoding method for compressive imaging reconstruction, which holds superior representation abilities and efficiency balance over existing unsupervised models due to the integration of tensor decomposition and multi-resolution grids. We theoretically demonstrate the advantages of GridTD from the Lipschitz and generalization bound perspectives. Extensive experiments show that the proposed GridTD achieves state-of-the-art reconstruction performance in compressive dynamic MRI reconstruction, video SCI, and spectral SCI, demonstrating its practical value in real-world compressive imaging problems. In future work, we can apply the GridTD model to more compressive imaging tasks such as hyperspectral-multispectral image fusion and multi-dimensional image super-resolution, to broaden its potential.
\bibliographystyle{IEEEtran}
\bibliography{IEEEabrv}
\end{document}